\newcommand{\HilH}{\mathcal{H}}
\newcommand{\realR}{\mathbb{R}}
\newcommand{\compC}{\mathbb{C}}
\newcommand{\intZ}{\mathbb{Z}}
\newcommand{\Prob}{\mathbb{P}}
\newcommand{\ie}{i.e.}
\newcommand{\eg}{e.g.}
\newcommand{\etal}{et al.}
\newcommand{\pdf}{p.d.f.}
\newcommand{\cdf}{c.d.f.}
\newcommand{\lHopital}{l'H\^{o}pital}
\newcommand{\LHopital}{L'H\^{o}pital}
\newcommand{\M}{\mathcal{M}}
\newcommand{\B}{\mathcal{B}}
\newcommand{\R}{\mathbb{R}}
\newcommand{\A}{\mathbf{A}}
\newcommand{\m}{\mathbf{m}}
\newcommand{\mm}{m}
\newcommand{\aaa}{\mathbf{a}}
\newcommand{\acc}{\mathbf{a}_c}
\newcommand{\ga}{\mathbf{\Gamma}}
\newcommand{\redge}{\mathbf{e}}
\newcommand{\ledge}{\tilde{\mathbf{e}}}
\newcommand{\gfn}{\mathbf{g}}
\newcommand{\Gfn}{\mathbf{G}}
\newcommand{\Hfn}{\mathbf{H}}
\newcommand{\bfGamma}{\mathbf{\Gamma}}
\newcommand{\hv}{\mathbf{v}}
\newcommand{\hw}{\mathbf{w}}
\newcommand{\error}{\omega_1}
\newcommand{\errort}{\omega_2}
\newcommand{\erf}{G}
\newcommand{\Andreief}{Andr\'{e}ief}
\newcommand{\FGUE}{F_0}
\newcommand{\FGOE}{F_1}
\newcommand{\Int}{I_n^T}
\newcommand{\Intx}{J_n^T}
\newcommand{\bfP}{\mathbf{P}}
\newcommand{\bfQ}{\mathbf{Q}}
\newcommand{\frakP}{\mathfrak{P}}
\newcommand{\frakQ}{\mathfrak{Q}}
\newcommand{\bfR}{\mathbf{p}}
\newcommand{\calP}{\mathcal{P}}
\newcommand{\calQ}{\mathcal{Q}}
\newcommand{\calR}{\mathcal{R}}
\newcommand{\calN}{\mathcal{N}}
\newcommand{\bfU}{\mathbf{U}}
\newcommand{\bfV}{\mathbf{V}}
\newcommand{\bfEo}{\mathbf{E}^{(1)}}
\newcommand{\bfEt}{\mathbf{E}^{(2)}}
\newcommand{\bfZ}{\mathbf{Z}}
\newcommand{\Ii}{\mathbf{I}}
\newcommand{\DongNew}[1]{{\color{blue}{#1}}}
\DeclareMathOperator{\Tr}{Tr}
\DeclareMathOperator{\diag}{diag}
\DeclareMathOperator{\TW}{0}
\DeclareMathOperator{\Ai}{Ai}
\DeclareMathOperator{\Airy}{Airy}
\newtheorem{lemma}{Lemma}[section]
\newtheorem{thm}{Theorem}[section]
\newtheorem{prop}{Proposition}[section]
\theoremstyle{definition}
\theoremstyle{remark}
\newtheorem{rmk}{Remark}[section]
\newcommand{\ddd}{d}
\newcommand{\Ka}{\tilde{K}}
\newcommand{\Ko}{K}
\newcommand{\cE}{\mathcal{E}}
\newcommand{\fE}{\mathfrak{E}}
\newcommand{\vf}{\mathbf{v}}
\newcommand{\Mk}{\mathbf{M}}
\newcommand{\Mkh}{\widehat{\mathbf{M}}}
\newcommand{\vfh}{\hat{\mathbf{v}}}
\newcommand{\mn}{\mathbf{A}}
\newcommand{\NN}{\mathbf{N}}
\title{On the largest eigenvalue of a Hermitian random matrix model with spiked external source II. Higher rank cases}
\author{Jinho Baik\thanks{Department of Mathematics, University of Michigan, Ann Arbor, MI, 48109, USA \newline
email: \texttt{baik@umich.edu}} \ and 
Dong Wang\thanks{Department of Mathematics, University of Michigan, Ann Arbor, MI, 48109, USA \newline
email: \texttt{dowang@umich.edu}}}
\date{\today}
\begin{document}

\maketitle


\begin{abstract} 
This is the second part of a study of the limiting distributions of the top eigenvalues of a Hermitian matrix model with spiked external source under a general external potential.
The case when the external source is of rank one was analyzed in an earlier paper. 
In the present paper we extend the analysis to the higher rank case. 
If all the eigenvalues of the external source are less than a critical value, 
the largest eigenvalue converges to the right end-point of the support of the equilibrium measure
as in the case when there is no external source. 
On the other hand, if an external source eigenvalue is larger than the critical value, 
then an eigenvalue is pulled off from the support of the equilibrium measure. 
This transition is continuous, and is universal, including the fluctuation laws, for convex potentials. 
For non-convex potentials, two types of discontinuous transitions are possible to occur generically. 
We evaluate the limiting distributions in each case for general potentials including those 
whose equilibrium measure have multiple intervals for their support. 
\end{abstract}

\maketitle


\section{Introduction} \label{section:introduction}

Let $V: \realR \to \realR$ be an analytic function such that $\frac{V(x)}{\sqrt{x^2+1}} \to +\infty$ as $|x| \to \pm \infty$.
Let $\A_n$ be an $n\times n$ Hermitian matrix and 
consider the probability density function (\pdf ) on the set $\mathcal{H}_n$ of $n\times n$ 
Hermitian 
matrices defined by
\begin{equation}\label{eq:pdf_of_external_source_model}
	p_n(M) = \frac{1}{Z_n} e^{-n \Tr(V(M)-\A_n M)}, \quad M \in \HilH_n.
\end{equation}
Here $Z_n$ is the normalization constant so that $\int_{\HilH_n} p_n(M) dM=1$ where $dM$ denotes the Lebesgue measure. 
The sequence of probability spaces $(\HilH_n, p_n)$, $n=1,2,\cdots$, is called 
a Hermitian matrix model with \underline{external source matrices} $A_n$, $n=1,2,\cdots$, 
and \underline{potential} $V$. 
Note that due to the unitary invariance of $dM$ and the presence of the trace in the exponent of~\eqref{eq:pdf_of_external_source_model}, the density $p_n(M)$ depends only on the eigenvalues of $\A_n$. Hence we may assume without loss of generality that $\A_n$ is a diagonal matrix.

The main focus of this paper is the special case when 
\begin{equation} \label{eq:defination_of_ex_source_A}
	\A_n= \diag (\aaa_1, \cdots, \aaa_\m, \underbrace{0,\cdots,0}_{n-\m})
\end{equation}
for \underline{all} $n \geq \m$ with fixed nonzero $\aaa_1, \cdots, \aaa_\m$. (We also consider the case when $\aaa_j$ depend on $n$. Indeed, in transitional cases we assume $\aaa_j$ varies in $n$ but converges to a fixed value as $n\to\infty$.)
In this case, $(\HilH_n, p_n)$  is called 
a {\em  Hermitian matrix model with spiked external source}  
({\em spiked  model} for short) of {\em rank $\m$}. 
The main interest of this paper is to study 
how the limiting location and the fluctuation of the top eigenvalue(s) of $M$ as $n\to\infty$ 
depend on the ``external source eigenvalues'' $\aaa_j$.




The case when $\m=1$ (rank one case) was studied in \cite{Baik-Wang10a} 
to which we refer the readers for the background and motivations of the spiked  models. 
See also \cite{Baik-Ben_Arous-Peche05}, \cite{Peche05}, \cite{Baik-Silverstein06}, \cite{Feral-Peche07}, \cite{McLaughlin07}, \cite{Wang08}, \cite{Capitaine-Donati_Martin-Feral09}, 
\cite{Adler-Delphine-vanMoerbeke09}, \cite{Nadakuditi-Silverstein10},  \cite{Benaych_Georges-Nadakuditi11}, \cite{Benaych_Georges-Guionnet-Maida11}, 
\cite{Mo11a}, \cite{Wang11}, \cite{Bloemendal-Virag11}, \cite{Tao11}, \cite{Bertola-Buckingham-Lee-Pierce11}, and \cite{Bertola-Buckingham-Lee-Pierce11a}.  


The spiked model of an arbitrary fixed rank was studied in great detail for Gaussian potential in \cite{Peche05} and also for the so-called complex Wishart spiked models in \cite{Baik-Ben_Arous-Peche05}. 
Let us review the Gaussian case here. 
Let $\redge$ denote the right end-point of the support of the equilibrium measure associated to the potential $V$. 
For the Gaussian potential $V(x)= \frac12 x^2$,  $\redge=2$. 
Let $\xi_{\max}(n)$ denote the largest eigenvalue of the random matrix  of size $n$. 
Then there is a constant $\beta>0$ such that 
\begin{equation}
	\Prob_n((\xi_{\max}(n)-\redge) \beta n^{2/3} \le T; \aaa_1, \cdots, \aaa_\m) \to 
	\begin{cases} F_0(T), \quad 
	&\text{if $\max\{\aaa_1, \cdots, \aaa_\m\} < \frac12 V'(\redge),$} \\
	F_\m(T), \quad 
	&\text{if $\aaa_1= \cdots= \aaa_\m = \frac12 V'(\redge).$}
	\end{cases}
\end{equation}
On the other hand,  if $\aaa_1= \cdots= \aaa_\m  > \frac12 V'(\redge)$, there is a constant $x_0(\aaa_1)>\redge$ and $\gamma(\aaa_1)>0$ such that 
\begin{equation}
	\Prob_n((\xi_{\max}(n)-x_0(\aaa_1)) \gamma(\aaa_1) n^{1/2} \le T; \aaa_1, \cdots, \aaa_\m) \to 
	G_\m(T) 
\end{equation}
for each $T\in \R$, as $n\to\infty$.
Here $F_\m(T)$, $\m=0,1,2,\cdots$, and $G_\m(T)$, $\m=1,2\cdots$, 
are certain cumulative distribution functions. 
The constant $x_0(a)$ is a continuous function in $a\in (\frac12V'(\redge), \infty)$ 
and satisfies $x(a)\downarrow \redge$ as $a\downarrow \frac12 V'(\redge)$. 
Hence for the Gaussian potential, as $\aaa_j$'s increase, the limiting location $\xi_{\max}:= \lim_{n\to\infty} \xi_{\max}(n)$ stays at the right  end-point $\redge$ of the support of the equilibrium measure until $\aaa_j$'s reach the critical value $\frac12V'(\redge)$.
After the critical value, $\xi_{\max}$ break off from $\redge$  and moves to the right \emph{continuously}. 

This continuity of $\xi_{\max}$ does not necessarily hold for general potentials. 
Indeed, for the rank $1$ case  it was shown in \cite{Baik-Wang10a}  
that $\xi_{\max}$ may be a \emph{discontinuous} function of the (unique) external eigenvalue 
for certain potentials. (It was shown that $\xi_{\max}$ is continuous when $V(x)$ is convex in $x\ge \redge$. 
A criterion when the discontinuity occurs is given in \cite{Baik-Wang10a}.)
If $V$ is such a a potential, then 
$\xi_{\max}$ for the potentials $sV$ is also discontinuous for all $s$ close enough to $1$. 
An example of such $V$ can be constructed by considering a two-well potential 
with a deep well of the left and a shallow well on the right. 
Nevertheless there still is universality: it was shown in the rank $1$ case that the limiting distribution of $\xi_{\max}(n)$ at the continuous points of $\xi_{\max}$ is (generically) same as that of 
the Gaussian potential. (It though varies depending on whether 
the external eigenvalue is sub-critical, critical, and super-critical.)
Even more, at a discontinuous point, the limiting distribution is  something new but it is still (generically) independent of $V$. 
In this paper we show that similar universality also holds for the higher rank case.

\medskip

While \cite{Baik-Wang10a} and the current paper were being written,
a work on a similar subject was announced in the recent preprints by Bertola \etal\ \cite{Bertola-Buckingham-Lee-Pierce11} and \cite{Bertola-Buckingham-Lee-Pierce11a}. The major difference of their work and ours is that we take $\aaa_j$'s to be all distinct and keep $\m$ fixed, while 
\cite{Bertola-Buckingham-Lee-Pierce11, Bertola-Buckingham-Lee-Pierce11a}  take $\aaa_j$ to be identical and let  $\m\to\infty$ with $\m=o(n)$. 
Hence these two works complement each other. 
The methods are different and it seems that each has an unique advantage in handling the situations mentioned above; see Remark~\ref{rmk:distinctness} below.
See also Section 1.1 of \cite{Baik-Wang10a} for a further comparison. 

\subsection{Algebraic relation of the higher rank case and the rank $1$ case} \label{subsec:algebraic_theorem}

The starting point of analysis in this paper is a simple algebraic relation between 
the higher rank case and the rank $1$ case. The gap probability, for example, can be written as a finite determinant built out of the gap probabilities of rank $1$ cases.

In order to state this algebraic relation, we slightly generalize the setting of the spiked  model. 
Note that in the definition of the density~\eqref{eq:pdf_of_external_source_model}, the factor $n$ in front of $\Tr(V(M)-\A_n M)$ equals the dimension of the matrices $M$ and $\A_n$. We may take the factor different from the dimension  and consider the following  \pdf\ : 
\begin{equation} \label{eq:generalized_pdf}
	p_{d,n}(M) := \frac{1}{Z_{d,n}} e^{-n\Tr(V(M)-\mathbf{A}_{d}M)}, \qquad M\in\mathcal{H}_{d},
\end{equation}
where
\begin{equation} \label{eq:A_d_in_generalized_pdf}
  \mathbf{A}_d= \diag(\aaa_1, \cdots, \aaa_\m, \underbrace{0, \cdots, 0}_{d-\m})
\end{equation}
for some nonzero numbers $\aaa_1, \cdots, \aaa_\m$.
Define, for a complex number $s$ and a measurable subset $E$ of $\R$,  
\begin{equation} \label{eq:defn_of_generating_func_generalized}
    \cE_{\ddd,n} (\aaa_1, \cdots, \aaa_\m ; E; s) 
    := \mathbb{E}\bigg[ \prod^{\ddd}_{j=1}  (1 - s\chi_E(\xi_j))\bigg]  
    = \int_{\HilH_{\ddd}} \prod^{\ddd}_{j=1} (1 - s\chi_E(\xi_j)) p_{d,n}(M)dM
\end{equation}
where $\xi_j$, $j=1, \cdots, d$ denote the eigenvalues of $M$. 
It is well known that (see \eg\ \cite{Tracy-Widom98})
\begin{equation}\label{eq:jthlarge}
  	\mathbb{P}^{(j)}_{d,n} (\aaa_1, \cdots, \aaa_\m ; E) = \sum^{j-1}_{i=0} \frac{(-1)^i}{i!} \frac{d^i}{d s^i} \bigg|_{s=1} \cE_{d,n} (\aaa_1, \cdots, \aaa_\m ; E ;s).
\end{equation}
is the probability that there are no more than $j-1$ eigenvalues in $E$. 
When $E=(x, \infty)$, this is precisely the cumulative  distribution function (\cdf ) of the $j$th largest eigenvalue. 
When $\mathbf{A}_{d}=0$ we denote~\eqref{eq:defn_of_generating_func_generalized} by $\mathcal{E}_{d,n}( E; s)$, and also define 
\begin{equation} \label{eq:defn_of_bar_E_a_dots_E_s_beginning}
  	\bar{\mathcal{E}}_{d,n} (\aaa_1, \cdots, \aaa_\m; E; s) := \frac{\mathcal{E}_{d,n}(\aaa_1, \cdots, \aaa_\m; E; s)}{\mathcal{E}_{d,n}( E; s)}.
\end{equation}

Let $p_j(x;n)$, $j=0,1,\cdots$,  be the orthonormal polynomials  with respect to the (varying) measure $e^{-nV(x)}dx$
and set $\psi_j(x;n):= p_j(x;n) e^{-\frac{n}2 V(x)}$. 
Define 
\begin{equation}\label{eq:gadef}
	\ga_j(a;n):= \int_{\R}  e^{n(a x-V(x)/2)}\psi_j(x;n)  dx.
\end{equation}

The following identity relates the higher rank case to the rank one cases. 


\begin{thm} \label{thm:alg}
Let $E$ be a subset of $\realR$ and let $s$ be a complex number such that $\cE_{n-j+1,n}(E;s) \neq 0$ for all $j = 1, \cdots, \m$. 
We have for distinct $\aaa_1, \cdots, \aaa_{\m}$, 
\begin{equation}\label{eq:alg}
	\bar{\mathcal{E}}_{n,n} (\aaa_1, \cdots, \aaa_\m; E; s)
	= \frac{\det \left[ \ga_{n-j}(\aaa_k;n) \bar{\mathcal{E}}_{n-j+1,n}(\aaa_k; E; s) \right]^\m_{j,k=1}}{\det \left[ \ga_{n-j}(\aaa_k;n) \right]^\m_{j,k=1}}.
\end{equation}
\end{thm}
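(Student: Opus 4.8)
The plan is to exploit the fact that turning on the external source $\mathbf{A}_d$ amounts to inserting the extra factor $e^{nax}$ into the relevant weighted polynomials, and that the joint eigenvalue density with source is still a determinantal expression — specifically, by the classical \Andreief\ (Cauchy--Binet) identity the quantity $\cE_{n,n}(\aaa_1,\dots,\aaa_\m;E;s)$ can be written as a single $n\times n$ determinant, or better, after the HCIZ / Itzykson--Zuber reduction that is standard for rank-$\m$ external source (see \cite{Peche05}), as an $(n+\m)\times(n+\m)$ bordered determinant or an $\m\times\m$ determinant whose entries are built from the $\psi_j(x;n)$ and the functions $e^{nax}$. The key identity to establish first is that
\begin{equation}\label{eq:proposal_key}
  \cE_{n,n}(\aaa_1,\dots,\aaa_\m;E;s)\ \propto\ \det\big[\, \Phi_k^{(j)}\,\big]_{j,k=1}^{\m},
\end{equation}
where $\Phi_k^{(j)}$ is the expectation of the multiplicative statistic $\prod(1-s\chi_E)$ against a rank-one-sourced ensemble of size $n-j+1$ with source parameter $\aaa_k$, up to the normalizing constants $\ga_{n-j}(\aaa_k;n)$; dividing by $\cE_{n,n}(E;s)$ then produces the $\bar{\cE}$'s on the left.

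First I would recall the contour-integral / orthogonal-polynomial formula for the rank-one case from \cite{Baik-Wang10a}: there $\bar{\cE}_{n-j+1,n}(\aaa_k;E;s)$ is expressed through a finite-rank perturbation of the Christoffel--Darboux kernel, with the perturbing vectors determined by $e^{nax/... }\psi_{n-j}$ and a dual companion. Second, for the rank-$\m$ ensemble I would apply the Itzykson--Zuber integral to integrate out the unitary group, reducing $p_{n,n}(M)$ to a density on eigenvalues of the form $\frac{\det[e^{n\aaa_k x_i}]\,\Delta(x)}{\Delta(\aaa)}\prod e^{-nV(x_i)}$ (with the appropriate $\aaa_k\to 0$ limits handled by l'Hôpital, i.e. replacing the degenerate rows by derivatives — but since we assume $\aaa_1,\dots,\aaa_\m$ distinct and the remaining $n-\m$ zeros give a confluent block, this is exactly the Vandermonde structure already used in \cite{Peche05}). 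Third, I would insert the statistic $\prod_{j=1}^n(1-s\chi_E(x_j))$ and run the generalized \Andreief\ identity: one determinant comes from $\det[e^{n\aaa_k x_i}\ \text{rows}\ |\ x_i^{\ell}\ \text{rows}]$ and the other from $\Delta(x)$, and integrating term by term against $\prod e^{-nV(x_i)}(1-s\chi_E(x_i))\,dx_i$ collapses the $x_i^\ell$ block into Christoffel--Darboux data while the $\aaa_k$ block collapses into the $\ga$-type integrals $\int e^{n(\aaa_k x - V(x)/2)}\psi_\ell(x;n)\,dx$ weighted by $(1-s\chi_E)$. Expanding the resulting $\m\times\m$ determinant (Laplace expansion along the source-rows) and matching each minor with the rank-one formula gives \eqref{eq:proposal_key}; a separate, simpler instance of the same computation with $\m$ replaced by the trivial block yields $\cE_{n,n}(E;s)=\const\cdot\det[\ga_{n-j}(\aaa_k;n)]$-type normalization in the denominator — more precisely the ratio is arranged so the pure-polynomial normalizations cancel and only $\det[\ga_{n-j}(\aaa_k;n)]$ survives downstairs, matching the stated right-hand side.

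The main obstacle I expect is \emph{bookkeeping of the normalization constants and the confluent (zero) part of the source}: making sure that the $Z_{d,n}$'s, the orthonormality constants of the $p_j(x;n)$, and the Vandermonde $\Delta(\aaa)$ factors all telescope so that precisely $\ga_{n-j}(\aaa_k;n)$ (and not some extra $j$- or $k$-dependent prefactor) appears in both determinants, and that the rank-one objects $\bar{\cE}_{n-j+1,n}(\aaa_k;E;s)$ emerge with exactly the index $n-j+1$ claimed. The hypothesis $\cE_{n-j+1,n}(E;s)\neq0$ for all $j$ is exactly what licenses dividing through by these factors when passing from the minors to the $\bar{\cE}$'s. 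Distinctness of the $\aaa_k$ is used both to keep $\det[\ga_{n-j}(\aaa_k;n)]\neq 0$ generically (so the right-hand side is well defined) and to avoid having to differentiate the source rows; the degenerate coincidences can then be recovered afterward by continuity if desired, but are not needed for the statement as given.
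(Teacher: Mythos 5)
Your setup---the Itzykson--Zuber reduction to the eigenvalue density, followed by a determinantal expression built from biorthogonal data---is the right skeleton, and the paper's proof in Section~\ref{sec:higher} follows a similar line: instead of running Andr\'eief directly it invokes the rank-$\m$ kernel-perturbation formula of \cite{Baik09}, which writes the correlation kernel $\Ka_\ddd$ as the Christoffel--Darboux kernel $\Ko_\ddd$ plus a rank-$\m$ term, and then extracts an $\m\times\m$ determinant from the ratio of Fredholm determinants (Lemma~\ref{lem:Dmiddle}). Either route produces an $\m\times\m$ determinant whose entry in row $j$ involves the \emph{fixed} kernel $\Ko_\ddd$ (\ie\ $K_{n,n}$).

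The gap is in your final ``Laplace-expand and match with the rank-one formula'' step. The rank-one quantity $\bar{\cE}_{n-j+1,n}(\aaa_k;E;s)$ is built from the kernel $K_{n-j+1,n}$, whose size depends on the row index $j$; what the determinantal reduction delivers has $K_{n,n}$ in every row. The passage from one to the other is the real content of the theorem, and it is not achieved by Laplace expansion or minor-matching: the paper proves it as a separate Lemma~\ref{lem:Dp}, showing by a recursion on $j$---based on the operator identity \eqref{eq:generalmid0}--\eqref{eq:generalmid} with $B=A+f\otimes f$, $f=\psi_\ell$, applied repeatedly since $\Ko_{k}=\Ko_{k-1}+\psi_{k-1}\otimes\psi_{k-1}$---that replacing $\Ko_\ddd$ by $\Ko_{\ddd-j+1}$ in row $j$ changes the entries only by a linear combination of the preceding rows, hence leaves the determinant intact. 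Your proposal never identifies this telescoping step, so it does not account for the shifted index $n-j+1$. A secondary confusion: you suggest a ``trivial-block'' instance yields $\cE_{n,n}(E;s)\sim\det[\ga_{n-j}(\aaa_k;n)]$ as the downstairs normalization, but $\cE_{n,n}(E;s)=\det(1-sPK_{n,n}P)$ does not depend on the $\aaa_k$ at all; $\det[\ga_{n-j}(\aaa_k;n)]$ enters instead as $\det[\mathbf{B}]$, the Gram matrix of the biorthogonal family, whose inverse appears in the rank-$\m$ perturbation of the kernel.
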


\begin{rmk} \label{rmk_algebraic_lHopital}
When some $\aaa_j$ are identical, the above theorem still holds by using 
\LHopital's rule. 
This follows from the smooth dependence of the quantities above in $\aaa_j$'s
which can be proved directly. 
The explicit smooth dependence of multiple orthogonal polynomials on $\aaa_j$'s, which is essentially equivalent to the smooth dependence of quantities in \eqref{eq:alg}, is shown in, for example, \cite{McLaughlin07, Bertola-Buckingham-Lee-Pierce11, Bertola-Buckingham-Lee-Pierce11a} in similar situations. 
However, in the rest of the paper, we  consider only the case when $\aaa_j$ are all distinct. 
\end{rmk}

From the above theorem, the study of the limiting distribution of the eigenvalue of higher rank case may be reduced to 
a study of rank $1$ case, which was done in \cite{Baik-Wang10a}.
However, for the interesting cases when $\aaa_j$'s converge to the same number in the limit, 
the numerator and denominator 
both tend to zero and thus we need to perform suitable row and column operations and extract the common decaying factors to make the ratio finite. 
This requires us to extend the asymptotic result of Baik and Wang \cite{Baik-Wang10a} to include the sub-leading terms of the asymptotics of $\bar{\mathcal{E}}_{n-j+1,n}(\aaa_k; E; s)$. 
Nevertheless we requires only the existence of the asymptotic expansion but not the exact formulas, 
and hence most of the extension of the result of Baik and Wang \cite{Baik-Wang10a} is straightforward. 
The technical part is the row and column operations and to show that the ratio becomes finite after factoring out the common terms.


\subsection{Assumptions on potential $V$ and some preliminary notations}
\label{subsec:assumptions_of_V_and_preliminary_notations}

In this section, 
we first state the precise conditions on $V$. 
Then we fix some notations and discuss a few important results of the rank $1$ case. 

Assume that $V$ satisfies the following three conditions:
\begin{eqnarray}
 	&&V(x) \textnormal{ is real analytic in } \R, \label{eq:condition_of_V_1} \\
 	&&\frac{V(x)}{\sqrt{x^2+1}} \to +\infty \textnormal{ as } |x|\to\infty, \label{eq:condition_of_V_2} \\
	&&V \textnormal{ is `regular'.} \label{eq:condition_of_V_3}
\end{eqnarray}
At the end of this section, we will discuss additional technical assumptions on $V$. 

Here the regularity of $V$ is a condition defined in \cite{Deift-Kriecherbauer-McLaughlin-Venakides-Zhou99} which we do not state explicitly here. 
We  note only that this condition holds for ``generic'' $V$ \cite{Kuijlaars-McLaughlin00} and for such $V$, the
density $\Psi(x)$  of the associate equilibrium measure (the limiting empirical measure when there is no external source) vanishes like a square-root at the edges of its support. 

In the usual unitary ensembles (with no external source), 
the condition~\eqref{eq:condition_of_V_2} is typically replaced by $\frac{V(x)}{\log(x^2+1)} \to +\infty$ as $|x|\to \infty$ \cite{Deift-Kriecherbauer-McLaughlin-Venakides-Zhou99}.  
Here~\eqref{eq:condition_of_V_2}  is needed to ensure that the probability density \eqref{eq:pdf_of_external_source_model} is well defined for all (spiked) $\A_n$. 


With the above assumptions, the support $\Psi(x)$ 
consists of finitely many intervals: 
\begin{equation}\label{eq:defn_of_J}
  	J := \bigcup^N_{j=0} (b_j, a_{j+1}), \quad \textnormal{where } b_0 < a_1 < \cdots < a_{N+1},
\end{equation}
for some $N\ge 0$. Note that we allow in this paper that $N$ can be larger than $0$.
We denote the \emph{right end-point of $J$} by 
\begin{equation}
  \redge := a_{N+1}
\end{equation}
as in \cite{Baik-Wang10a}.
We also set 
\begin{equation} \label{eq:defn_of_beta}
  	\beta := \big( \lim_{x\uparrow \redge} \frac{\pi\Psi(x)}{\sqrt{\redge-x}} \big)^{2/3}.
\end{equation}
By the condition~\eqref{eq:condition_of_V_3}, $\beta$ is a nonzero positive number. 
It is also known that under the above assumptions  
(see \cite{Deift-Kriecherbauer-McLaughlin-Venakides-Zhou99} and \cite{Deift-Gioev07a}) 
for the usual unitary ensemble with no external source,
\begin{equation}
  \lim_{n \to \infty} \Prob_n \left( \textnormal{the largest eigenvalue} < \redge + \frac{T}{\beta n^{2/3}} \right) = F_{\TW}(T),
\end{equation}
where $F_{\TW}$ is the Tracy--Widom distribution (see \eqref{eq:defn_of_F_TW} for definition.) 

\bigskip

We now recall a few notations and results from the analysis of rank one case \cite{Baik-Wang10a}.
Let 
\begin{equation}
	\gfn(z) := \int_J \log(z-s) \Psi(s)ds , \qquad z \in \compC \setminus (-\infty, \redge),
\end{equation}
be the so-called $\gfn$-function associated to $V$. 
Let $\ell$ be the Robin constant which is defined by the condition 
\begin{equation}\label{eq:ellc}
	\gfn_+(x) +\gfn_-(x) - V(x) = \ell, \qquad x \in \bar{J}. 
\end{equation}
We also define two functions
\begin{equation} \label{eq:definition_of_GH}
	\Gfn(z;a):= \gfn(z)- V(z)+az, \quad
	\Hfn(z;a):= -\gfn(z) + az +\ell
\end{equation}
for $a>0$. These  functions play an important role in the analysis of rank one case. 
Observe that $\Gfn(\redge;a)= \Hfn(\redge; a)$ from~\eqref{eq:ellc}. 
The function $\Hfn(x; a)$ is convex in $x\in [\redge, \infty)$. 
Let $c(a)\in [\redge, \infty)$ be the point at which $\Hfn(x;a)$ takes its minimum. It is easy to check that $c(a)= \redge$ for $a\ge \frac12 V'(\redge)$ and 
$c(a)>\redge$ for $a< \frac12 V'(\redge)$. 

Now let $\acc$ be the \underline{critical value} associated to $V$ defined by 
\begin{equation}\label{eq:deac}
	\acc:= \inf\{ a\in (0, \infty) | \text{ there exists $\bar{x}\in (c(a), \infty), \infty)$ such that $\Gfn(\bar{x}; a)>\Hfn(c(a);a)$} \}.
\end{equation}
In general, $\acc\in (0, \frac{1}{2}V'(\redge)]$. 
If $V(x)$ is convex for $x\ge \redge$, then $\acc=\frac12V'(\redge)$. 
The limiting location of the largest eigenvalue in the rank one case depends 
on whether $a<\acc$ or $a>\acc$. 
Here we denote it by $\xi(a)$ to indicate the dependence of on $a$. 

\medskip

\textbf{Super-critical case:}
Set 
\begin{equation}\label{eq:mathcalJV}
	\mathcal{J}_V:= \{ a\in [\acc, \infty) | \text{  $\max_{x\in [c(a), \infty)} \Gfn(x;a)$ attains its maximum at more than one point} \}.
\end{equation}
This is a discrete set. If $V(x)$ is convex in $x\ge \redge$, then $\mathcal{J}_V=\emptyset$. 
For $a>\acc$ such that $a\notin \mathcal{J}_V$, let \underline{$x_0(a)$} denote the point in $[c(a), \infty)$ at which $\Gfn(x;a)$ takes its maximum.  
For such $a$, it was shown that $x_0(a)$ is a continuous, strictly increasing function. 
Moreover, $\xi(a)$, the limiting location of the largest eigenvalue in the rank one case, equals $x_0(a)$ in this case. 
On the other hand, if $a>\acc$ and $a\in \mathcal{J}_V$, 
then $\xi(a)$ is a discrete random variable whose values are the maximizers of $\max_{x\in (c(a), \infty)} G(x;a)$ (there are at least two of them). 
We call $a>\acc$ such that $a\in \mathcal{J}_V$ the \underline{secondary critical values}. 

\medskip

\textbf{Sub-critical case:}
On the other hand, if $a<\acc$, then $\xi(a)=\redge$.

\medskip

\textbf{Critical case:} At the critical case when $a=\acc$,  $\xi(a)$ depends on whether $\acc=\frac12 V'(\redge)$ or $\acc< \frac12 V'(\redge)$. 
In both cases, let us assume that $\acc\notin \mathcal{J}_V$. 
Then when $\acc=\frac12 V'(\redge)$,  $\xi(a)=\redge$ as in the sub-critical case. 
But when $\acc< \frac12 V'(\redge)$, $\xi(a)$ is a discrete random variable whose value is either $\redge$ or the unique maximizer $x_0(\acc)$ of $\max_{x\in (c(a), \infty)} G(x;a)$ (which equals $\Hfn(c(\acc);\acc)$ from the definition~\eqref{eq:deac}).

\medskip

For the rest of the paper, we  assume that $V$ is a potential such that 
\begin{eqnarray}
  	&&\textnormal{$\acc\notin\mathcal{J}_V$} \label{eq:condition_of_V_4} 
\end{eqnarray}
and 
\begin{equation}
\begin{split}
	&\textnormal{for $a\in \mathcal{J}_V\setminus \{\acc\}$, $\max_{x\in (c(a), \infty)} G(x;a)$ is attained at two points $x_1(a)$ and $x_2(a)$.}
\end{split} \label{eq:condition_of_V_7}
\end{equation}
Moreover, we assume that 
\begin{eqnarray}
	&&\textnormal{$\Gfn''(x_0(a); a)\neq 0$ for $a\in (\acc, \infty)\setminus \mathcal{J}_V$,} \label{eq:condition_of_V_7-1} \\
	&&\textnormal{$\Gfn''(x_1(a); a)\neq 0$, $\Gfn''(x_2(a); a)\neq 0$  for $a\in \mathcal{J}_V\setminus \{\acc\}$,} \label{eq:condition_of_V_6}  \\
	&&\textnormal{$\Gfn''(x_0(\acc); \acc)\neq 0$ for $\acc<\acc$.} \label{eq:condition_of_V_5}
\end{eqnarray}
Note that under this assumption, all of $\Gfn''(x_i(a); a)$ ($i = 0,1,2$) are  negative.
In \cite{Baik-Wang10a}, these excluded cases are referred as ``exceptional cases''. 
However, to be precise, even though it is reasonable to imagine that nonexceptional cases are generic in the sense of Kuijlaars and McLaughlin \cite{Kuijlaars-McLaughlin00}, this was not established in \cite{Baik-Wang10a}. This issue will be considered  somewhere else. 

The above conditions are trivially satisfied if $V(x)$, $x\ge\redge$, is convex since in this case $\mathcal{J}_V=\emptyset$. 
We note that 
if $V$ is such that $\mathcal{J}_V$ not empty, then it is 
easy to see that $\mathcal{J}_{sV}$ is also nonempty for real number $s$ close enough to $1$. 
Also it is easy to find an example of nonconvex potential $V$ such that $\mathcal{J}_V\neq\emptyset$ by considering a double-well potential. (See Remark 1.6 of \cite{Wang11}.)

\medskip

The analysis of this paper applies to the excluded cases without much change but we do not include them here for the sake of presentation.

We use the following notations for two intervals that appear frequently:
\begin{align}
  \Int:= & \big[ \redge + \frac{T}{\beta n^{2/3}}, \infty \big), \label{eq:interval} \\
  \Intx(x_*):= & \big[ x_* + \frac{T}{\sqrt{-\Gfn''(x_*)n}}, \infty\big), \label{eq:interval2}
\end{align}
for $T \in \realR$ and for $x_* > \redge$, assuming that $\Gfn''(x_*) < 0$ in the later case.

\subsection{Statement of main results}\label{sec:asym}

We now state the main results.
The asymptotic results here are stated in some cases in terms of the distribution function~\eqref{eq:jthlarge} $\Prob_{d,n}^{(j)}$ 
and in other cases in terms of the expectation~\eqref{eq:defn_of_generating_func_generalized} $\mathcal{E}_{d,n}$.
This choice is simply to make the formula compact. 
The analysis applies to both quantities and indeed it is easy to deduce one result from the other
from the relation~\eqref{eq:jthlarge} and the uniformity of the asymptotics in $s$ near $1$.
We can also express all the results in terms of correlation functions but we find that the attention to individual eigenvalue is more illustrating in the current framework.

We use the phrase that a limit holds ``uniformly in $s$ which is close to $1$'' in several places, 
for example in Theorem~\ref{thm:sub}.
This means that 
there exists a complex neighborhood of $1$ independent of $n$ in which the limit holds. 
A slightly more careful analysis would show that $s$ uniformly converges in a larger domain (\eg, in any compact subset of $\compC \setminus (1 + \epsilon, \infty)$, $\epsilon > 0$,) but we do not discuss this issue in this paper. 

We state the results under the ``genericity assumptions''~\eqref{eq:condition_of_V_4}-~\eqref{eq:condition_of_V_5}, in addition to the conditions~\eqref{eq:condition_of_V_1}-~\eqref{eq:condition_of_V_3} discussed in the last subsection. 
We group the asymptotic results into sub-critical, super-critical and critical cases.

When $d=n$, we use the notation $\Prob_{n}^{(j)}$ for $\Prob_{d,n}^{(j)}$  and $\mathcal{E}_{n}$ for $\mathcal{E}_{d,n}$, respectively. 
We also state the results only in this case. The case when $d\neq n$ is similar.

\subsubsection{Sub-critical case}

The first result is on the sub-critical case when all external eigenvalues are smaller than the critical value $\acc$. 
In this case the external source does not change the location and the limiting distribution of the top eigenvalues. 

Let 
\begin{equation} \label{eq:Airy_kernel}
  K_{\Airy}(x,y) := \frac{\Ai(x)\Ai'(y) - \Ai'(x)\Ai(y)}{x-y}
\end{equation}
be the Airy kernel.
For any $T \in \realR$ and $s \in \compC$, set 
\begin{equation} \label{eq:defn_of_TW_distr_s_version}
  F_{\TW}(T;s) := \det(1 - s\chi_{[T,\infty)}K_{\Airy}\chi_{[T,\infty)}),
\end{equation}
where $\chi_E$ denotes the projection  on the set $E$. 
Then
\begin{equation} \label{eq:defn_TW_j-th_distribution}
  	F^{(j)}_{\TW}(T) := \sum_{i=0}^{j-1} \frac{(-1)^i}{i!} \frac{d^i}{d s^i} \bigg|_{s=1} F_{\TW}(T;s)
\end{equation}
is the Tracy--Widom $j$-th eigenvalue distribution. In particular,
\begin{equation} \label{eq:defn_of_F_TW}
  F_{\TW}(T) := F^{(1)}_{\TW}(T) = \det(1-\chi_{[T,\infty)}K_{\Airy}\chi_{[T,\infty)})
\end{equation}
is the Tracy--Widom distribution.

\begin{thm}[sub-critical case]\label{thm:sub}
  Let $\aaa_1, \cdots, \aaa_\m$ be fixed numbers such that $\max\{ \aaa_1, \cdots, \aaa_\m\} < \acc$. Assume that $\aaa_1, \cdots, \aaa_{m}$ are positive and distinct. Then for each $T\in \R$, 
\begin{equation} \label{eq:result_of_subcritical}
	\lim_{n \to \infty} \cE_n(\aaa_1, \cdots, \aaa_\m; \Int; s) = \FGUE(T; s)
\end{equation}
uniformly in $s$ which is close to $1$.
\end{thm}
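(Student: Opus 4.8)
The plan is to use the algebraic identity of Theorem~\ref{thm:alg} to reduce the sub-critical rank~$\m$ statement to the rank~$1$ asymptotics already established in \cite{Baik-Wang10a}. Write
\[
	\cE_n(\aaa_1, \cdots, \aaa_\m; \Int; s) = \cE_n(\Int; s) \cdot \bar{\cE}_{n,n}(\aaa_1, \cdots, \aaa_\m; \Int; s),
\]
and apply \eqref{eq:alg} to the second factor. Since the $\aaa_k$ are distinct, the denominator $\det[\ga_{n-j}(\aaa_k;n)]_{j,k=1}^{\m}$ is nonzero, and the quotient equals a ratio of two $\m\times\m$ determinants. The first factor $\cE_n(\Int; s)\to\FGUE(T;s)$ by the classical Deift--Kriecherbauer--McLaughlin--Venakides--Zhou / Deift--Gioev edge asymptotics recalled in the excerpt (in its $s$-deformed form, uniformly for $s$ near $1$). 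So it remains to show that the ratio of determinants in \eqref{eq:alg} tends to $1$.

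The key point is that in the sub-critical regime each rank~$1$ factor $\bar{\cE}_{n-j+1,n}(\aaa_k; \Int; s)$ tends to $1$: this is exactly the rank~$1$ sub-critical result of \cite{Baik-Wang10a}, which says that when $\aaa_k < \acc$ the rank~$1$ perturbation does not affect the edge gap probability at the scale \eqref{eq:interval}, i.e.\ $\cE_{n-j+1,n}(\aaa_k;\Int;s)/\cE_{n-j+1,n}(\Int;s)\to 1$, uniformly in $s$ near $1$ and uniformly in the finitely many shifts $j=1,\dots,\m$ (replacing $n$ by $n-j+1$ is a harmless $O(1)$ change of dimension, absorbed by the standard robustness of the Riemann--Hilbert steepest-descent analysis). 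Granting this, each entry of the numerator matrix is $\ga_{n-j}(\aaa_k;n)\bigl(1+o(1)\bigr)$, where the $o(1)$ is uniform in $(j,k)$. First I would factor out $\ga_{n-j}(\aaa_k;n)$ from each entry — but this cannot be done row- or column-wise because $\ga_{n-j}(\aaa_k;n)$ genuinely depends on both indices. Instead I would argue directly: writing $N_{jk} = \ga_{n-j}(\aaa_k;n)$ and $M_{jk} = N_{jk}\bigl(1+\epsilon_{jk}\bigr)$ with $\epsilon_{jk}=o(1)$ uniformly, expand $\det[M_{jk}]$ by multilinearity and bound each of the resulting terms relative to $\det[N_{jk}]$.

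The main obstacle is therefore controlling $\det[N_{jk}(1+\epsilon_{jk})] / \det[N_{jk}]$ when the $\epsilon_{jk}$ are small only in a relative (entrywise) sense and the matrix $[N_{jk}]$ may itself be badly conditioned (indeed $\ga_{n-j}(\aaa_k;n)$ grows exponentially in $n$ at a rate depending on $\aaa_k$, so different columns have wildly different magnitudes). The remedy is to rescale: set $N_{jk} = c_k \hat{N}_{jk}$ where $c_k := \ga_{n}(\aaa_k;n)$ carries the dominant column-wise exponential growth, so that $\hat{N}_{jk} = \ga_{n-j}(\aaa_k;n)/\ga_n(\aaa_k;n)$ stays bounded and bounded away from a degenerate limit. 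Concretely, the rank~$1$ asymptotics of \cite{Baik-Wang10a} (which compute $\ga_j(a;n)$ via the same $\Gfn,\Hfn$-functions) give $\ga_{n-j}(\aaa_k;n)/\ga_n(\aaa_k;n) = r(\aaa_k)^{j-1}\bigl(1+o(1)\bigr)$ for an explicit $r(\aaa_k)$, with the $r(\aaa_k)$ distinct because the $\aaa_k$ are; hence $[\hat{N}_{jk}]\to$ a Vandermonde matrix in the $r(\aaa_k)$, which is invertible. The scalar factors $\prod_k c_k$ cancel between numerator and denominator, and on the rescaled matrices both $\det[\hat{N}_{jk}(1+\epsilon_{jk})]$ and $\det[\hat{N}_{jk}]$ converge to the same nonzero Vandermonde determinant; their ratio is $1+o(1)$. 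Combining, $\bar{\cE}_{n,n}(\aaa_1,\dots,\aaa_\m;\Int;s)\to 1$, and multiplying by the first factor yields \eqref{eq:result_of_subcritical}. The uniformity in $s$ near $1$ is preserved throughout because every input estimate (edge asymptotics of $\cE_n(\Int;s)$, the rank~$1$ sub-critical ratio, and the $\ga$-asymptotics, which are $s$-independent) holds uniformly there.
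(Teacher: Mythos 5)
Your overall strategy matches the paper's: both apply Theorem~\ref{thm:alg}, invoke the rank-$1$ sub-critical asymptotics $\bar{\cE}_{n-j+1,n}(\aaa_k;\Int;s)\to 1$ and $\cE_n(\Int;s)\to\FGUE(T;s)$, and then reduce the problem to showing that a ratio of two $\m\times\m$ determinants built from $\ga_{n-j}(\aaa_k;n)$ tends to $1$. Your column rescaling by $c_k=\ga_n(\aaa_k;n)$ is essentially the same as the paper's factoring of $\tilde{C}(\aaa_k)$ from each column via the rank-$1$ asymptotics $\ga_{n-j}(a;n)=\tilde{C}(a)\tilde{\M}_{j,n}(c(a))(1+o(1))$.

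The gap is in your last step, where you claim $\ga_{n-j}(\aaa_k;n)/\ga_n(\aaa_k;n) = r(\aaa_k)^{j-1}(1+o(1))$ for an $n$-independent $r(\aaa_k)$, so that the rescaled matrix converges to a Vandermonde matrix. This is correct only when the support $J$ of the equilibrium measure is a single interval ($N=0$ in the paper's notation). The theorem, however, is stated for general $V$ satisfying the regularity assumptions, and the paper explicitly allows $N>0$. In the multi-interval case the ratio equals $\tilde{\M}_{j,n}(c(\aaa_k))/\tilde{\M}_{0,n}(c(\aaa_k))(1+o(1))$, where $\tilde{\M}_{j,n}$ is built from a Riemann theta function and depends on $n$ \emph{quasi-periodically}; it is not geometric in $j$ and the rescaled matrix does not converge to any fixed Vandermonde (or indeed to anything). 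One instead needs the nontrivial fact that $\det[\tilde{\M}_{j,n}(c(\aaa_k))]_{j,k=1}^{\m}$ is bounded away from $0$ and $\infty$ uniformly in $n$ — this is Proposition~\ref{prop:non-vanishing}\ref{enu:prop:non-vanishing:a}, whose proof occupies Section~\ref{sec:proof_of_non_vanishing_property} and proceeds via a Darboux-type transformation of the global-parametrix Riemann--Hilbert problem and properties of the theta function. Your argument silently specializes to $N=0$ and therefore misses this ingredient, which is precisely the part of the proof the algebraic identity does not trivialize.
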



\begin{rmk}
The assumption that $\aaa_1, \cdots, \aaa_{\m}$ are positive in Theorem \ref{thm:sub} can be removed. 
The proof of the above theorem uses the calculations in \cite{Baik-Wang10a} of the gap probability $\Prob_{n-j+1, n}(a; E)$ which are only detailed for $a > 0$. 
As suggested in \cite[Section 2]{Baik-Wang10a}, there is a similar asymptotic result for $\Prob_{n-j+1, n}(a; E)$ for $a < 0$ from which we can obtain the same result as the above theorem when some of $\aaa_j$'s are not positive.  This remark applies also to Theorem \ref{thm:sup1} below. 
\end{rmk}

\begin{rmk} \label{rmk:distinctness}
The assumption that $\aaa_1, \cdots, \aaa_\m$ are distinct in Theorem \ref{thm:sub} is technical and the result should hold without this assumption. 
The starting formula of the proof of this theorem is the identity~\eqref{eq:alg}. When some of $\aaa_j$'s are identical, the right-hand side of~\eqref{eq:alg} becomes more complicated by using  \lHopital's rule. This in turn requires a more detailed asymptotic results for $\Prob_{n-j+1, n}(a; E)$. The analysis of Baik and Wang \cite{Baik-Wang10a} can be extended for this  but we do not pursue this 
in this paper for the sake of space and presentation. The same remark applies to all other theorems in this section. It is interesting to contrast this situation to the papers \cite{Bertola-Buckingham-Lee-Pierce11} and \cite{Bertola-Buckingham-Lee-Pierce11a} which analyzed the similar model using the Riemann--Hilbert problem for multiple orthogonal polynomials. In that approach, the case in which all $\aaa_j$'s are identical is the simplest to analyze. 
\end{rmk}

\subsubsection{Super-critical case}


In this section we consider the super-critical case in which some of the external source eigenvalues are strictly larger than the critical value $\acc$. 
In this case  large external source eigenvalues do have an effect on the top eigenvalues. 
We consider three sub-cases. 
In the first two cases, we assume that $\aaa_j\notin \mathcal{J}_V$ for all $j$.
The first among these is the case when $\aaa_j$ are separated by $O(1)$ distances. 
In the second case, the external source eigenvalues are asymptotically the same.
The third case is the secondary critical case when $\aaa_j$ are all asymptotically equal to some $a\in  \mathcal{J}_V\setminus\{\acc\}$.
From the discussion of Section~\ref{subsec:assumptions_of_V_and_preliminary_notations}, the last case does not occur if $V(x)$ is convex for $x\in [\redge, \infty)$. 
 

\bigskip

Let 
\begin{equation}\label{eq:erfdef}
    \erf(T) := \frac{1}{\sqrt{2\pi}}\int^{T}_{-\infty} e^{-x^2/2}dx
  \end{equation}
be the \cdf\ of the standard normal distribution.
  
\begin{thm}[super-critical case 1: separated external source eigenvalues]\label{thm:sup1}
Let $\aaa_1, \cdots, \aaa_{\m}$ be fixed positive and distinct numbers. Suppose that there is  $p\in\{1, \cdots, \m\}$
such that 
\begin{equation}
	\text{$\aaa_j>\acc$ for $j=1, \cdots, p$, and $\aaa_j<\acc$ for $j=p+1, \cdots, \m$. }
\end{equation}
Assume, without loss of generality, that 
$\aaa_1 > \aaa_2 > \cdots > \aaa_p$. 
Suppose that  $\aaa_j \notin \mathcal{J}_V$ and $\Gfn''(x_0(\aaa_j)) \neq 0$ for each $j = 1, \cdots, p$. 
Then for each $T\in \R$ and $j = 1, \cdots, p$,
  \begin{equation}\label{eq:thmsup1}
    \lim_{n \to \infty} \Prob^{(j)}_n(\aaa_1, \cdots, \aaa_\m; \Intx(x_0(\aaa_j)) ) = \erf(T),
  \end{equation}
  where $\Intx(x_*)$ is defined in~\eqref{eq:interval2}. 
We also have, for $j=1,2,\cdots$, 
\begin{equation}\label{eq:thmsup2}
    \lim_{n \to \infty} \Prob^{(p+j)}_n(\aaa_1, \cdots, \aaa_\m; \Int) ) = F^{(j)}_{\TW}(T),
  \end{equation}
  where $F^{(j)}_{\TW}(T)$ is the Tracy--Widom $j$-th eigenvalue distribution defined in \eqref{eq:defn_TW_j-th_distribution}.
\end{thm}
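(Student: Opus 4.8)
The plan is to start from the algebraic identity~\eqref{eq:alg} of Theorem~\ref{thm:alg} and feed into it the rank-one asymptotics of Baik and Wang~\cite{Baik-Wang10a}. Recall that for a single super-critical eigenvalue $a > \acc$ with $a\notin\mathcal J_V$, the rank-one analysis shows that $\bar{\mathcal E}_{n-j+1,n}(a;\Intx(x_0(a));s)$ converges to the Gaussian CDF-type quantity associated to $\erf$, while $\ga_{n-j}(a;n)$ has a known exponential leading order governed by $\Gfn(x_0(a);a)$; for a sub-critical eigenvalue $a<\acc$ the corresponding $\bar{\mathcal E}$ tends to $1$ on the edge-scale window $\Int$ and $\ga_{n-j}(a;n)$ grows like $e^{n\Hfn(c(a);a)}$. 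Since $\aaa_1,\dots,\aaa_\m$ are fixed and \emph{distinct}, in~\eqref{eq:alg} the determinants in numerator and denominator do not degenerate, so the dominant behavior is extracted by a standard Laplace-type argument on the $\m\times\m$ determinants: one pulls the exponentially large scalar factor $\ga_{n-j}(\aaa_k;n)$ out of each column, and what remains is a determinant whose entries converge, whose leading contribution comes from the diagonal-type term matching the ordering $\aaa_1>\dots>\aaa_p$.

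The key steps, in order. \emph{Step 1:} Record the needed rank-one inputs from~\cite{Baik-Wang10a}: the precise leading asymptotics of $\ga_{n-j}(a;n)$ in the three regimes ($a>\acc$, $a<\acc$, and — not needed here since we assume $\aaa_j\neq\acc$ — the critical one), and the limits of $\bar{\mathcal E}_{n-j+1,n}(a;E;s)$ for $E=\Intx(x_0(a))$ and $E=\Int$. \emph{Step 2:} Substitute into~\eqref{eq:alg} with $E = \Intx(x_0(\aaa_{j_0}))$ for a fixed target index $j_0\le p$. Because the columns $k=1,\dots,p$ carry factors $e^{n\Gfn(x_0(\aaa_k);\aaa_k)}$ with distinct exponents (monotonicity of $x_0$ and of $a\mapsto\Gfn(x_0(a);a)$, from~\cite{Baik-Wang10a}) and the columns $k=p+1,\dots,\m$ carry factors $e^{n\Hfn(c(\aaa_k);\aaa_k)}$, a Laplace/Schur-type expansion of both determinants shows the ratio converges; crucially, only the factor $\bar{\mathcal E}_{n-j_0+1,n}(\aaa_{j_0};\Intx(x_0(\aaa_{j_0}));s)$ survives in the limit, all other $\bar{\mathcal E}$ factors going to $1$ on this window. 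This gives~\eqref{eq:thmsup1} after using the rank-one limit and converting the $s$-generating function to the $j$th-largest-eigenvalue CDF via~\eqref{eq:jthlarge} and uniformity in $s$. \emph{Step 3:} For~\eqref{eq:thmsup2}, repeat with $E=\Int$, the edge window; now all $p$ super-critical eigenvalues have their associated eigenvalue far to the right of $\Int$, so each $\bar{\mathcal E}_{n-j+1,n}(\aaa_k;\Int;s)\to 1$, and the $\m\times\m$ ratio collapses to $1$, leaving $\mathcal E_{n-p}$-type behavior at the edge; matching this with the known Tracy--Widom $j$th-eigenvalue asymptotics for the unperturbed ensemble (with the index shifted by $p$ to account for the $p$ eigenvalues already pulled off) yields $F^{(j)}_{\TW}(T)$.

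I expect the main obstacle to be \emph{Step 2}: controlling the subleading terms of the Laplace expansion of the two $\m\times\m$ determinants uniformly in $n$ and in $s$ near $1$, and in particular verifying that after factoring out the common exponential scalars the residual determinant ratio is bounded away from $0$ and $\infty$. This requires knowing that the rank-one asymptotics of $\bar{\mathcal E}_{n-j+1,n}(\aaa_k;E;s)$ hold with an \emph{error estimate} (not just a bare limit) and that the prefactors multiplying $e^{n\Gfn(x_0(\aaa_k);\aaa_k)}$ in $\ga_{n-j}(\aaa_k;n)$ are asymptotically nonzero — exactly the "sub-leading terms" extension of~\cite{Baik-Wang10a} flagged in Section~\ref{subsec:algebraic_theorem}. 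A secondary nuisance is the bookkeeping needed to identify which permutation in the Leibniz expansion dominates given the strict ordering $\aaa_1>\cdots>\aaa_p$ and the interleaving with the sub-critical block; this is essentially combinatorial once the exponents are known to be strictly ordered, but it must be done carefully to pin down the correct index shift by $p$ in~\eqref{eq:thmsup2}.
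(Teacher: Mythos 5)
Your overall strategy --- feed the rank-one asymptotics of $\ga_{n-j}(\aaa_k;n)$ and $\bar{\cE}_{n-j+1,n}(\aaa_k;E;s)$ into \eqref{eq:alg}, cancel the exponential column factors, and invoke nonvanishing of the residual determinant --- is the paper's strategy, and your ``main obstacle'' paragraph correctly isolates the one genuinely nontrivial input (that the residual determinant is bounded away from $0$ uniformly in $n$, which is Proposition~\ref{prop:non-vanishing}). But there is a real gap in the limits you assign to the rank-one factors, and it changes the answer. When $E=\Intx(x_0(\aaa_{j_0}))$, the columns $k=1,\dots,j_0-1$ correspond to super-critical eigenvalues whose outlier location $x_0(\aaa_k)$ lies strictly to the right of $x_0(\aaa_{j_0})$, i.e.\ \emph{inside} the window $E$; for these one has $\bar{\cE}_{n-j+1,n}(\aaa_k;E;s)\to 1-s$, not $1$, because an eigenvalue that lies in $E$ with probability tending to one contributes a factor $1-s$ to the expectation $\mathbb{E}\big[\prod_j(1-s\chi_E(\xi_j))\big]$. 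The correct limit of $\cE_n(\aaa_1,\cdots,\aaa_\m;E;s)$ is therefore $(1-s+s\erf(T))(1-s)^{j_0-1}$, and the prefactor $(1-s)^{j_0-1}$ is exactly what makes the derivative formula \eqref{eq:jthlarge} return $\erf(T)$ for the $j_0$-th largest eigenvalue. With the limit you claim (no power of $1-s$), \eqref{eq:jthlarge} yields $\Prob^{(j_0)}_n\to 1$ for every $j_0\ge 2$, which is false.

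The same issue breaks your Step 3. For $E=\Int$ each super-critical $\aaa_k$ has its outlier at $x_0(\aaa_k)>\redge$, which is inside $\Int$, so $\bar{\cE}_{n-j+1,n}(\aaa_k;\Int;s)\to 1-s$ and the determinant ratio in \eqref{eq:alg} converges to $(1-s)^p$, not to $1$; the limit is $(1-s)^p F_{\TW}(T;s)$, and the index shift by $p$ in \eqref{eq:thmsup2} comes entirely from differentiating this $(1-s)^p$ prefactor. There is no leftover ``$\cE_{n-p}$-type'' object in the identity to produce the shift: if the ratio really collapsed to $1$ you would obtain $F^{(p+j)}_{\TW}(T)$ rather than $F^{(j)}_{\TW}(T)$. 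A secondary remark: no Laplace or dominant-permutation analysis of the two determinants is needed, or even meaningful --- the exponential scalars $e^{n\Gfn(x_0(\aaa_k);\aaa_k)}$ (resp.\ $e^{n\Hfn(c(\aaa_k);\aaa_k)}$) are common to \emph{every} entry of column $k$ in both the numerator and the denominator of \eqref{eq:alg}, so they cancel exactly, and both residual determinants converge to the determinant of the same matrix $\frakP$ built from $\M_{j,n}(x_0(\aaa_k))$ and $\tilde{\M}_{j,n}(c(\aaa_k))$; the whole content of that step is that $\det[\frakP]$ and its reciprocal are $O(1)$.
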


Theorem \ref{thm:sup1} demonstrates that each of the external source eigenvalues which is greater than $\acc$ 
pulls exactly one eigenvalue out of the support of the equilibrium measure. 
The limiting  location of each pulled-off eigenvalue depends only on the corresponding external source eigenvalue. 
The fluctuation of each pulled-off is Gaussian. 
The rest of the eigenvalues are unaffected by the external source eigenvalues asymptotically.

\bigskip

We now consider the situation when the external source eigenvalues are asymptotically the same. 
A non-Gaussian fluctuation appears when they converge together in a particular fashion. 
Define, for distinct $\alpha_1, \cdots, \alpha_k$, 
\begin{equation} \label{eq:defination_of_spiked_GUE_dist_rank_k}
	\erf_k(T; \alpha_1, \cdots, \alpha_k; s) := \frac{\det \left[ \int^{\infty}_{-\infty}x^{i-1}e^{-x^2/2+\alpha_j x} (1 - s\chi_{(T,\infty)}(x)) dx \right]_{1 \leq i,j \leq k}}{\det \left[ \int^{\infty}_{-\infty}x^{i-1}e^{-x^2/2+\alpha_j x}dx \right]_{1 \leq i,j \leq k}}.
\end{equation}
Observe that 
\begin{equation}
  \erf_k(T; \alpha_1, \cdots, \alpha_k; s) = \cE_{k,1}(\alpha_1, \cdots, \alpha_k; [T,\infty); s),
\end{equation}
in terms of the notation~\eqref{eq:defn_of_generating_func_generalized} when $V(x) = x^2/2$ in~\eqref{eq:generalized_pdf}. 
Hence $\erf_k(T; \alpha_1, \cdots, \alpha_k; s)$ is an expectation that arises from  
the $k\times k$ Gaussian Unitary ensemble (GUE) with external source $\diag(\alpha_1, \cdots, \alpha_\m)$. 
As a special case, 
\begin{equation} \label{eq:defination_of_distr_of_j_eigen_in_k_GUE}
  	\erf^{(j)}_k(T) := \sum^{j-1}_{i=0} \frac{(-1)^i}{i!} \frac{d^i}{d s^i} \bigg|_{s=1} \erf_k(T; 0, \cdots, 0; s)
\end{equation}
is the \cdf\ of the $j$-th largest eigenvalue of the $k$-dimensional GUE.
When $j=k=1$, this equals $\erf(T)$. 

\begin{thm}[super-critical case 2: clustered external source eigenvalues]\label{thm:sup2}
Let $a$ be a fixed number such that $a > \acc$, $a \notin \mathcal{J}_V$ and $\Gfn''(x_0(a)) \neq 0$. Set 
\begin{equation}\label{eq:aaaksca}
	\aaa_k = a+ \sqrt{-\Gfn''(x_0(a))}\frac{\alpha_k}{\sqrt{n}} , \qquad k=1, \cdots, \m.
\end{equation}
for fixed distinct $\alpha_1, \cdots, \alpha_\m$. Then for each $T\in \R$, 
\begin{equation} \label{eq:result_of_thm:sup2}
  \lim_{n \to \infty} 
   \cE_n(\aaa_1, \cdots, \aaa_\m; \Intx(x_0(a)); s) = \erf_\m(T; \alpha_1, \cdots, \alpha_\m; s)
\end{equation}
uniformly in $s$ which is close to $1$.
\end{thm}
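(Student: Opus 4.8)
\emph{Proof plan.} The plan is to reduce the higher rank quantity to rank-one ones via Theorem~\ref{thm:alg} and then extract the limit of a ratio of two $\m\times\m$ determinants by a refined asymptotic analysis.

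\emph{Step 1: reduction.} Fix $E=\Intx(x_0(a))$ and let $s$ lie in a small fixed neighbourhood of $1$. Since $a>\acc$ is super-critical we have $x_0(a)>\redge$, so for large $n$ the set $E$ lies to the right of the support of the equilibrium measure with a gap, the Christoffel--Darboux kernel of the unperturbed ensemble restricted to $E$ tends to $0$ in trace norm, and hence $\cE_{n-j+1,n}(E;s)\to1$ uniformly in such $s$ for every fixed $j$; in particular the nonvanishing hypothesis of Theorem~\ref{thm:alg} holds for large $n$. Writing $\cE_n(\aaa_1,\cdots,\aaa_\m;E;s)=\cE_{n,n}(E;s)\,\bar{\cE}_{n,n}(\aaa_1,\cdots,\aaa_\m;E;s)$ and using~\eqref{eq:alg}, the theorem is equivalent to
\begin{equation*}
  \lim_{n\to\infty}\frac{\det\bigl[\ga_{n-j}(\aaa_k;n)\,\bar{\cE}_{n-j+1,n}(\aaa_k;\Intx(x_0(a));s)\bigr]_{j,k=1}^\m}{\det\bigl[\ga_{n-j}(\aaa_k;n)\bigr]_{j,k=1}^\m}=\erf_\m(T;\alpha_1,\cdots,\alpha_\m;s).
\end{equation*}

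\emph{Step 2: refined asymptotics of the entries.} I would then establish, uniformly for $s$ near $1$, asymptotic expansions of the two families of entries. Writing $\ga_{n-j}(\aaa_k;n)=\int_\R e^{n\Gfn(x;\aaa_k)}q_{n-j}(x)\,dx$ with $q_{n-j}$ the bounded prefactor of the orthonormal polynomials away from the support supplied by the Deift--Zhou analysis, the assumptions $a\notin\mathcal{J}_V$ and $\Gfn''(x_0(a))\neq0$ ensure that $x_0(\aaa_k)$ is for large $n$ the unique nondegenerate maximizer of $\Gfn(\cdot;\aaa_k)$. A Laplace expansion there, together with the envelope-theorem identity $n\Gfn(x_0(\aaa_k);\aaa_k)=n\Gfn(x_0(a);a)+\sqrt{-\Gfn''(x_0(a))\,n}\,\alpha_k x_0(a)+\tfrac12\alpha_k^2+O(n^{-1/2})$, gives a $V$-dependent nonzero constant $\mu$ with
\begin{equation*}
  \ga_{n-j}(\aaa_k;n)=e^{n\Gfn(x_0(\aaa_k);\aaa_k)}\,n^{-1/2}\,\mu^{\,j}\Bigl(\sum_{l=0}^{\m-1}n^{-l/2}g_l(j,\alpha_k)+O(n^{-\m/2})\Bigr),
\end{equation*}
where each $g_l$ is a polynomial in $j$ of degree $l$ (the term $g_l$ comes from $l$-fold differentiation of the exponential-in-$j$ prefactor; $x_0'(a)=-1/\Gfn''(x_0(a))\neq0$ makes its leading coefficient in $(j,\alpha_k)$ proportional to $(j\alpha_k)^l$). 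The factor $e^{n\Gfn(x_0(\aaa_k);\aaa_k)}$ does not depend on $j$, so it factors out of the determinant columnwise. For the second family I would carry the super-critical rank-one asymptotics of~\cite{Baik-Wang10a} one order further --- as that paper remarks, this extension is routine since only the existence of the expansion is needed --- to get
\begin{equation*}
  \bar{\cE}_{n-j+1,n}(\aaa_k;\Intx(x_0(a));s)=\sum_{l=0}^{\m-1}n^{-l/2}h_l(j,\alpha_k;s)+O(n^{-\m/2}),\qquad h_0(j,\alpha;s)=1-s+s\,\erf(T-\alpha),
\end{equation*}
again with $h_l$ polynomial in $j$ of degree $l$; here the $j$-dependence enters only at order $n^{-1/2}$, arising from the $O((j-1)/n)$ perturbation of the effective potential $\tfrac{n}{n-j+1}V$ of the ensemble of size $n-j+1$, magnified to order $n^{-1/2}$ by the $n^{-1/2}$-width of the window $\Intx(x_0(a))$, and its top piece in $j$ is built from $\tfrac{d^l}{d\alpha^l}\erf(T-\alpha)$.

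\emph{Step 3: determinant bookkeeping and identification of the limit.} Substituting the expansions, the column factors $e^{n\Gfn(x_0(\aaa_k);\aaa_k)}$, the common $n^{-1/2}$, and the row factor $\mu^{\,j}$ all cancel between numerator and denominator, leaving $\det[f_n(j,\alpha_k;s)]/\det[f_n^{(0)}(j,\alpha_k)]$ with $f_n^{(0)}=\sum_l n^{-l/2}g_l$ and $f_n=f_n^{(0)}\bigl(\sum_l n^{-l/2}h_l\bigr)$; both matrices have entries of the shape $\sum_{l\ge0}n^{-l/2}(\text{polynomial in }j\text{ of degree }l)$. The $n^0$ term of such a matrix is column-degenerate; applying the iterated difference operator in $j$ to the rows (a fixed invertible, $n$-independent row operation) makes row $j$ equal $n^{-(j-1)/2}$ times a quantity with a finite limit, so that after extracting $\diag(1,n^{-1/2},\cdots,n^{-(\m-1)/2})$ the matrix has a finite limit. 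The extracted scalings (altogether $n^{-\m(\m-1)/4}$), the Jacobian of the row operation, and all common column and row factors cancel in the ratio, so the limit equals the ratio of the two limiting $\m\times\m$ determinants. Finally, rescaling the saddle-point Gaussian by $\sqrt{-\Gfn''(x_0(a))}$ (which sends $\Intx(x_0(a))$ to $[T,\infty)$) and using the Hermite three-term recurrence --- equivalently, integration by parts --- to repackage the $\erf$-derivatives carried by the $h_l$'s, these two limiting determinants are identified with
\begin{equation*}
  \det\Bigl[\int_\R x^{i-1}e^{-x^2/2+\alpha_k x}\bigl(1-s\chi_{(T,\infty)}(x)\bigr)\,dx\Bigr]_{i,k=1}^\m
  \quad\text{and}\quad
  \det\Bigl[\int_\R x^{i-1}e^{-x^2/2+\alpha_k x}\,dx\Bigr]_{i,k=1}^\m,
\end{equation*}
whose quotient is $\erf_\m(T;\alpha_1,\cdots,\alpha_\m;s)$ by~\eqref{eq:defination_of_spiked_GUE_dist_rank_k}. (Equivalently, up to common factors these are the determinants produced by Theorem~\ref{thm:alg} applied to the $\m\times\m$ Gaussian model, whose $\ga_{\m-j}(\alpha;1)$ are proportional to $\alpha^{\m-j}e^{\alpha^2/2}$ by the Hermite generating function, so that the ratio is $\cE_{\m,1}(\alpha_1,\cdots,\alpha_\m;[T,\infty);s)=\erf_\m$ as observed below~\eqref{eq:defination_of_spiked_GUE_dist_rank_k}.)

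\emph{Main obstacle.} The delicate part is Step 2 together with the identification in Step 3: one must push the rank-one analysis of~\cite{Baik-Wang10a} one order further, in particular uniformly for matrix size $n-j+1$ and for $\aaa_k$ at distance $O(n^{-1/2})$ from $a$ and uniformly in $s$; and one must check that after the row reduction precisely $\m$ orders of each expansion survive and assemble --- via the Hermite-recurrence identity above --- into the determinants defining $\erf_\m$. At bottom this is the statement that, even for a general potential, the local eigenvalue statistics near the pulled-off eigenvalues are universal and governed by the finite-size GUE with external source, as in the Gaussian computation of~\cite{Peche05}; extracting this from the general-$V$ expansions is where the real work lies.
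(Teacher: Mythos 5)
Your Step 1 is exactly the paper's reduction via Theorem~\ref{thm:alg}, and your leading term $h_0=1-s+s\,\erf(T-\alpha)$ and the final pair of moment determinants are the right targets. The gap is in Steps 2--3, where you replace the paper's one structural observation with a scheme that does not go through as stated. The paper never expands $\ga_{n-j}(\aaa_k;n)$ and $\bar{\cE}_{n-j+1,n}(\aaa_k;\Intx(x_0(a));s)$ separately to $\m$ orders and multiplies them. Instead it shows (see \eqref{eq:exponential_approx_of_tilde_psi_psi}--\eqref{eq:Gammanjak11}) that the \emph{product} $\ga_{n-j}(\aaa_k;n)\,\bar{\cE}_{n-j+1,n}(\aaa_k;\Intx(x_0(a));s)$ equals, up to exponentially small errors, the single Laplace integral $e^{-n\ell/2}\int M_{j,n}(y)e^{n\Gfn(y;\aaa_k)}\bigl(1-s\chi_{E_{T,\epsilon}(x_0(a))}(y)\bigr)dy$, i.e.\ the same integral as $\ga_{n-j}(\aaa_k;n)$ with the measure modified by $(1-s\chi)$. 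Taylor-expanding $M_{j,n}$ in the \emph{spatial} variable at $x_0(a)$ then factors both determinants as $\det[\calP]\det[\calN]\det[\calQ^{(\cdot)}]$ with identical $\calP$ and $\calN$, so everything cancels except $\det[\calQ_{T;s}]/\det[\calQ]=\erf_\m$; the only non-asymptotic input is $\det[\calP]\neq0$, which is Proposition~\ref{prop:non-vanishing}\ref{enu:prop:non-vanishing:b}. With this identity, only the leading-order Laplace analysis of one integral is needed --- no higher-order expansion of the rank-one expectation at all.

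Your substitute has two concrete problems. First, the structure you rely on --- coefficients $g_l,h_l$ polynomial in $j$ of degree $l$ times a geometric prefactor $\mu^{j}$, and the iterated-difference-in-$j$ row reduction built on it --- is specific to the single-interval case $N=0$, where $\M_j(z)\propto A(z)B(z)^{j}$. Theorem~\ref{thm:sup2} makes no single-interval assumption; for $N>0$ the $\M_{j,n}$ are theta-function expressions depending quasi-periodically on $n$, your row reduction is unavailable, and the nonsingularity of the limiting matrix is precisely the nontrivial content of Proposition~\ref{prop:non-vanishing}, which you do not address. Second, the mechanism you propose for the $j$-dependence of the subleading terms of $\bar{\cE}_{n-j+1,n}$ (an $O((j-1)/n)$ perturbation of the effective potential) is not where that dependence comes from --- it enters through the prefactor $M_{j,n}(y)$ of $\ga_{n-j}(\aaa_k;n)\tilde{\psi}_{n-j}\psi_{n-j}$ --- and the ``repackaging via the Hermite recurrence'' that is supposed to reassemble the cross terms of your two expansions into the determinants of \eqref{eq:defination_of_spiked_GUE_dist_rank_k} is asserted, not derived. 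The step you yourself flag as the main obstacle is where the proof actually lives, and the single-integral identity is the missing idea that makes it tractable.
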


Hence in this case the $\m$ eigenvalues which are outside of the bulk converge to the same location $x_0(a)$.
After a scaling, they fluctuate as the eigenvalues of $\m\times \m$ GUE matrix with external source $\diag(\alpha_1, \cdots, \alpha_\m)$.

\begin{rmk}\label{rmk:mixsuper}
We can also consider the general case that for $a_1, a_2, \cdots>\acc$ where $a_j\notin \mathcal{J}_V$ and $\Gfn''(x_0(a_j))\neq 0$, $p_1$ external source eigenvalues are close to $a_1$, $p_2$ external source eigenvalues are close to $a_2$, etc. Then for each $j$, $p_j$ eigenvalues converge to $x_0(a_j)$ and they fluctuate like the eigenvalues of $p_j\times p_j$ GUE with certain external source. This can be obtained by combining the proofs of Theorem~\ref{thm:sup1} and~\ref{thm:sup2} but it is tedious. We omit the proof. 
\end{rmk}

\begin{rmk}
In Theorem~\ref{thm:sup2}, we can also prove the analogue of~\eqref{eq:thmsup2} and show that for each $j \geq 1$, the $(\m+j)$-th eigenvalue converges to $\redge$ and its limiting distribution is the Tracy--Widom $j$-th eigenvalue distribution defined in \eqref{eq:defn_TW_j-th_distribution}. Similar remark also applies to Theorem \ref{thm:superjump} and to Theorem \ref{thm:critical2} below. 
\end{rmk}

\bigskip

We now consider the situation when all external source eigenvalues are near or at a secondary critical value of $V$. 
In this case, we will state the result under the assumption that 
the support of the equilibrium measure of $V$ consists of one interval (i.e. $N=0$ in \eqref{eq:defn_of_J}). 
This assumption is made only for the ease of statement: see Remark~\ref{rmk:multiinterval} below
how the result is changed if $N>0$. 

Let $a\in \mathcal{J}_V$ and we consider the situation when $\m$ external source eigenvalues converge to $a$.
Under the assumption~\eqref{eq:condition_of_V_7}, the top $\m$ eigenvalues converge to one of the two possible locations, which we denote by $x_1(a)<x_2(a)$. 
How many of the eigenvalues converge to each of them? 
It turned out that any number is possible and it depends on how fast the external source eigenvalues converges to $a$.
There are $\m$ distinct scalings. 
To each scaling indexed by an $\mm\in \{1, \cdots, \m\}$ 
a number $\bfR_m\in (0,1)$ is associated such that 
either one of the following two happens: 
with probability $\bfR_m$, 
the top $\mm-1$ eigenvalues converge to $x_2(a)$ and the next top $\m-\mm+1$ eigenvalues to $x_1(a)$, 
or 
with probability $1-\bfR_m$, the top $\mm$ eigenvalues converge to $x_2(a)$ and the next top $\m-\mm$ eigenvalues to $x_1(a)$.
In order to describe $\bfR_{\mm}$, we need some definitions. 

Since we assume $N=0$, the support $J$ of $\Psi(x)$ is of form $J= (\ledge, \redge)$. Set
$\gamma(z):= \big( \frac{z-\ledge}{z-\redge} \big)^{1/4}$ which is defined on $\compC\setminus[\ledge, \redge]$ and satisfies $\gamma(z)\sim 1$ as $z\to \infty$. 
Define, for $j\in \intZ$, 
\begin{equation}\label{eq:Mjindepn}
	\M_j(z)=  \sqrt{\frac{2}{\pi(\redge-\ledge)}} \frac{\gamma(z)+\gamma(z)^{-1}}2
	\bigg( \frac{\gamma(z)-\gamma(z)^{-1}}{\gamma(z)+\gamma(z)^{-1}} \bigg)^j, 
	\qquad z\in \compC\setminus(-\infty, \redge]
\end{equation}
for $z$ in $\compC\setminus(-\infty, \redge]$. 
For $j = 0, \cdots, \m$ and for distinct $a,b \in (\redge, \infty)$, we define the matrix
\begin{equation}\label{eq:superj1030}
	\frakP^{(a,\m - j),(b,j)} :=  
        \begin{bmatrix} 
	\M_1(a) & \M_1'(a) & \cdots & \M_1^{(\m-j-1)}(a)& \M_1(b) 
	& \M_1'(b) & \cdots & \M_1^{(j-1)}(b) \\
	\vdots& \vdots& \ddots& \vdots& \vdots& \vdots& \ddots& \vdots \\
	\M_\m(a) & \M'_\m(a) & \cdots & \M_\m^{(\m-j-1)}(a) & \M_\m(b) 
	& \M'_\m(b) & \cdots & \M_\m^{(j-1)}(b)  \\
	\end{bmatrix}.
\end{equation}
We show in Proposition~\ref{prop:non-vanishing}\ref{enu:prop:non-vanishing:b}  that the determinant of $\frakP^{(a,\m - j),(b,j)}$ is nonzero and $(-1)^{\m(\m-1)/2}$ times the determinant is positive 
if $a<b$.
We also define, for $c \neq 0$ and distinct real numbers $\alpha_1, \cdots, \alpha_{\m}$, 
\begin{equation}\label{eq:superj1040}
	\frakQ_{(0,\m-j),(c,j)}(\alpha_1, \cdots, \alpha_{\m}) :=  
        \begin{bmatrix} 1 & \alpha_1  &\cdots & \alpha_1^{\m-j-1}
	& e^{c\alpha_1} & \alpha_1 e^{c\alpha_1}&\cdots 
	& \alpha_1^{j-1} e^{c \alpha_1} \\
	\vdots & \vdots & \ddots & \vdots & \vdots& \vdots & \ddots & \vdots &  \\
	1 & \alpha_\m  &\cdots & \alpha_\m^{\m-j-1} & e^{c \alpha_\m}
	& \alpha_\m e^{c\alpha_\m}&\cdots 
	& \alpha_\m^{j-1}  e^{c\alpha_\m}
	\end{bmatrix},
\end{equation}
where $j=0, 1, \cdots, \m$. 
Note that  $(-1)^{\m(\m-1)/2} \det [\frakQ_{(0,\m-j),(c,j)}(\alpha_1, \cdots, \alpha_{\m})]>0$ if $c>0$ and $\alpha_1>\cdots>\alpha_\m$. 

\begin{thm}[secondary critical case]\label{thm:superjump}
Assume that the support of the equilibrium measure associated to $V$ consists of one interval. 
Let $a$ be a secondary critical value 
(i.e. $a\in\mathcal{J}_V \setminus \{ \acc \}$ ) such that $\Gfn(x; a)$, 
$x\in (c(a), \infty)$, attains its maximum value at two points 
$x_1(a) < x_2(a)$. 
Assume that  $\Gfn''(x_1(a); a)\neq 0$ and $\Gfn''(x_2(a); a)\neq 0$. Fix $m \in \{1, 2, \cdots, \m\}$, and set
\begin{equation}\label{eq:qmm}
  	q_\mm := \frac{\m-2\mm+1}{x_2(a)-x_1(a)}
\end{equation}
and
\begin{equation}\label{eq:superj005}
\begin{split}
	K_\mm := \left( 
        \frac{(\m-m)!}{(m-1)!} \frac{(-\Gfn''(x_1(a)))^{\m-m+1/2} }{(-\Gfn''(x_2(a)))^{m-1/2}}
	 \right)^{\frac{1}{\m-2m+1}}.
\end{split}
\end{equation}
Suppose that the external source eigenvalues are 
\begin{equation}\label{eq:superj000}
	\aaa_k=  a- q_m\frac{\log (K_\mm n)}{n}+ \frac{\alpha_k}{n}, \qquad k = 1, \cdots, \m,
\end{equation}
for fixed distinct $\alpha_1 > \cdots > \alpha_\m$.
Then for any $T \in \realR$, as $n \to \infty$ we have the following.
\begin{enumerate}[label=(\alph*)]
\item
  For $k = 1, \cdots, m-1$,
  \begin{equation}\label{eq:supjmp1}
    \Prob^{(k)}_n(\aaa_1, \cdots, \aaa_\m; \Intx(x_2(a))) = \bfR_m \erf^{(k)}_{m-1}(T) +(1-\bfR_m) \erf^{(k)}_m(T) + o(1).
  \end{equation}

\item
  For $k = m$,
  \begin{align}
    \Prob^{(k)}_n(\aaa_1, \cdots, \aaa_\m; \Intx(x_2(a))) = &\bfR_m +(1-\bfR_m) \erf^{(m)}_m(T) + o(1), \\
    \Prob^{(k)}_n(\aaa_1, \cdots, \aaa_\m; \Intx(x_1(a))) = & \bfR_m \erf^{(1)}_{\m-m+1}(T) + o(1).
  \end{align}
\item
  For $k = m+1, \cdots, \m$,
  \begin{equation}
    \Prob^{(k)}_n(\aaa_1, \cdots, \aaa_\m; \Intx(x_1(a))) = \bfR_m \erf^{(k-m+1)}_{\m-m+1}(T) + (1-\bfR_m) \erf^{(k-m)}_{\m-m}(T) + o(1).
  \end{equation}
\end{enumerate}
Here $\bfR_m$ is a number in $(0,1)$  defined by 
\begin{equation}\label{eq:defn_of_R^L_i}
\begin{split}
  	\bfR_m 
	 := & \frac{\det[\bfP_{m-1}] \det[\bfQ_{m-1}]}{\det[\bfP_{m-1}] \det[\bfQ_{m-1}] + \det[\bfP_m] \det[\bfQ_m]}
\end{split}
\end{equation}
where $\bfP_{\ell} := \frakP^{(x_1(a),\m - \ell),(x_2(a), \ell)}$ and $\bfQ_{\ell} := \frakQ_{(0, \m-\ell),(x_2(a)-x_1(a), \ell)}(\alpha_1, \cdots, \alpha_{\m})$. 
The  function $\erf^{(\ell)}_k(T)$ is the distribution function of the $\ell$th largest eigenvalue of $k\times k$ GUE defined in \eqref{eq:defination_of_distr_of_j_eigen_in_k_GUE}.
\end{thm}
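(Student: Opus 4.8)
The starting point is the algebraic identity \eqref{eq:alg} of Theorem~\ref{thm:alg}, applied with the external source eigenvalues $\aaa_k$ as in \eqref{eq:superj000} and with $E$ equal to $\Intx(x_2(a))$ or $\Intx(x_1(a))$. Into both the numerator and the denominator I would substitute the large-$n$ asymptotics of the rank-one quantities $\ga_{n-j}(\aaa_k;n)=\int_{\R}e^{n(\aaa_k x-V(x))}p_{n-j}(x;n)\,dx$ and $\bar{\mathcal{E}}_{n-j+1,n}(\aaa_k;E;s)$ taken from \cite{Baik-Wang10a}, extended to include the subleading terms of the $1/n$-expansion (the "straightforward extension" mentioned in Section~\ref{subsec:algebraic_theorem}). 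Since $a$ is a secondary critical value, $\Gfn(\cdot;a)$ attains its maximum at the two distinct saddle points $x_1(a)<x_2(a)$, so a steepest-descent evaluation of the contour integral for $\ga_{n-j}(\aaa_k;n)$ — using $p_{n-j}(x;n)\approx e^{n\gfn(x)}[\M_j(x)+O(1/n)]$ in the exterior region and $\Gfn(z;a)=\gfn(z)-V(z)+az$ from \eqref{eq:definition_of_GH} — produces a sum of two saddle contributions. Using $\partial_a\Gfn(x_i(a);a)=x_i(a)$ and $\Gfn(x_1(a);a)=\Gfn(x_2(a);a)$, the $\aaa_k$-dependence enters the $x_i(a)$-contribution through the scalar $e^{n\Gfn(x_1(a);a)}(K_m n)^{-q_m x_i(a)}e^{\alpha_k x_i(a)}$ times a power series in $1/n$ whose coefficients are polynomials in $\alpha_k$ and involve the derivatives $\M_\ell^{(r)}(x_i(a))$ of the outer-parametrix functions \eqref{eq:Mjindepn}. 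The relevant scale separation is that $\aaa_k-a\sim n^{-1}\log n$ is far smaller than the fluctuation scale $n^{-1/2}$ of the cutoff, so at that scale an eigenvalue cluster near a given saddle feels no external-source effect (its fluctuation is purely that of GUE with no source), and the $\aaa_k$'s enter only through the relative weight of being pulled to $x_2(a)$ versus $x_1(a)$; accordingly $\bar{\mathcal{E}}_{n-j+1,n}(\aaa_k;\Intx(x_2(a));s)$ splits so that its $x_1(a)$-piece tends to $1$ while its $x_2(a)$-piece carries the Gaussian cut-off factor that is the building block of $\erf_\ell(T;0,\ldots,0;s)$ in \eqref{eq:defination_of_spiked_GUE_dist_rank_k}, and symmetrically for $E=\Intx(x_1(a))$.

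Next I would substitute these expansions into the two $\m\times\m$ determinants of \eqref{eq:alg}, pull out the common scalar $e^{n\m\Gfn(x_1(a);a)}$ together with the Gaussian prefactors, and expand each determinant by multilinearity over the $\m$ columns, a subset $S\subseteq\{1,\ldots,\m\}$ of size $\ell$ recording which columns take the $x_2(a)$-saddle contribution. For a fixed $\ell$ the columns drawn from the same saddle are proportional at leading order, so the naive leading term vanishes and one must pass to the confluent expansion: exploiting the distinctness of the $\alpha_k$ and the $1/n$-corrections that carry the $\M$-derivatives, the surviving contribution factors — after row and column operations — into the product of a determinant of $\M$-derivatives, which is exactly $\det[\frakP^{(x_1(a),\m-\ell),(x_2(a),\ell)}]=\det[\bfP_\ell]$ of \eqref{eq:superj1030}, a determinant of monomials and exponentials in the $\alpha_k$, which is exactly $\det[\frakQ_{(0,\m-\ell),(x_2(a)-x_1(a),\ell)}(\alpha_1,\ldots,\alpha_\m)]=\det[\bfQ_\ell]$ of \eqref{eq:superj1040} (with $c=x_2(a)-x_1(a)$ coming from the $e^{\alpha_k x_i(a)}$ weights), and, in the numerator, a factor $\erf_\ell(T;0,\ldots,0;s)$ coming from the $\bar{\mathcal{E}}$'s; Proposition~\ref{prop:non-vanishing} guarantees $\det[\bfP_\ell]\neq0$ and fixes its sign. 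The power of $n$ accumulated by the $\ell$-term comes from the factor $(K_m n)^{-q_m(x_2(a)-x_1(a))\ell}=(K_m n)^{-(\m-2m+1)\ell}$, from the half-integer powers of $-\Gfn''(x_i(a))$ in the Gaussian prefactors, and from the factorials $\tfrac{(\m-\ell)!}{\ell!}$ produced by the confluent determinants; the precise choices of $q_m$ in \eqref{eq:qmm} and $K_m$ in \eqref{eq:superj005} are made exactly so that the $\ell=m-1$ and $\ell=m$ terms are of the same order and every other $\ell$ is of strictly smaller order. Verifying this balance is the heart of the argument.

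Consequently both the numerator and the denominator of \eqref{eq:alg} are dominated by the two terms $\ell=m-1$ and $\ell=m$; taking the ratio and multiplying by the no-source factor $\mathcal{E}_{n,n}(E;s)$, which tends to $1$ since $\Intx(x_i(a))$ lies strictly to the right of $\redge$, yields $\mathcal{E}_n(\aaa_1,\ldots,\aaa_\m;\Intx(x_2(a));s)=\bfR_m\,\erf_{m-1}(T;0,\ldots,0;s)+(1-\bfR_m)\,\erf_m(T;0,\ldots,0;s)+o(1)$ with $\bfR_m$ exactly as in \eqref{eq:defn_of_R^L_i}, and the analogous identity for $\Intx(x_1(a))$ with the two clusters interchanged. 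Differentiating in $s$ at $s=1$ as in \eqref{eq:jthlarge}, and using uniformity of the convergence for $s$ near $1$, converts these into statements for $\Prob_n^{(k)}$ in terms of the distribution functions $\erf_\ell^{(\cdot)}(T)$ of \eqref{eq:defination_of_distr_of_j_eigen_in_k_GUE}. The final bookkeeping is that on the event of ``probability $\bfR_m$'' exactly $m-1$ of the pulled-off eigenvalues lie near $x_2(a)$ and $\m-m+1$ near $x_1(a)$, while on the complementary event exactly $m$ lie near $x_2(a)$ and $\m-m$ near $x_1(a)$; hence the $k$-th largest eigenvalue overall is the $k$-th largest of the $x_2(a)$-cluster when $k\le\ell$ and the $(k-\ell)$-th largest of the $x_1(a)$-cluster when $k>\ell$, and writing out ``at most $k-1$ eigenvalues in $\Intx(x_i(a))$'' on each event and for each $k$ unpacks precisely into the three cases (a), (b), (c). The membership $\bfR_m\in(0,1)$ follows from Proposition~\ref{prop:non-vanishing} together with the sign statements following \eqref{eq:superj1030} and \eqref{eq:superj1040}. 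The main obstacle is, as noted, the confluent multilinear expansion and the order calibration of the second paragraph — identifying the exact leading term of each $\ell$-block as $\det[\bfP_\ell]\det[\bfQ_\ell]$ times an $\erf_\ell$-factor and checking that the prescribed $q_m$ and $K_m$ make precisely $\ell=m-1$ and $\ell=m$ tie; propagating the $o(1)$ error terms through the determinants is then careful but routine.
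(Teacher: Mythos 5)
Your proposal is correct and follows essentially the same strategy as the paper: apply Theorem~\ref{thm:alg}, steepest-descent the numerator and denominator determinants over the two saddles $x_1(a), x_2(a)$, extract the confluent leading term $\det[\bfP_\ell]\det[\bfQ_\ell]\cdot\bfZ_{\m-\ell}\bfZ_\ell$ for each cluster-size $\ell$, and observe that $q_m$ and $K_m$ are calibrated exactly so that $\ell=m-1$ and $\ell=m$ dominate and tie. The paper organizes the split into two-saddle contributions via \Andreief's formula (Lemma~\ref{lem:Ii}), converting the $\m\times\m$ determinant into an $\m$-fold integral over $(E_1\cup E_2)^{\m}$ and partitioning by which variables land near $x_1$ versus $x_2$, rather than by column-by-column multilinearity; but this is a bookkeeping difference, and the confluent Vandermonde extraction, the order calibration via $q_m, K_m$, and the final $s$-differentiation step are all exactly as you describe.
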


Observe that one more eigenvalue is pulled off from $x_1(a)$ to $x_2(a)$ as 
$\mm$ increases by $1$.
The eigenvalues clustered near each of $x_1(a)$ or $x_2(a)$ fluctuate like 
the eigenvalues of a GUE matrix of dimension equal to the cluster size.

Note that $\bfR_m$ is well defined even for nondistinct $\alpha_j$ if we apply \lHopital's rule to the right\DongNew{-}hand side of \eqref{eq:defn_of_R^L_i}.  
The theorem holds without the assumption of distinctness but we do not pursue it here.
See Remark \ref{rmk:distinctness}.

\begin{rmk}\label{rmk:multiinterval}
When the support of equilibrium measure associated to the potential $V$ consists of more than one interval (i.e. $N>0$), the above theorem still holds after one change: 
the probability $\bfR_m$ depends on $n$. 
In the formula~\eqref{eq:Mjindepn}, $\M_{j}$ needs to be changed to $\M_{j,n}$
in \cite[Formula (311)]{Baik-Wang10a}, which is expressible in terms of a Riemann theta function and depends on $n$ quasi-periodically. 
Nevertheless, it can be shown that $\bfR_m$ lies in a compact subset of $(0,1)$ for all large enough $n$
from Proposition~\ref{prop:non-vanishing}.
This is enough to extend the proof of the above theorem from $N=0$ case to $N>0$ case.

\end{rmk}

\subsubsection{Critical case}

The final two theorems concern the critical case. In this case the limiting location of the top eigenvalue(s) is about to break off from $\redge$. Recall that the critical value $\acc$ which is determined by the potential $V$ satisfies $\acc\le \frac12 V'(\redge)$. 
Depending on whether $\acc= \frac12 V'(\redge)$ or $\acc<\frac12 V'(\redge)$, 
the break-off is continuous or discontinuous. 
When $V(x)$ is convex in $x\in [\redge, \infty)$, we always have $\acc= \frac12 V'(\redge)$ and the break-off is continuous. 


\begin{figure}[htb]
  \centering
  \includegraphics{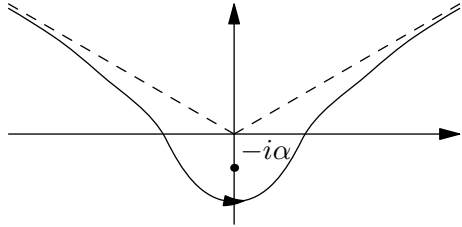}
  \caption{The contour of the integral in \eqref{eq:Calphadef}}
  \label{fig:limiting_contour_above}
\end{figure}

Let us define the limiting distributions that appear in the case when $\acc= \frac12 V'(\redge)$. 
For $\alpha \in \realR$, define the function (see \cite[Formula (15)]{Baik-Ben_Arous-Peche05} and \cite[Formula (18)]{Baik-Wang10a})
\begin{equation}\label{eq:Calphadef}
	C_{\alpha}(\xi) := \frac1{2\pi} \int e^{i\frac13 z^3+i\xi z} \frac{dz}{\alpha+iz},
\end{equation}
where the contour is from $\infty e^{5\pi i/6}$ to $\infty e^{\pi i/6}$ and the pole $z=-i\alpha$ lies above the contour in the complex plane (Figure \ref{fig:limiting_contour_above}).
Set 
for $T\in\R$ and $s\in \compC$
\begin{equation} \label{eq:defn:of_F_1(T;alpha)}
	\FGOE(T;\alpha;s) := \FGUE(T; s)\cdot 
	\bigg(1 - s\langle (1-s\chi_{[T,\infty)} K_{\Airy} \chi_{[T, \infty)})^{-1} C_{\alpha},  \chi_{[T,\infty)}\Ai  \rangle\bigg),
\end{equation}
where $\langle f,g \rangle$ denotes the real inner product over $\realR$, $\int_{\realR} f(x)g(x)dx$.  
For $k\ge 1$ and distinct real parameters $\alpha_1, \cdots, \alpha_k$, define 
\begin{equation} \label{eq:defination_of_generalized_TW_dist_rank_k}
	F_k(T; \alpha_1, \cdots, \alpha_k;s ) := F_{\TW}(T;s) 
	\frac{\det \left[ (\alpha_i +\frac{d}{d T})^{j-1} \frac{F_1(T; \alpha;s)}{F_{\TW}(T;s)} \right]_{1 \leq i,j \leq k}}{\prod_{1 \leq i < j \leq k} (\alpha_j-\alpha_i)}.
\end{equation}
When $s=1$, the function $\FGOE(T;\alpha;1)$ was defined in \cite[Definition 1.3]{Baik-Ben_Arous-Peche05} 
and $F_k(T; \alpha_1, \cdots, \alpha_k;1)$ was introduced in \cite[Theorem 1.1]{Baik06}. They are known to be distribution functions and can be expressed in terms of Painlev\'e II equation and its Lax pair equations. It is also known that 
$\FGOE(T;0;1)$ is the square of the GOE Tracy--Widom distribution  (see \cite[Formula (24)]{Baik-Ben_Arous-Peche05}). (The  function $F_k(T; \alpha_1, \cdots, \alpha_k;1)$ is shown to be the limiting distribution of the largest eigenvalue in the spiked  model of rank $k$ at the critical case
for the potentials  $V(x)= ((1+c)x-c\log x)\chi_{(0,\infty)}(x)$ and $V(x)=x^2/2$ in  \cite{Baik-Ben_Arous-Peche05} and \cite{Peche05},
respectively. 
It is easy to check from the determinantal point process structure that  
\begin{equation} \label{eq:defn_TW_j-th_distribution_generalized}
  	F^{(j)}_k(T; \alpha_1, \cdots, \alpha_k) := 
  \sum_{i=0}^{j-1} \frac{(-1)^i}{i!} \frac{d^i}{d s^i} \bigg|_{s=1} F_k(T; \alpha_1, \cdots, \alpha_k;s )
\end{equation}
is the limiting distribution of the $j$-th largest eigenvalue in these potentials even though 
this was not discussed in \cite{Baik-Ben_Arous-Peche05, Peche05}.)

\begin{thm}[critical case 1: continuous transition]\label{thm:critical1}
Suppose that $V$ is a potential such that $\acc= \frac12 V'(\redge)$ and $\acc\notin\mathcal{J}_V$. Suppose that 
\begin{equation} \label{eq:defn_of_ext_eigenvalues_critical_1}
	\aaa_k= \acc+ \frac{\beta \alpha_k}{n^{1/3}}, \qquad k=1,\cdots, \m, 
\end{equation}
for distinct real numbers $\alpha_1, \cdots ,  \alpha_\m$. 
Then for each $T\in\R$, 
\begin{equation} \label{eq:result_of_thm:critical1}
	\lim_{n\to\infty} \cE_n(\aaa_1, \cdots, \aaa_\m; \Int; s)= F_{\m}(T; -\alpha_1, \cdots, -\alpha_{\m};s)
\end{equation}
uniformly in $s$ which is close to $1$.
\end{thm}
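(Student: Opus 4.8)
The plan is to start from the algebraic identity of Theorem~\ref{thm:alg}. By the definition~\eqref{eq:defn_of_bar_E_a_dots_E_s_beginning}, $\cE_n(\aaa_1,\dots,\aaa_\m;\Int;s)=\cE_n(\Int;s)\,\bar{\cE}_{n,n}(\aaa_1,\dots,\aaa_\m;\Int;s)$, and by the edge asymptotics for the ensemble with no external source~\cite{Deift-Kriecherbauer-McLaughlin-Venakides-Zhou99,Deift-Gioev07a} --- which, by the same analysis, give $\cE_n(\Int;s)\to F_{\TW}(T;s)$ uniformly in $s$ near $1$ --- it suffices by~\eqref{eq:alg} to show that, in the critical scaling $\aaa_k=\acc+\beta\alpha_k n^{-1/3}$ with $\acc=\frac12 V'(\redge)$,
\[
\lim_{n\to\infty}\frac{\det\big[\ga_{n-j}(\aaa_k;n)\,\bar{\cE}_{n-j+1,n}(\aaa_k;\Int;s)\big]_{j,k=1}^{\m}}{\det\big[\ga_{n-j}(\aaa_k;n)\big]_{j,k=1}^{\m}}=\frac{\det\big[(-\alpha_i+\frac{d}{dT})^{j-1}\,h(T;-\alpha_i;s)\big]_{1\le i,j\le\m}}{\prod_{1\le i<j\le\m}(\alpha_i-\alpha_j)},
\]
where $h(T;\alpha;s):=F_1(T;\alpha;s)/F_{\TW}(T;s)$; multiplying by $F_{\TW}(T;s)$ and comparing with~\eqref{eq:defination_of_generalized_TW_dist_rank_k} then identifies the limit as $F_\m(T;-\alpha_1,\dots,-\alpha_\m;s)$.

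The second step is to assemble the needed asymptotic expansions, in the critical scaling, of the two ingredients. From the Riemann--Hilbert analysis of the orthonormal polynomials $\psi_j(\cdot;n)$ one obtains, for $a$ in an $n^{-1/3}$-neighbourhood of $\acc$, an expansion $\ga_{n-j}(a;n)=c_n(a)\,w(a)^{-j}\big(1+n^{-1/3}r_{j,1}(\alpha)+n^{-2/3}r_{j,2}(\alpha)+\cdots\big)$, where $c_n(a)$ is a $j$-independent prefactor (an exponential of order $e^{O(n^{2/3})}$ times a fixed power of $n$), $w$ is analytic with $w(\acc)\ne0$, and the coefficients $r_{j,\ell}$ are smooth in $\alpha$; here the factor $w(a)^{-j}$ records, through the polynomial degree $n-j$, the first-order dependence of the relevant $g$-function on the proportionality between degree and $n$. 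Extending the rank-one critical analysis of~\cite{Baik-Wang10a} from dimension $n$ to dimension $n-j+1$ --- which, as noted after Theorem~\ref{thm:alg}, is routine since the dimension shift only moves the effective edge by $O((j-1)/n)$ --- gives $\bar{\cE}_{n-j+1,n}(a;\Int;s)=h(T;-\alpha;s)+c\,(j-1)\,n^{-1/3}\,\partial_T h(T;-\alpha;s)+O(n^{-2/3})$ for a nonzero constant $c$ coming from this edge shift. As stressed after Theorem~\ref{thm:alg}, only the existence of such expansions, uniformly in $s$ near $1$, is used.

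The third step is the linear algebra. Factoring $c_n(\aaa_k)$ out of column $k$ and $w(\acc)^{-j}$ out of row $j$ of both determinants produces a common prefactor $P_n$ that cancels in the ratio. Writing $u_k:=(w(\aaa_k)/w(\acc))^{-1}=e^{-c_1\alpha_k n^{-1/3}}\big(1+O(n^{-2/3})\big)$, the denominator equals $P_n$ times $\det[u_k^{\,j}(1+n^{-1/3}r_{j,1}(\alpha_k)+\cdots)]_{j,k}$, whose leading behaviour is the Vandermonde $\prod_k u_k\prod_{i<k}(u_k-u_i)\sim\kappa\,n^{-\binom{\m}{2}/3}\prod_{1\le i<j\le\m}(\alpha_i-\alpha_j)$ with $\kappa\ne0$; the non-vanishing here is of the kind established in Proposition~\ref{prop:non-vanishing}. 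For the numerator one performs successive row operations --- subtracting from each row a suitable linear combination of the rows below --- which, after inserting the two expansions, turn it into $P_n\,\kappa\,n^{-\binom{\m}{2}/3}\,\det\big[(-\alpha_i+\frac{d}{dT})^{j-1}h(T;-\alpha_i;s)\big]_{1\le i,j\le\m}\,(1+o(1))$: the scalar $-\alpha_i$ is supplied by the factors $u_k^{\,j}$ carried by $\ga_{n-j}$ and the operator $\frac{d}{dT}$ by the $(j-1)$-proportional edge-shift correction of $\bar{\cE}_{n-j+1,n}$, while the $\binom{\m}{2}$ small factors generated by the reduction produce the power $n^{-\binom{\m}{2}/3}$ and the common constant $\kappa$. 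Dividing, $P_n$, $\kappa$ and $n^{-\binom{\m}{2}/3}$ cancel, the denominator leaves $\prod_{i<j}(\alpha_i-\alpha_j)$, and the displayed identity of the first paragraph --- hence the theorem --- follows. Uniformity in $s$ near $1$ is inherited from that of all the expansions used.

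The main obstacle is the third step. Three points require genuine care. First, the expansions of $\ga_{n-j}$ and $\bar{\cE}_{n-j+1,n}$ are needed to order $n^{-(\m-1)/3}$, and the dimension index $j$ must enter them through exactly the affine combination that, under the row reduction, assembles powers of a \emph{single} operator on each column; that the otherwise unstructured correction terms $r_{j,\ell}$ and the higher-order dimension-shift corrections organize this way is forced by the fact that the finite-$n$ quantities (via~\eqref{eq:alg}) and the limiting objects (via the confluent-Vandermonde structure of~\eqref{eq:defination_of_generalized_TW_dist_rank_k}) obey the same algebraic relation, but it must be extracted directly from the asymptotics. Second, and most delicately, one must verify that the coefficient with which $\alpha_i$ enters this operator --- controlled by the $n$-dependence of the equilibrium $g$-function through the polynomial degree in $\ga_{n-j}$ --- coincides with the coefficient with which $\frac{d}{dT}$ enters it --- controlled by the $n$-dependence of the edge through the matrix dimension in $\bar{\cE}_{n-j+1,n}$ --- so that the operator is exactly $-\alpha_i+\frac{d}{dT}$ up to an overall constant, and the resulting constant per column cancels against the like factor in the denominator Vandermonde; and similarly at higher orders. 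Third, the row reduction consists of $\binom{\m}{2}$ nested subtractions of nearly equal rows, each amplifying relative errors by a factor $O(n^{1/3})$, so keeping the accumulated error $o(1)$ is exactly where the $n^{-1/3}$ spacing between consecutive orders of the expansions, together with their uniformity in $s$, is used.
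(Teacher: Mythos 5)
Your proposal takes a genuinely different route from the paper and, as written, has a real gap at the step you yourself flag as the ``main obstacle.'' The paper does \emph{not} work directly from the form of Theorem~\ref{thm:alg} in which $\bar{\cE}_{n-j+1,n}$ (and thus a $j$-dependent Christoffel--Darboux kernel $K_{n-j+1,n}$) appears. It explicitly replaces this by the intermediate formula of Lemma~\ref{lem:Dmiddle}, equation~\eqref{eq:afom2}, precisely because there the operator $K_{n,n}$ is the same in every row; the $j$-dependence of the row sits only in $\psi_{n-j}$, which the paper then eliminates by row operations built on the three-term recurrence relation~\eqref{eq:threeterm}. Those row operations turn the rows into inner products against $(x-\redge)^{\ell}\psi_{n-\ell}$ and $(x-\redge)^{\ell}(\psi_{n-\ell-1}-\tfrac{B_{\ell+1,n}(\redge)}{B_{\ell,n}(\redge)}\psi_{n-\ell})$; after the Airy rescaling $\xi=(z-\redge)\beta n^{2/3}$ the factors $(z-\redge)^{\ell}$ become $\xi^{\ell}$, and the specific Airy identities~\eqref{eq:Aipwer2}, namely $\xi^{i}\Ai(\xi)=P_{2i}(-\tfrac{d}{d\xi})\Ai(\xi)$ and $-\xi^{i}\Ai'(\xi)=P_{2i+1}(-\tfrac{d}{d\xi})\Ai(\xi)$, are what actually convert the $j$-dependence into powers of a differential operator in each column. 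The final identification with $F_\m$ is then a direct comparison with the formula of Baik~\cite{Baik06}. None of that machinery appears in your proposal.

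Concretely, your argument asserts an expansion $\bar{\cE}_{n-j+1,n}(a;\Int;s)=h(T;-\alpha;s)+c\,(j-1)n^{-1/3}\partial_T h+O(n^{-2/3})$, a power-series structure for $\ga_{n-j}$ in $n^{-1/3}$, and then claims that after $\binom{\m}{2}$ row reductions the numerator determinant collapses to the confluent Vandermonde $\det[(-\alpha_i+\tfrac{d}{dT})^{j-1}h(T;-\alpha_i;s)]$. But you never verify that the correction coefficients $r_{j,\ell}$ and the higher dimension-shift corrections combine into powers of a \emph{single} affine operator with the same proportionality constant in the $\alpha$-direction and in the $T$-direction. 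Your stated reason --- that ``the finite-$n$ quantities and the limiting objects obey the same algebraic relation, so the organization is forced'' --- is circular: it presupposes the limit has the claimed form, which is exactly what is to be proven. And the $\binom{\m}{2}$ nested near-cancellations amplify relative errors by $O(n^{1/3})$ each, so you would need to exhibit the required structure to order $n^{-(\m-1)/3}$ explicitly; this is not a matter of bookkeeping but is where the whole content of the proof lies. The paper supplies exactly this missing structure: the three-term recurrence provides the factors $(x-\redge)^{\ell}$, the Wronskian identity $B_{j,n}D_{j+1,n}-B_{j+1,n}D_{j,n}=\gamma_{n-j}/\gamma_{n-j-1}$ controls the subleading coefficients, and the Airy identities~\eqref{eq:Aipwer2} make the operator structure manifest. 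Without an analogue of these ingredients, your third step is an outline of what must be true rather than a derivation of it.
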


\bigskip

We now consider the case with potential $V$ such that $\acc <\frac{1}{2}V'(\redge)$. 
As usual, we assume $\acc\notin\mathcal{J}_V$. 
As in Theorem~\ref{thm:superjump} above, we also assume,  for the ease of statement,  that the support of the equilibrium measure of $V$ consists of a single interval. 
An analogue of Remark~\ref{rmk:multiinterval} applies to the multiple interval case.

Define (cf.~\eqref{eq:Mjindepn}), for $j\in \intZ$, 
\begin{equation}\label{eq:Mjindepntilde}
	\tilde{\M}_j(z)=  \sqrt{\frac{2}{\pi(\redge-\ledge)}} \frac{\gamma(z)-\gamma(z)^{-1}}{-2i}
	\bigg( \frac{\gamma(z)-\gamma(z)^{-1}}{\gamma(z)+\gamma(z)^{-1}} \bigg)^{-j}, 
	\qquad z\in \compC\setminus(-\infty, \redge].
\end{equation}
Note that $-i\tilde{\M}_{j}(x) > 0$ for $x \in (\redge, \infty)$. 
For $j = 0, 1, \cdots, \m$ and $a, b \in (\redge, \infty)$, we define, similar to \eqref{eq:superj1030}, the matrix
\begin{equation}\label{eq:superj1030-1}
  \frakP^{(b, j)}_{(a,\m-j)} := 
  \begin{bmatrix} 
    \tilde{\M}_1(a) & \tilde{\M}_1'(a) & \cdots & \tilde{\M}_1^{(\m-j-1)}(a)& \M_1(b) 
    & \M'_1(b) & \cdots & \M_1^{(j-1)}(b) \\
    \vdots& \vdots& \ddots& \vdots& \vdots& \vdots& \ddots& \vdots \\
    \tilde{\M}_\m(a) & \tilde{\M}_\m'(a) & \cdots & \tilde{\M}_\m^{(\m-j-1)}(a) & \M_\m(b) 
    & \M'_\m(b) & \cdots & \M_\m^{(j-1)}(b)  \\
  \end{bmatrix}.
\end{equation}
Also recall the matrix $\frakQ_{(0,\m-j),(c,j)}(\alpha_1, \cdots, \alpha_{\m})$ defined in~\eqref{eq:superj1040}

Recall the point $x_0(\acc)$ which is well defined when $\acc< \frac12 V'(\redge)$ from Section~\ref{subsec:assumptions_of_V_and_preliminary_notations}.
Note that $x_0(\acc)>c(\acc)>\redge$. 
The point $x_0(\acc)$ is unique when $\acc\notin\mathcal{J}_V$.

\begin{thm}[critical case 2: jump transition]\label{thm:critical2}
 Let $V$ be a potential such that $\acc< \frac12 V'(\redge)$. Assume that $\acc\notin\mathcal{J}_V$ and $\Gfn''(x_0(\acc); \acc)\neq 0$. Fix $m \in \{ 1, 2, \cdots, \m \}$ and suppose that 
\begin{equation}\label{eq:superj000-1}
	\aaa_k=  \acc- \tilde{q}_m\frac{\log (\tilde{K}_m n)}{n}+ \frac{\alpha_k}{n}, \quad k = 1, \cdots, \m,
\end{equation}
for distinct $\alpha_1 > \cdots > \alpha_\m$, 
where
\begin{equation}
  \tilde{q}_m := \frac{\m-2m+1}{x_0(\acc)-c(\acc)}
\end{equation}
and
\begin{equation}\label{eq:superj005-1}
	\tilde{K}_m := \left( \frac{(\m-m)!}{(m-1)!} \frac{(\Hfn''(c(\acc)))^{\m-m+1/2} }{(-\Gfn''(x_0(\acc)))^{m-1/2}} \right)^{\frac{1}{\m-2m+1}}.
\end{equation}
Assume the support of the equilibrium measure of $V$ consists of a single interval.
Then for any $T \in \realR$, as $n \to \infty$ we have the following.
\begin{enumerate}[label=(\alph*)]
\item
  
For $k = 1, \cdots, m-1$,
  \begin{equation}
    \Prob^{(k)}_n(\aaa_1, \cdots, \aaa_\m; \Intx(x_0(\acc))) = \tilde{\bfR}_m \erf^{(k)}_{m-1}(T) +(1-\tilde{\bfR}_m) \erf^{(k)}_m(T) + o(1).
  \end{equation}

\item
  For $k = m$,
  \begin{align}
    \Prob^{(k)}_n(\aaa_1, \cdots, \aaa_\m; \Intx(x_0(\acc))) = & \tilde{\bfR}_m +(1-\tilde{\bfR}_m) \erf^{(m)}_m(T) + o(1), \\
    \Prob^{(k)}_n(\aaa_1, \cdots, \aaa_\m; \Int) = & \tilde{\bfR}_m F_{\TW}(T) + o(1).
  \end{align}
\end{enumerate}
Here
\begin{equation}
  \tilde{\bfR}_{m}:= \frac{\det[\tilde{\bfP}_{m-1}] \det[\tilde{\bfQ}_{m-1}]}{\det[\tilde{\bfP}_{m-1}] \det[\tilde{\bfQ}_{m-1}] + \det[\tilde{\bfP}_m] \det[\tilde{\bfQ}_m]}, \label{eq:defn_of_tilde_R^L_i} 
\end{equation}
where  $\tilde{\bfP}_{\ell} := (-i)^{\m-\ell}\frakP^{(x_0(\acc),\ell)}_{(c(\acc),\m-\ell)}$ and $\tilde{\bfQ}_{\ell} := \frakQ_{(0, \m-\ell),(x_0(\acc)-c(\acc), \ell)}(\alpha_1, \cdots, \alpha_{\m})$. For each $\ell$, $\det[\tilde{\bfP}_{\ell}]\det[\tilde{\bfQ}_{\ell}] > 0$, and hence $ \tilde{\bfR}_m\in (0,1)$. The distribution functions $\erf^{(\ell)}_k(T)$ are defined in \eqref{eq:defination_of_distr_of_j_eigen_in_k_GUE}, and $F_{\TW}(T)$ is the Tracy--Widom distribution in \eqref{eq:defn_of_F_TW}.
\end{thm}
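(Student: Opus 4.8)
The plan is to mirror the proof of Theorem~\ref{thm:superjump}, the only structural change being that one of the two competing configurations now leaves the excess eigenvalues at the soft edge $\redge$ (Airy/Tracy--Widom behavior) instead of at a second pulled-off point. We start from the algebraic identity~\eqref{eq:alg} of Theorem~\ref{thm:alg}. Together with the standard no-source asymptotics $\mathcal{E}_{n,n}(\Int;s)\to F_{\TW}(T;s)$ and $\mathcal{E}_{n,n}(\Intx(x_0(\acc));s)\to 1$, this reduces the computation of the distributions $\Prob^{(k)}_n$ in the statement --- via the $s$-derivative relation~\eqref{eq:jthlarge}, the needed limits in $s$ being uniform near $1$ --- to the $n\to\infty$ behavior of the right-hand side of~\eqref{eq:alg}, a ratio of two $\m\times\m$ determinants built out of $\ga_{n-j}(\aaa_k;n)$ and $\bar{\mathcal{E}}_{n-j+1,n}(\aaa_k;E;s)$, for each of the two windows $E=\Intx(x_0(\acc))$ and $E=\Int$.

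The first step is to record the needed rank~$1$ asymptotics, extending \cite{Baik-Wang10a} so as to retain one subleading order. For $\aaa_k$ in the window~\eqref{eq:superj000-1}, $\ga_{n-j}(\aaa_k;n)$ is, to this accuracy, a sum of two comparable terms: a \emph{pulled-off} term, from the saddle point $x_0(\aaa_k)$ of $\Gfn(\cdot;\aaa_k)$, of order $n^{-1/2}e^{n\Gfn(x_0(\acc);\acc)}$ with prefactor built from $\M_j(x_0(\acc))$ and its derivatives; and an \emph{edge} term, from the saddle point $c(\aaa_k)$ of $\Hfn(\cdot;\aaa_k)$, of order $n^{-1/2}e^{n\Hfn(c(\acc);\acc)}$ with prefactor built from $\tilde\M_j(c(\acc))$ and its derivatives. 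The dependence on $j$ and on $\alpha_k$ is extracted by Taylor expanding $x_0(\aaa_k)$, $c(\aaa_k)$ and the exponents in $\aaa_k-\acc\sim\alpha_k/n$ (using $\frac{d}{da}\Gfn(x_0(a);a)=x_0(a)$ and $\frac{d}{da}\Hfn(c(a);a)=c(a)$), which produces the derivative columns of the $\frakP$-matrices together with the exponentials $e^{x_0(\acc)\alpha_k}$ and $e^{c(\acc)\alpha_k}$; the Gaussian widths at the two saddles contribute the factors $-\Gfn''(x_0(\acc))$ and $\Hfn''(c(\acc))$. Since $\Gfn(x_0(\acc);\acc)=\Hfn(c(\acc);\acc)$ at the critical value, the two exponential rates coincide, and the logarithmic shift $\tilde q_m\log(\tilde K_m n)/n$ in~\eqref{eq:superj000-1} is exactly what makes the two contributions balance in the $m$-th and $(m-1)$-th positions after the power-of-$n$ accounting below. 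Correspondingly, $\bar{\mathcal{E}}_{n-j+1,n}(\aaa_k;E;s)$ tends to a rescaled rank~$1$ GUE-with-source expectation (in the sense of~\eqref{eq:defination_of_spiked_GUE_dist_rank_k}) on whichever branch's limiting location sits at the scale of $E$, and to an explicit power of $(1-s)$ --- equal to $1$ when that branch's location lies below the window --- on the other branch.

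Next, substitute these two-term expansions into the numerator and denominator of~\eqref{eq:alg} and expand each $\m\times\m$ determinant by multilinearity in its columns $k=1,\dots,\m$. Most of the $2^\m$ resulting terms are negligible; the shift in~\eqref{eq:superj000-1} is tuned so that precisely the terms with $m$, or with $m-1$, columns of pulled-off type survive in the limit. Each survivor factors as a $\frakQ$-determinant in the $\alpha_k$ (from the column structure, with parameter $c=x_0(\acc)-c(\acc)$ arising as the ratio $e^{(x_0(\acc)-c(\acc))\alpha_k}$ of the two branch exponentials), a $\frakP$-determinant in the $\M_j,\tilde\M_j$ at $x_0(\acc)$ and $c(\acc)$ (from the row/degree structure $n-1,\dots,n-\m$), and explicit powers of $n$, $-\Gfn''(x_0(\acc))$ and $\Hfn''(c(\acc))$; requiring the powers of $n$ to match between numerator and denominator forces the exponents $\m-m+1/2$ and $m-1/2$ in~\eqref{eq:superj005-1} and the value of $\tilde q_m$. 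Collecting terms, the ratio becomes the convex combination of the $\ell=m-1$ and $\ell=m$ contributions with weight $\tilde{\bfR}_m$ as in~\eqref{eq:defn_of_tilde_R^L_i}; within each contribution the surviving $\bar{\mathcal{E}}_{n-j+1,n}$ factors --- rank~$1$ GUE-with-source expectations at dimensions $\ell,\ell-1,\dots$ and sources $\alpha_k$ --- reassemble through the same identity~\eqref{eq:alg} applied to the Gaussian potential into a rank-$\ell$ GUE expectation $\erf_\ell$ (with the ordinary Airy process, hence $F_{\TW}(T)$, appearing on the edge side when $E=\Int$), and the $s$-derivatives in~\eqref{eq:jthlarge} and~\eqref{eq:defination_of_distr_of_j_eigen_in_k_GUE} turn these into the $\erf^{(\cdot)}_\ell(T)$ of (a)--(c). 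Finally, $\det[\tilde\bfP_\ell]\det[\tilde\bfQ_\ell]>0$, hence $\tilde{\bfR}_m\in(0,1)$, follows from Proposition~\ref{prop:non-vanishing} together with $-i\tilde\M_j(x)>0$ on $(\redge,\infty)$ and the stated sign of $\det\frakQ_{(0,\m-\ell),(c,\ell)}$ for $c>0$ and $\alpha_1>\cdots>\alpha_\m$; the multiple-interval case follows as in Remark~\ref{rmk:multiinterval}, with $\M_j$ replaced by its $n$-dependent theta-function analogue, which stays in a compact set.

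The main obstacle will be this row/column reduction: singling out the dominant family among the $2^\m$ expansion terms, extracting the common growing and decaying factors from numerator and denominator so that the ratio has a finite limit, and verifying the exact cancellation of powers of $n$ --- which is simultaneously the computation that pins down $\tilde q_m$ and $\tilde K_m$. The preliminary task of upgrading the asymptotics of $\ga_{n-j}(\aaa_k;n)$ and $\bar{\mathcal{E}}_{n-j+1,n}(\aaa_k;E;s)$ to one subleading order, uniformly for $j=1,\dots,\m$ and for $\aaa_k$ in~\eqref{eq:superj000-1}, is a routine but lengthy extension of \cite{Baik-Wang10a}, slightly more delicate here since both saddle contributions, not only the dominant one, enter the final formula.
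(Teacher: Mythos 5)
Your plan coincides with the paper's: the authors explicitly skip the proof of Theorem~\ref{thm:critical2}, stating it is very close to that of Theorem~\ref{thm:superjump}, and your sketch is exactly that proof transplanted --- the two competing saddle contributions to $\ga_{n-j}(\aaa_k;n)$ now coming from $x_0(\acc)$ (via $\Gfn$ and $\M_j$) and $c(\acc)$ (via $\Hfn$ and $\tilde{\M}_j$) instead of $x_1(a)$ and $x_2(a)$, with only the $\ell=m-1$ and $\ell=m$ terms surviving the balance that fixes $\tilde{q}_m$ and $\tilde{K}_m$, and the positivity of $\det[\tilde{\bfP}_\ell]\det[\tilde{\bfQ}_\ell]$ coming from Proposition~\ref{prop:non-vanishing}. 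The only cosmetic differences are that the paper organizes the determinant expansion via \Andreief's formula and a Laplace evaluation that produces $\bfZ_\ell(T;s)/\bfZ_\ell=\erf_\ell(T;0,\cdots,0;s)$ directly (the sources $\alpha_k$ surviving only in the weights, not in the limiting fluctuations), rather than your column-by-column multilinear expansion and ``reassembly'' through the Gaussian case of~\eqref{eq:alg}.
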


Hence in this case, some eigenvalues are pulled off the edge of the equilibrium measure and fluctuate as the eigenvalues of Gaussian unitary ensemble. The largest eigenvalue which is not pulled off the edge of the equilibrium measure fluctuates as the Tracy--Widom distribution. 

It can also be shown that the $(m+j)$-th eigenvalue has either the Tracy--Widom $(j+1)$-th eigenvalue distribution, or the Tracy--Widom $j$-th eigenvalue distribution as its limiting distribution, with probability $\tilde{\bfR}_m$ and $1-\tilde{\bfR}_m$, respectively. 

The proof of this theorem is very close to the proof of Theorem~\ref{thm:superjump} and we skip it.

\bigskip

\begin{rmk}
In this paper we  state only limit theorems for individual top eigenvalues. The analysis of this paper can be modified to obtain the limit theorems of joint distribution of top eigenvalues. The result is what one would expect. We skip the detail.
\end{rmk}

\subsection{Comments on the proofs and organization of the paper}\label{sec:org}


We prove the asymptotic results in Section~\ref{sec:asym} based 
on Theorem~\ref{thm:alg} 
and extensions of the asymptotic results 
for the rank one case of \cite{Baik-Wang10a}. 
The latter are mostly routine and we do not give full details. 
In employing the above strategy to prove Theorems~\ref{thm:sub} and~\ref{thm:sup1}, 
we need to prove the limit of the determinant in the denominator of~\eqref{eq:alg}
is nonzero. 
This is done in Section~\ref {sec:proof_of_non_vanishing_property}. 
The proofs of Theorems~\ref{thm:sup2},~\ref{thm:superjump} and~\ref{thm:critical2} 
are more complicated since the denominator of~\eqref{eq:alg} converges to zero and hence 
we need to show that the numerator and the denominator have the same vanishing factors in their asymptotics. 
This requires careful linear algebraic manipulations.
After factoring out the vanishing term, the denominator converges to a certain determinant which 
we show again nonzero in Section~\ref {sec:proof_of_non_vanishing_property}. 
The proof of Theorem~\ref{thm:critical1} also follows this general strategy 
but we use a variation of Theorem \ref{thm:alg} and the analysis is more involved. 
We skip the proof of Theorem~\ref{thm:critical2} since it is very close to that of Theorem~\ref{thm:superjump}.

\medskip

The rest of the paper is organized as follows.
Theorems \ref{thm:sub}--\ref{thm:critical1} are proved in the Sections~\ref{sec:proof_of_sub_and_sup_1}--\ref{higher_rank_critical_case}, respectively. 
As mentioned in the above paragraph, in the proofs in these sections, we need to show that a certain determinant is nonzero. 
This is done in Section~\ref{sec:proof_of_non_vanishing_property} in a unifying way. 
The algebraic theorem, Theorem~\ref{thm:alg}, that relates the  higher rank case to the  rank $1$ case is proved in Section \ref{sec:higher}. 


\section{Proof of Theorem~\ref{thm:sub}: sub-critical case} \label{sec:proof_of_sub_and_sup_1}

Recall that the sub-critical case is when 
\begin{equation}
	\max\{\aaa_1, \cdots, \aaa_\m\} < \acc.
\end{equation}
We also assume that $\aaa_j$'s are all positive, distinct and fixed. 
In order to use Theorem \ref{thm:alg}, we need the asymptotics of $\bfGamma_{n-j}(\aaa_k;n)$ and $\bar{\cE}_{n-j+1, n}(\aaa_k; \Int; s)=\frac{\cE_{n-j+1,n}(\aaa_k;\Int;s)}{ \cE_{n-j+1}(\Int;s)}$. 

\subsubsection*{Asymptotics of $\bfGamma_{n-j}(\aaa_k;n)$}

From \cite[Formula (92)]{Baik-Wang10a} we have for all $a \in (0, \acc)$
\begin{equation} \label{eq:evaluation_of_bfGamma_sub}
  \bfGamma_{n-j}(a; n) = \tilde{C}(a) \tilde{\M}_{j,n}(c(a))(1 + o(1)),
\end{equation}
where
\begin{equation}
    \tilde{C}(a) := -i\sqrt{\frac{2\pi}{n}} e^{-n\ell/2} \frac{e^{n\Hfn(c(a);a)}}{\sqrt{\Hfn''(c(a);a)}}
\end{equation}
and $\tilde{\M}_{j,n}(z)$ is a generalization to $\tilde{\M}_j(z)$ in~\eqref{eq:Mjindepntilde} when the 
support of the equilibrium measure is a multi-interval (i.e. $N>0$). 
The function $\tilde{\M}_{j,n}(z)$ can be found in terms the solution to a global Riemann--Hilbert problem 
in the analysis of orthogonal polynomials and is given explicitly in terms of a Riemann theta function, see \cite[Formula (312)]{Baik-Wang10a}. 
Unless $N=0$, $\tilde{\M}_{j,n}(z)$ depends on $n$. However it is uniformly bounded in $n$, together with its derivatives, in any compact subset of $z$.

\subsubsection*{Asymptotics of $\bar{\cE}_{n-j+1, n}(\aaa_k; \Int; s)$}

Recall from the usual invariant ensemble theory  that (see \eg, \cite{Tracy-Widom98})
\begin{equation} \label{eq:the_generating_function_well_known}
  \cE_{n-j+1,n}(E; s) = \det(1 - s\chi_E K_{n-j+1, n} \chi_E),
\end{equation}
for any set $E\subset\realR$, 
where $K_{n-j+1, n}$ is the standard Christoffel--Darboux kernel for the weight $e^{-nV}$ (see \eg, \cite[Formula (69)]{Baik-Wang10a}). 
For $E=\Int$, the asymptotic result on invariant ensemble implies that 
$\chi_{\Int} K_{n-j+1, n} \chi_{\Int}$ converges to $\chi_{[T,\infty)}K_{\Airy}\chi_{[T,\infty)}$ in trace norm for each fixed $j$ (\cite{Deift-Kriecherbauer-McLaughlin-Venakides-Zhou99}, \cite{Deift-Gioev07a}, see also \cite[Corollary 6.3]{Baik-Wang10a}). Thus
\begin{equation} \label{eq:null_TW_prob}
  \begin{split}
     \det(1 - s\chi_{\Int}K_{n-j,n}\chi_{\Int}) 
    =  \det(1 - s\chi_{[T,\infty)}K_{\Airy}\chi_{[T,\infty)}) (1+ o(1))
    \end{split}
  \end{equation}
and by \eqref{eq:defn_of_TW_distr_s_version}
\begin{equation} \label{eq:null_TW_prob001}
     \cE_{n-j+1,n}(\Int;s) = F_{\TW}(T;s)(1+o(1)).
  \end{equation}
  
For the rank $1$ case, the analogue of~\eqref{eq:the_generating_function_well_known} is
(see \cite[Formula (73)]{Baik-Wang10a} (Only the $s=1$ case is given in \cite{Baik-Wang10a} but the same proof works for general $s\neq 1$.))
\begin{equation}
  \cE_{n-j+1,n}(a;E;s) = \det(1-s\chi_E \tilde{K}_{n-j+1,n}\chi_E)
\end{equation}
where $\tilde{K}_{n-j+1,n}=K_{n-j,n}+ \tilde{\psi}_{n-j}\otimes \psi_{n-j}$.
Here $\psi_{n-j}$ is the orthogonal polynomial times $e^{-\frac{n}2 V}$
and 
\begin{equation}
	\tilde{\psi}_{\ell} (x)= \tilde{\psi}_{\ell}(x;a;n) 
	:= \frac1{\bfGamma_\ell(a;n)} \bigg( e^{n(ax-V(x)/2)} - \int_{\R} K_{\ell, n}(x,y) e^{ay-V(y)} dy \bigg).
\end{equation}
This implies that, for $E\subset \R$, if $1-s\chi_E K_{n-j,n}\chi_E$ is invertible, 
\begin{multline} \label{eq:handy_formula_of_Prob_with_s}
  	\cE_{n-j+1,n}(a;E;s) = \det(1 - s\chi_E K_{n-j,n}\chi_E) [1 - s\langle \tilde{\psi}_{n-j}, \chi_E \psi_{n-j} \rangle \\
  - s^2 \langle (1-s\chi_E K_{n-j,n}\chi_E)^{-1} \chi_E K_{n-j,n} \chi_E \tilde{\psi}_{n-j}, \chi_E \psi_{n-j} \rangle].
\end{multline}
For $E=\Int$, $1-s\chi_E K_{n-j,n}\chi_E$ is invertible for all $s$ close enough to $s=1$. (This is because $\chi_E K_{n-j,n}\chi_E$ converges to $\chi_{[T, \infty)} K_{\Airy}\chi_{[T, \infty)}$ in operator norm when $E=\Int$ and 
since $\chi_{[T, \infty)} K_{\Airy}\chi_{[T, \infty)}$ has its spectrum in $[0,1)$. 
This appears in several places in the subsequence sections and we do not repeat this remark.)
When $E=\Int$ and $a\in (0,\aaa_c)$, the asymptotics of~\eqref{eq:handy_formula_of_Prob_with_s} for $s=1$ was obtained in \cite[Section 3.3]{Baik-Wang10a} 
by analyzing each term on the right-hand side. 
It was shown that both inner products are $O(n^{1/3})$ (see \cite[Formulas (128), (131) and (332)]{Baik-Wang10a}).
Hence from~\eqref{eq:null_TW_prob} 
we find that  
\begin{equation} \label{eq:gap_prob_of_ranl_1_sub}
   \cE_{n-j+1,n}(a;\Int;s)    = \det(1 - s\chi_{[T,\infty)}K_{\Airy}\chi_{[T,\infty)}) (1+ o(1)) = F_{\TW}(T;s)(1+o(1)).
\end{equation}
Combining \eqref{eq:null_TW_prob001} and \eqref{eq:gap_prob_of_ranl_1_sub} we obtain, for $a\in (0, \aaa_c)$,  
\begin{equation}\label{eq:gap_prob_of_ranl_1_sub_bar} 
  \bar{\cE}_{n-j+1,n}(a; \Int; s) = 1+o(1),
\end{equation}
uniformly in $s$ which is close to $1$.

\bigskip

Inserting \eqref{eq:gap_prob_of_ranl_1_sub_bar}, \eqref{eq:evaluation_of_bfGamma_sub} and~\eqref{eq:null_TW_prob001} into the formula~\eqref{eq:alg}, we obtain 
  \begin{equation} \label{eq:ratio_of_det_formula_sub}
      \cE_n(\aaa_1, \cdots, \aaa_{\m}; \Int; s) 
      = \frac{\det[ \tilde{\M}_{j,n}(c(\aaa_k))+ o(1) ]^{\m}_{j,k = 1}}{\det[ \tilde{\M}_{j,n}(c(\aaa_k)) + o(1) ]^{\m}_{j,k = 1}} F_{\TW}(T;s).
  \end{equation}
Since $\tilde{\M}_{j,n}(c(\aaa_k))$ are bounded uniformly in $n$ and the reciprocal of $\det[\tilde{\M}_{j,n}(c(\aaa_k))]$ is bounded uniformly in $n$ by Proposition \ref{prop:non-vanishing}\ref{enu:prop:non-vanishing:a}, 
the ratio of the two determinants in~\eqref{eq:ratio_of_det_formula_sub} is $1+o(1)$. Hence we obtain Theorem~\ref{thm:sub}.


\section{Proof of Theorem~\ref{thm:sup1}: super-critical case 1, separated external sources eigenvalues} \label{sec:proof_of_sub_and_sup_1.1}

Recall that we assume that there exists $p\in \{1, \cdots, \m\}$ such that the positive, distinct and fixed numbers
\begin{equation}\label{eq:sup1Pr}
\begin{split}
	&\text{$\aaa_j>\acc$ for $j=1, \cdots, p$,} \\
	&\text{$\aaa_j<\acc$ for $j=p+1, \cdots, \m$. }
\end{split}
\end{equation}
We assume, without loss of generality, that 
$\aaa_1 > \aaa_2 > \cdots > \aaa_p$. 
Furthermore, we assume that $\aaa_j \notin \mathcal{J}_V$ and $\Gfn''(x_0(\aaa_j)) \neq 0$ for each $j = 1, \cdots, p$, 
where $x_0(a)$ is defined in the paragraph between~\eqref{eq:mathcalJV} and~\eqref{eq:condition_of_V_4}  in Section~\ref{subsec:assumptions_of_V_and_preliminary_notations}.

To use Theorem~\ref{thm:alg}, we need the asymptotics of $\bfGamma_{n-j}(\aaa_k;n)$, $\cE_{n-j+1,n}(\aaa_k;E;s)$, and $\cE_{n-j+1}(E;s)$ 
for $E=\Int$ and $E=\Intx(x_*)$ with $x_*>\redge$. 

\subsubsection*{Asymptotics of $\bfGamma_{n-j}(a;n)$}

By Baik and Wang \cite[Formula (93)]{Baik-Wang10a}, we find that if $\acc < a < \frac{1}{2}V'(\redge)$ and $a \not\in \mathcal{J}_V$, then
  \begin{equation} \label{eq:evaluation_of_bfGamma_super}
    \bfGamma_{n-j}(a;n) = C(a) \M_{j,n}(x_0(a)) (1 + o(1)),
  \end{equation}
  where
  \begin{equation} \label{eq:evaluation_of_bfGamma_super_C}
    C(a) = \sqrt{\frac{2 \pi}{n}} e^{-n\ell/2}\frac{e^{n \Gfn(x_0(a);a)}}{\sqrt{-\Gfn''(x_0(a);a)}}
  \end{equation}
and $\M_{j,n}(z)$ is a generalization of $\M_j(z)$ in~\eqref{eq:Mjindepn} when the 
support of the equilibrium measure is multi-interval (\ie, $N>0$). 
This again can be expressed explicitly in terms of a Riemann theta function;
see \cite[Formula (311)]{Baik-Wang10a}. 
Note that $\M_{j,n}(z)$ depends on $n$ but is uniformly bounded in $n$, together with its derivatives, in any compact subset in $z$. 
For $a>\frac12 V'(\redge)$, we have the asymptotics \cite[Formula (188)]{Baik-Wang10a} of $\bfGamma_{n-j}(a;n)$ which is an intermediate step toward the formula~\eqref{eq:evaluation_of_bfGamma_super}.
It is easy to further compute the formula \cite[Formula (188)]{Baik-Wang10a} asymptotically (using the asymptotics of $\varphi_{n-j}(y)$) and we find that~\eqref{eq:evaluation_of_bfGamma_super} also holds for $a>\frac12 V'(\redge)$. 
The same applies to the case when $a= \frac12 V'(\redge) > \acc$ from the remark in the first paragraph of Baik and Wang \cite[Section 5]{Baik-Wang10a}. In conclusion, the formula \eqref{eq:evaluation_of_bfGamma_super} is  valid for all $a > \acc$ and $a \not\in \mathcal{J}_V$.

\subsubsection*{Asymptotics of $\cE_{n-j+1,n}(\aaa_k;\Intx(x_*);s)$ and $\cE_{n-j+1}(\Intx(x_*);s)$ for $x_*>\redge$.}

Since $x_*>\redge$, the Christoffel--Darboux kernel restricted on $\Intx(x_*)$ converges to $0$ rapidly (see \eg\ \cite[Formula (346)]{Baik-Wang10a}.) Hence we find from~\eqref{eq:the_generating_function_well_known} that 
\begin{equation} \label{eq:null_TW_prob_stretched001}
  	\cE_{n-j+1, n}(\Intx(x_*);s)  = 1 + o(1)
\end{equation}
for all $x_*>\redge$, $T \in \realR$ and $s$ close to $1$.

We now evaluate $\cE_{n-j+1,n}(a; \Intx(x_*); s)$ for $a= \aaa_j$. 
From the assumption~\eqref{eq:sup1Pr}, there are two cases. 
The first is when $a>\acc$ and $a\notin\mathcal{J}_V$ and the second is when $a<\acc$.
The formula~\eqref{eq:handy_formula_of_Prob_with_s} is the starting point.

Let $a>\acc$ and satisfy $a\notin\mathcal{J}_V$. 
For $x_*=x_0(a)$, 
the asymptotics \cite[Formula (137)]{Baik-Wang10a} 
implies that $\langle \tilde{\psi}_{n-j}, \chi_{\Intx(x_0(a))} \psi_{n-j} \rangle$  
equals $1-\erf(T)+o(1)$ where $\erf(T)$ is the \cdf\ of the normal distribution~\eqref{eq:erfdef}. 
The estimates \cite[Formulas (139) and (333)]{Baik-Wang10a} implies that the other inner product is $o(1)$. 
Therefore, we find that 
\begin{equation} \label{eq:gap_prob_of_ranl_1_super}
  	\cE_{n-j+1,n}(a; \Intx(x_0(a)); s) = 1 - s(1-\erf(T)) + o(1)
\end{equation}
uniformly in $s$ close to $1$.
The estimates \cite[Formulas (137) and (139)]{Baik-Wang10a} can be extended straightforwardly to the set $\Intx(x_*)$ for $x_*$ not equal to $x_0(a)$ but still in $(\redge, \infty)$. 
Hence we obtain 
\begin{equation} \label{eq:gap_prob_of_ranl_1_super001}
  	\cE_{n-j+1,n}(a; \Intx(x_*); s) = 
	\begin{cases} 1  + o(1), \qquad &x_*>x_0(a), \\
	1-s+o(1), \qquad  &x_*\in (\redge, x_0(a)),
	\end{cases}
\end{equation}
uniformly in $s$ close to $1$. This is what is expected from~\eqref{eq:gap_prob_of_ranl_1_super} by taking $T=\infty$ for the first case and taking $T=-\infty$ the second case. 
Recall that these asymptotics are for $a>\acc$ such that $a\notin\mathcal{J}_V$.
The asymptotics~\eqref{eq:gap_prob_of_ranl_1_super} and \eqref{eq:gap_prob_of_ranl_1_super001} apply to $\aaa_1, \cdots, \aaa_p$.

On the other hand, for $a<\acc$, 
an estimate similar to~\eqref{eq:gap_prob_of_ranl_1_sub}  implies that 
(note that $\Intx(x_*)\subset\Int$ for any $x_*>\redge$,~\eqref{eq:gap_prob_of_ranl_1_sub} and $F_{\TW}(T;s) \to 1$ as $T \to \infty$) 
\begin{equation} \label{eq:easy_estimate_by_monoticity}
  \cE_{n-j+1,n}(a;\Intx(x_*);s) = 1 + o(1)
\end{equation}
for $s$ close to $1$. 
This asymptotics applies to $\aaa_{p+1}, \cdots, \aaa_\m$.


\bigskip

Now inserting the asymptotics~\eqref{eq:evaluation_of_bfGamma_sub},~\eqref{eq:evaluation_of_bfGamma_super},~\eqref{eq:null_TW_prob_stretched001},~\eqref{eq:gap_prob_of_ranl_1_super},~\eqref{eq:gap_prob_of_ranl_1_super001} and~\eqref{eq:easy_estimate_by_monoticity} 
into~\eqref{eq:alg}, we obtain, for each $k=1, \cdots, p$,  
\begin{equation} \label{eq:ratio_of_det_formula_super_1}
  	\cE_n(\aaa_1, \cdots, \aaa_{\m}; \Intx(x_0(\aaa_k)); s) 
  	= \frac{(1-s+s\erf(T)) (1-s)^{k-1} 
	\det[ \frakP + o(1)]}{\det[ \frakP + o(1)]},
\end{equation}
where
\begin{equation} \label{eq:defn_of_frakP_simple-super1}
	\frakP
	:= \begin{bmatrix}
	\M_{1,n}(x_0(\aaa_1)) &\cdots & \M_{1,n}(x_0(\aaa_p)) &
	\tilde{\M}_{1,n}(c(\aaa_{p+1})) &\cdots & \tilde{\M}_{1,n}(c(\aaa_{\m}))  \\ 
	\vdots & \ddots & \vdots & \vdots & \ddots &\vdots \\
	\M_{\m,n}(x_0(\aaa_1)) &\cdots & \M_{\m,n}(x_0(\aaa_p)) &\tilde{\M}_{\m,n}(c(\aaa_{p+1})) &\cdots & \tilde{\M}_{\m,n}(c(\aaa_{\m}))  \\ 
	\end{bmatrix}.
\end{equation}
The matrix $\frakP$ is the $\m \times \m$ matrix defined in~\eqref{eq:defn_of_frakP_simple} up to column changes
and hence the reciprocal of $\det[\frakP]$ is bounded uniformly in $n$ by Proposition \ref{prop:non-vanishing}\ref{enu:prop:non-vanishing:a}. 
Since the entries of $\frakP$ are bounded uniformly in $n$, 
we find that 
\begin{equation} \label{eq:genrating_function_separate_ext_eigenvalue_case}
  \cE_n(\aaa_1, \cdots, \aaa_{\m}; \Intx(x_0(\aaa_k)); s) = (1-s+s\erf(T)) (1-s)^{k-1} + o(1)
\end{equation}
uniformly in $s$ close to $1$. 

For each $j \geq 0$,
 \begin{equation} \label{eq:derivative_formula_of_genrating_function_separate_ext_eigenvalue_case}
   \left. \frac{(-1)^j}{j!} \frac{d^j}{ds^j}\right\rvert_{s=1} (1-s+s\erf(T))(1-s)^{k-1}  =
   \begin{cases}
     0 & \textnormal{if $j \neq k-1, k$,} \\
     \erf(T) & \textnormal{if $j = k-1$,} \\
     1 - \erf(T) & \textnormal{if $j = k$.} 
   \end{cases}
 \end{equation}
Since the left-hand side of \eqref{eq:genrating_function_separate_ext_eigenvalue_case} is analytic in $s$, we obtain~\eqref{eq:thmsup1} by taking derivatives and using~\eqref{eq:jthlarge}.

\subsubsection*{Asymptotics of $\cE_{n-j+1,n}(\aaa_k;\Int;s)$ and $\cE_{n-j+1}(\Int;s)$ }

To prove~\eqref{eq:thmsup2}, we repeat the above computation with $\Intx(x_*)$ replaced by 
$\Int$. 
Let $a>\acc$ and assume $a \notin \mathcal{J}_V$.
Using asymptotics \cite[Formulas (330) and (331)]{Baik-Wang10a} of $\psi$, asymptotics \cite[Formulas (106) and (135)]{Baik-Wang10a} of $\tilde{\psi}$ and asymptotics \cite[Corollary 6.3]{Baik-Wang10a} of $K_{n-j,n}$, by \eqref{eq:handy_formula_of_Prob_with_s} we have that for $s$ close to $1$ (cf.~\eqref{eq:gap_prob_of_ranl_1_super001} with $x_*<x_0(\redge)$)
\begin{equation}
  \cE_{n-j+1,n}(a; \Int; s) = \det(1 - s\chi_{\Int} K_{n-j,n}\chi_{\Int}) (1 - s + o(1)) = F_{\TW}(T;s)  (1 - s + o(1)).
\end{equation}
Hence by \eqref{eq:null_TW_prob001} we have for $s$ close to $1$
\begin{equation} \label{eq:bar_cE_super_but_near}
  \bar{\cE}_{n-j+1,n}(a; \Int; s)=1-s+o(1).
\end{equation}
This asymptotics and \eqref{eq:evaluation_of_bfGamma_super} are for $a=\aaa_1, \cdots, \aaa_p$. 

For $a=\aaa_{p+1}, \cdots, \aaa_\m$ which are all less than $\acc$, we can use the asymptotics~\eqref{eq:evaluation_of_bfGamma_sub} and~\eqref{eq:gap_prob_of_ranl_1_sub_bar}. 

Inserting them into~\eqref{eq:alg}, we obtain for $s$ close to $1$
\begin{equation} \label{eq:ratio_of_det_formula_super_1_123}
 \cE_n(\aaa_1, \cdots, \aaa_{\m}; \Int; s) 
  = \frac{ (1-s)^{p} \det[ \frakP  + o(1)]}{\det[ \frakP + o(1)]} F_{\TW}(T;s) 
  =  (1-s)^p F_{\TW}(T;s) + o(1)
 \end{equation}
where $\frakP$ is same as~\eqref{eq:defn_of_frakP_simple-super1}.
We obtain~\eqref{eq:thmsup2} by taking derivatives on both sides of \eqref{eq:ratio_of_det_formula_super_1_123}.


\section{Proof of Theorem~\ref{thm:sup2}: super-critical case 2, clustered external source eigenvalues}\label{subsection:multiple_interwining}

Let $a$ be a fixed number such that $a > \acc$ and $a \notin \mathcal{J}_V$. 
Recall the definition $x_0(a)$ given in the paragraph between~\eqref{eq:mathcalJV} and~\eqref{eq:condition_of_V_4}.
As usual we assume that $\Gfn''(x_0(a)) \neq 0$. Set 
\begin{equation}\label{eq:aaaksca100}
	\aaa_k = a+ \sqrt{-\Gfn''(x_0(a))}\frac{\alpha_k}{\sqrt{n}} , \qquad k=1, \cdots, \m,
\end{equation}
for fixed distinct $\alpha_1, \cdots, \alpha_\m$. 
To prove Theorem~\ref{thm:sup2}, we first evaluate the denominator of~\eqref{eq:alg} asymptotically and then the numerator
when $E=\Intx(x_0(a))$.

\subsection{Evaluation of $\det\big[ \bfGamma_{n-j}(\aaa_k;n)\big]_{j,k=1}^{\m}$}\label{sec:sup2Ga}

The asymptotics of $\bfGamma_{n-j}(a)$ were evaluated in \cite[Formula (93)]{Baik-Wang10a} when $a$ is a constant. It is easy to see from the proof that \cite[Formula (93)]{Baik-Wang10a}  is indeed uniform for $a$ in a compact subset of $(\acc, \infty)$ which is especially applicable when $\aaa_k$ given by~\eqref{eq:aaaksca100}.
It is clear that the leading-order asymptotics of $\bfGamma_{n-j}(\aaa_k)$ are same for all $k=1, \cdots, \m$. 
This implies that the determinant $\det[ \bfGamma_{n-j}(\aaa_k) ]_{j,k=1}^{\m}$ converges to zero.  
Therefore, we need to evaluate the sub-leading terms in the asymptotics of $\bfGamma_{n-j}(\aaa_k)$ in order to determine the asymptotics of $\det[ \bfGamma_{n-j}(\aaa_k) ]_{j,k=1}^{\m}$.

The formula of $\bfGamma_{n-j}(\aaa_k)$ in~\eqref{eq:gadef} is in terms of an integral over $\R$. 
This integral can be written as a sum of two integrals, one over a contour $\Gamma_{\pm}$ in a complex plane and the other over the segment $(c, \infty)$ for any constant $c>\redge$ (see \cite[Formula (85)]{Baik-Wang10a}).
For the case at hand, the integral over the contours $\Gamma_{\pm}$ was shown to be 
exponentially smaller than the integral over $(c,\infty)$, and
the main contribution to the integral over $(c, \infty)$ comes from a small neighborhood of the critical point $x=x_0(a)$. 
Hence from \cite[Formula (90)]{Baik-Wang10a} we have, for any $\epsilon>0$,
\begin{equation}\label{eq:Gammanjak}
\begin{split}
	\bfGamma_{n-j}(\aaa_k)
	&= e^{-n\ell/2} \int_{x_0(a)-\epsilon}^{x_0(a)+\epsilon} M_{j,n}(y)e^{n\Gfn(y;\aaa_k)}dy (1+O(e^{-\delta n}))
\end{split}
\end{equation}
for some $\delta>0$. 
Here $\Gfn(y;a)$ is the function defined in~\eqref{eq:definition_of_GH} and 
$M_{j,n}(z)$ is an analytic function in a neighborhood of $z=x_0(a)$.

By Taylor expansion, 
\begin{equation}\label{eq:Mjerrorz}
	M_{j,n}(z)= \sum_{i=1}^{\m} \frac1{(i-1)!} M_{j,n}^{(i-1)}(x_0(a)) (z-x_0(a))^{i-1} + O(|z-x_0(a)|^{\m})
\end{equation}
uniformly for $z$ in a neighborhood of $x_0(a)$, where $M_{j,n}^{(i-1)}$ is the $(i-1)$th derivative.
As $n\to\infty$, 
$M_{j,n}(z)=\M_{j,n}(z)(1+O(n^{-1}))$ uniformly in $z$ in the same neighborhood
for another analytic function $\M_{j,n}(z)$
which depends on quasi-periodically in $n$. (See \cite[(319)]{Baik-Wang10a}. 
This $\M_{j,n}(z)$ is the same $\M_{j,n}(z)$ appearing in \eqref{eq:evaluation_of_bfGamma_super}.) 
A key property for our purpose is that a certain determinant involving $\M_{j,n}$ and its derivatives is nonzero, which is proved in Proposition~\ref{prop:non-vanishing}\ref{enu:prop:non-vanishing:b} later.
This is used when we consider $\det[\hat{\calP}]$ and $\det[\calP]$ below. 
Note that 
\begin{equation}\label{eq:Mjerrorz0}
	M_{j,n}^{(i-1)}(x_0(a))= \M_{j,n}^{(i-1)}(x_0(a)) (1+O(n^{-1}))
\end{equation}
since both functions are analytic. 
Each of $M_{j,n}^{(i-1)}(x_0(a))$ and $\M_{j,n}^{(i-1)}(x_0(a))$ are $O(1)$. 

Inserting~\eqref{eq:Mjerrorz} into~\eqref{eq:Gammanjak}, we find
\begin{equation}\label{eq:Gammanjak1}
  \bfGamma_{n-j}(\aaa_k) = e^{n\Gfn(x_0(a); \aaa_k) - n\ell/2} 
  \left( \sum_{i=1}^{\m} \frac{\left( -n\Gfn''(x_0(a); a) \right)^{-i/2}}{(i-1)!}  M^{(i-1)}_{j,n}(x_0(a)) Q(i,k) + n^{-\frac{\m+1}{2}} R(j,k) \right),
\end{equation}
where
\begin{equation}\label{eq:Gammanjak1.1}
  Q(i,k):= \left( -n\Gfn''(x_0(a); a) \right)^{i/2} \int_{x_0(a)-\epsilon}^{x_0(a)+\epsilon} (y-x_0(a))^{i-1} e^{n(\Gfn(y;\aaa_k) - \Gfn(x_0(a); \aaa_k))} dy  
\end{equation}
and
\begin{multline}\label{eq:Gammanjak1.2}
	R(j,k) = O\bigg( n^{\frac{\m+1}{2}} \int_{x_0(a)-\epsilon}^{x_0(a)+\epsilon} |y-x_0(a)|^\m e^{n(\Gfn(y;\aaa_k) - \Gfn(x_0(a); \aaa_k))} dy \bigg) \\
	+ \int_{x_0(a)-\epsilon}^{x_0(a)+\epsilon} M_{j,n}(y)e^{n(\Gfn(y;\aaa_k) - \Gfn(x_0(a); \aaa_k))} dy \cdot O(n^{\frac{\m+1}{2}}  e^{-\delta n}).
\end{multline}
From the definition of $\Gfn$ and~\eqref{eq:aaaksca100}, 
\begin{equation}\label{eq:Gammanjak1.3}
  \Gfn(y;\aaa_k)=\Gfn(y;a) + \alpha_k y \sqrt{\frac{-\Gfn''(x_0(a);a)}{n}}.
\end{equation}
Hence Laplace's method yields 
\begin{equation}\label{eq:Gammanjak3}
  	Q(i,k) = \int_{-\infty}^\infty \xi^{i-1} e^{-\frac12 \xi^2 +\alpha_k\xi} d\xi \big(1+o(1)\big).
\end{equation}
Similarly, we find 
\begin{equation}\label{eq:Gammanjak1.2}
	R(j,k) = O\bigg( n^{\frac{\m+1}{2}} \int_{x_0(a)-\epsilon}^{x_0(a)+\epsilon} |y-x_0(a)|^\m e^{n\Gfn(y;\aaa_k)}dy \bigg) = O(1).
\end{equation}

Denote the $\m \times \m$ matrices
\begin{equation}
  \hat{\calP} = \left[ M^{(i-1)}_{j,n}(x_0(a)) \right]^{\m}_{j,i = 1}, \quad \hat{\calQ} = \left[ Q(i,k) \right] ^{\m}_{i,k = 1}, \quad \calR = \left[ R(j,k) \right]^{\m}_{j,k = 1}.
\end{equation}
Note that all entries of these matrices are $O(1)$. 
We also set $\calN$ to be an $\m \times \m$ diagonal matrix with entries 
\begin{equation}
  (\calN)_{ii} = \frac{1}{(i-1)!} \left( -n\Gfn''(x_0(a); a) \right)^{-i/2}.
\end{equation}
 From \eqref{eq:Gammanjak1}, we have
\begin{equation}\label{eq:invhacaQP}
\begin{split}
  \det \left[ \ga_{n-j}(\aaa_k;n) \right]^\m_{j,k=1} 
  = & \left( \prod^{\m}_{k=1} e^{n\Gfn(x_0(a); \aaa_k) - n\ell/2} \right) \det[ \hat{\calP} \calN \hat{\calQ} + n^{-\frac{\m+1}{2}}\calR ] \\
  = & \left( \prod^{\m}_{k=1} e^{n\Gfn(x_0(a); \aaa_k) - n\ell/2} \det[ \calN ] \right) \det[ \hat{\calP} \hat{\calQ} + n^{-\frac{\m+1}{2}} \hat{\calP}\calN^{-1}\hat{\calP}^{-1} \calR ]
\end{split}
\end{equation}
if $\hat{\calP}$ is invertible.

We now replace $\hat{\calP}$ and $\hat{\calQ}$ by matrices with entries given by the leading terms given in~\eqref{eq:Mjerrorz0} and~\eqref{eq:Gammanjak3}.
Define the $\m \times \m$ matrix
\begin{equation}\label{eq:calPPa}
  	\calP = \left[ \M^{(i-1)}_{j,n}(x_0(a)) \right]^{\m}_{j,i = 1}.
\end{equation}
The entries of this matrix are the leading term of the entries of the matrix $\hat{\calP}$ (see~\eqref{eq:Mjerrorz0}).
From the general result Proposition~\ref{prop:non-vanishing}\ref{enu:prop:non-vanishing:b} 
and  by noting that 
$\calP=\frakP^{(x_0(a), \m)}$ in the notation of Section~\ref{sec:proof_of_non_vanishing_property},
we find that
$\det[\calP]$ is nonzero.
Moreover, $1/\det[\calP]$, which depends on $n$, is uniformly bounded. 
This nonvanishing property is easy to check directly using the formula~\eqref{eq:Mjindepn} when $N=0$ but is complicated when $N>0$. 
Now as $\hat{\calP}= \calP + o(1)$, we find that all entries of $\hat{\calP}^{-1}$ are $O(1)$. 
Hence noting that the explicit dependence on $n$ of $\calN$, we find that 
all entries of $n^{-\frac{\m+1}{2}} \hat{\calP}\calN^{-1}\hat{\calP}^{-1} \calR$ are $O(n^{-1/2})$. 

We also define
\begin{equation}\label{eq:calPPa-1}
	\calQ = \left[ \int_{-\infty}^\infty \xi^{i-1} e^{-\frac12 \xi^2 +\alpha_k\xi} d\xi \right]^{\m}_{i,k=1}.
\end{equation}
Then $\hat{\calQ} = \calQ + o(1)$. 
Therefore, we find from~\eqref{eq:invhacaQP} that 
\begin{equation} \label{eq:det_of_bfGamma_super_cluster}
  \det \left[ \ga_{n-j}(\aaa_k;n) \right]^\m_{j,k=1} = \prod^{\m}_{k=1} e^{n\Gfn(x_0(a); \aaa_k) - n\ell/2} \det[ \calN ](\det[ \calP ]\det[ \calQ ] + o(1)).
\end{equation}
It is straightforward to check that 
\begin{equation}
  \det[ \calQ ] = \int_{\R^n} \det [ e^{\alpha_k\xi_j}]  \prod_{i<j} (\xi_j-\xi_i) \prod_{j=1}^n e^{-\frac12 \xi_j^2} d\xi_1 \cdots d\xi_{\m}
\end{equation}
and this is nonzero when all $\alpha_k$'s are distinct.

\subsection{Evaluation of $\det\big[ \bfGamma_{n-j}(\aaa_k;n) \bar{\cE}_{n-j+1,n} (\aaa_k;\Intx(x_0(a));s)\big]_{j,k=1}^{\m}$}  
\label{sec:sup2e}

Note that since $\aaa_k> \acc$, the asymptotics~\eqref{eq:null_TW_prob_stretched001} applies. 
Hence from the definition~\eqref{eq:defn_of_bar_E_a_dots_E_s_beginning} of $\bar{\cE}$,
we find that 
\begin{equation} \label{eq:supersim}
  \begin{split}
    & \det \left[ \bfGamma_{n-j}(\aaa_k) \bar{\cE}_{n-j+1,n} (\aaa_k; \Intx(x_0(a)); s) \right]_{j,k=1}^{\m} \\
    &= \det \left[ \bfGamma_{n-j}(\aaa_k) \frac{\cE_{n-j+1,n} (\aaa_k; \Intx(x_0(a)); s)}{\cE_{n-j,n}(\Intx(x_0(a)); s)} \right]_{j,k=1}^{\m} 
    \prod_{j=1}^{\m} \frac{\cE_{n-j,n}(\Intx(x_0(a)); s)}{\cE_{n-j+1,n}(\Intx(x_0(a)); s)} \\
    &= \det \left[ \bfGamma_{n-j}(\aaa_k) \frac{\cE_{n-j+1,n} (\aaa_k; \Intx(x_0(a)); s)}{\cE_{n-j,n}(\Intx(x_0(a)); s)} \right]_{j,k=1}^{\m} (1 + o(1)).
  \end{split}
\end{equation}
We focus on the new determinant. 
As in the previous subsection, the determinant converges to zero and hence we need to find the leading asymptotics.


From \eqref{eq:handy_formula_of_Prob_with_s}, and \eqref{eq:the_generating_function_well_known}, we find
\begin{multline} \label{eq:algebraic_trans_form_of_Prob(aaa_k)}
    \frac{\cE_{n-j+1, n}(\aaa_k; E; s) }{\cE_{n-j,n}(E; s)}= 
    1 - s\langle \tilde{\psi}_{n-j}(x;\aaa_k;n), \psi_{n-j} \rangle_{E} \\
    - s^2\langle (1 - s\chi_{E} K_{n-j,n} \chi_{E})^{-1} 
\chi_{E} K_{n-j,n} \chi_{E} \tilde{\psi}_{n-j}(x;\aaa_k;n), \psi_{n-j} \rangle_{E}, 
\end{multline}
with $E=\Intx(x_0(a))$. 
Note that the interval $E=\Intx(x_0(a))$ is associated to $a$ and $\tilde{\psi}_{n-j}$ is associated to $\aaa_k$. 
When $a$ and $\aaa_k$ are the same and larger than $\redge$, the second inner product was shown to be exponentially small in  \cite{Baik-Wang10a}: precisely in \cite[Formula (139)]{Baik-Wang10a} assuming $a< \frac12V'(\redge)$ and in \cite[Sections 4 and 5]{Baik-Wang10a} when $a>\frac12 V'(\redge)$ and $a=\frac12 V'(\redge)$.
In our case, $\aaa_k$ and $a$ are different, but note that the difference is $O(n^{-1/2})$ by assumption~\eqref{eq:aaaksca100}. 
This corresponds to a small change in the domain of the inner product which does not change the exponential decay of the inner product. 
Thus we have
\begin{equation} \label{eq:complicated_term_equal_e^delta_prime}
  \frac{\cE_{n-j+1, n}(\aaa_k; \Intx(x_0(a)); s)}{\cE_{n-j,n}(\Intx(x_0(a)); s)} = 
  1 - s\langle \tilde{\psi}_{n-j}(x;\aaa_k;n), 
  \psi_{n-j}(x;n) \rangle_{\Intx(x_0(a))} + O(e^{-\delta' n}).
\end{equation}
for some $\delta'>0$ uniformly for $s$ close to $1$.

Now we evaluate the remaining inner product in~\eqref{eq:complicated_term_equal_e^delta_prime}.
We actually evaluate the inner product multiplied by $\bfGamma_{n-j}(\aaa_k;n)$, which is 
what we need in view of~\eqref{eq:algebraic_trans_form_of_Prob(aaa_k)}.
The leading-order asymptotics of this quantity was evaluated in \cite[Section 3.4]{Baik-Wang10a}.
Here we need the sub-leading terms
and this follows from a simple extension of the analysis for the leading term as follows. 
First, for all $x \in \Intx(x_0(a))$, we have from \cite[Formulas (106) and (330)]{Baik-Wang10a}
that 
\begin{equation} \label{eq:exponential_approx_of_tilde_psi_psi}
  \bfGamma_{n-j}(\aaa_k;n) \tilde{\psi}_{n-j}(x;\aaa_k;n)\psi_{n-j}(x;n) 
  = e^{-n\ell/2}M_{j,n}(x)e^{n\Gfn(x;\aaa_k)} (1 + O(e^{-\delta''n})).
\end{equation}
for some $\delta'' > 0$. 
This is same as \cite[Formula (136)]{Baik-Wang10a} (after substituting the asymptotics of $\bfGamma_{n-j}$) where the error term is  written only as $o(1)$ instead of an exponentially small term. 
Recalling that $\Gfn(x;a)$, $x\in (\redge, \infty)$, takes its unique maximum at $x=a$ by the assumption $a>\acc$ and $a\notin\mathcal{J}_V$, 
and noting that  $\Gfn(x;\aaa_k)$ is close to $\Gfn(x;a)$ (see~\eqref{eq:Gammanjak1.3}), we find that 
for any $\epsilon > 0$
\begin{equation}\label{eq:Inner_product_of_tilde_psi_psi_bfGamma}
  \begin{split}
    	& \bfGamma_{n-j}(\aaa_k;n) \langle \tilde{\psi}_{n-j}(x;\aaa_k;n), 
  \psi_{n-j}(x;n) \rangle_{\Intx(x_0(a))}  \\
     &= \left( e^{-n\ell/2} \int_{\Intx(x_0(a))} M_{j,n}(y)e^{n\Gfn(y;\aaa_k)}dy \right) (1+O(e^{-\delta'' n})) \\
     &= \left( e^{-n\ell/2} \int_{E_{T, \epsilon}(x_0(a))} M_{j,n}(y)e^{n\Gfn(y;\aaa_k)}dy ) \right) (1+O(e^{-\delta''' n})) 
  \end{split}
\end{equation}
for some $\delta''' > 0$
where $E_{T, \epsilon}(x_0(a))$ is the interval 
\begin{equation}
	E_{T, \epsilon}(x_0(a)) := 
	\left(x_0(a) + \frac{T}{\sqrt{-\Gfn''(x_0(a))n}} ,\ x_0(a)+\epsilon \right).
\end{equation}

We now find from~\eqref{eq:complicated_term_equal_e^delta_prime},~\eqref{eq:Gammanjak}, and~\eqref{eq:Inner_product_of_tilde_psi_psi_bfGamma} that 
\begin{multline} \label{eq:Gammanjak11}
  \bfGamma_{n-j}(\aaa_k) \frac{\cE_{n-j+1,n} (\aaa_k; \Intx(x_0(a)); s)}{\cE_{n-j,n}(\Intx(x_0(a)); s)} =  \\
  e^{-n\ell/2} \int^{x_0(a)+\epsilon}_{x_0(a)-\epsilon} M_{j,n}(y)e^{n\Gfn(y;\aaa_k)} ( 1 - s\chi_{E_{T, \epsilon}(x_0(a))}(y)) dy (1+O(e^{-\tilde{\delta} n})),
\end{multline}
where $\tilde{\delta} = \min\{ \delta, \delta''' \} $. 
This formula is completely analogous to~\eqref{eq:Gammanjak} in the previous subsection
except that there is the term $( 1 - s\chi_{E_{T, \epsilon}(x_0(a))}(y))$ in the integrand. 
We can now proceed exactly as in the previous subsection to evaluate the determinant~\eqref{eq:supersim}. 
The new term in the integrand only changes that $Q(i,k)$ in~\eqref{eq:Gammanjak1.1} contains 
the term $(1 - s\chi_{(T, \infty)}(\xi))$ in the integrand. 
Therefore, we obtain, similarly to~\eqref{eq:det_of_bfGamma_super_cluster}, 
\begin{multline} \label{eq:calculation_of_almost_top_clustered_normal}
  \det\bigg[ \bfGamma_{n-j}(\aaa_k) \frac{\cE_{n-j+1,n} (\aaa_k, \Intx(x_0(a); s)}{\cE_{n-j,n}(\Intx(x_0(a)); s)} \bigg]^{\m}_{j,k = 1} \\
  = \prod^{\m}_{k=1} e^{n\Gfn(x_0(a); \aaa_k) - n\ell/2} \det[ \calN ](\det[ \calP ]\det[ \calQ_{T;s} ] + o(1)),
\end{multline}
where $\calQ_{T;s}$ is the $\m \times \m$ matrix with entries
\begin{equation}
  \calQ_{T;s}(i,k)= \int^{\infty}_{-\infty} \xi^{i-1} e^{-\frac12 \xi^2 +\alpha_k\xi} (1 - s\chi_{(T, \infty)}(\xi)) d\xi .
\end{equation}

\bigskip

Combining~\eqref{eq:det_of_bfGamma_super_cluster}, \eqref{eq:supersim}, \eqref{eq:calculation_of_almost_top_clustered_normal} and Theorem \ref{thm:alg}, we find that (recall~\eqref{eq:defination_of_spiked_GUE_dist_rank_k} for the definition of $\erf_k$)
\begin{equation}\label{eq:Gammanjak14}
  \begin{split}
    \cE_n(\aaa_1, \cdots, \aaa_\m; \Intx(x_0(a)); s) = & \frac{\det[ \calQ_{T;s} ]}{\det[ \calQ ]} (1+o(1)) = \erf_k(T; \alpha_1, \cdots, \alpha_k; s) + o(1).
  \end{split}
\end{equation}
Hence Theorem \ref{thm:sup2} is proved.

\section{Proof of Theorem~\ref{thm:superjump}: secondary critical case}\label{sec:superjump}

We assume that the support of the equilibrium measure associated to $V$ consists of one interval. 
Let $a\in\mathcal{J}_V \setminus \{ \acc \}$. 
Then $\Gfn(x; a)$ attains its maximum in $(c(a), \infty)$ at more than one point. 
We assume that the maximum is achieved at two points, which we 
denote by  $x_1(a) < x_2(a)$. 
We write $x_1(a)$ as $x_1$ and $x_2(a)$ as $x_2$ for notational convenience if there is no confusion. 
Set 
\begin{equation}\label{eq:Gmax}
	\Gfn_{\max}:= \Gfn(x_1; a)=\Gfn(x_2;a).
\end{equation}
We assume, as usual, that $\Gfn''(x_1; a)\neq 0$ and $\Gfn''(x_2; a)\neq 0$. 

Throughout this section, we fix $m \in \{1, 2, \cdots, \m\}$. 
Recall the definitions 
\begin{equation}\label{eq:superj005-sec5}
\begin{split}
  	q_\mm := \frac{\m-2\mm+1}{x_2(a)-x_1(a)},
	\qquad 
	K_\mm := \left( 
        \frac{(\m-m)!}{(m-1)!} \frac{(-\Gfn''(x_1(a)))^{\m-m+1/2} }{(-\Gfn''(x_2(a)))^{m-1/2}}
	 \right)^{\frac{1}{\m-2m+1}}.
\end{split}
\end{equation}
Set 
\begin{equation}
  a':= a- q_m \frac{\log (K_m n)}{n},
\end{equation}
and we assume that 
\begin{equation}\label{eq:superj000-sec5}
	\aaa_k=  a- q_m\frac{\log (K_\mm n)}{n}+ \frac{\alpha_k}{n}
	= a' + \frac{\alpha_k}{n}, \qquad k = 1, \cdots, \m,
\end{equation}
for fixed distinct $\alpha_1 > \cdots > \alpha_\m$.

\subsection{Evaluation of $\det\big[ \bfGamma_{n-j}(\aaa_k;n)\big]_{j,k=1}^{\m}$}\label{sec:supjump1}

The goal here is to  prove the asymptotic formula~\eqref{eq:super228.7}  given at the end of this subsection.

Analysis in this section is similar to that of Section~\ref{sec:sup2Ga} but with the change that the main 
contribution to the integral formula of $\bfGamma_{n-j}(\aaa_k;n)$ comes from 
two intervals (near $x_1(a)$ and $x_2(a)$) instead of one interval as in~\eqref{eq:Gammanjak}.
This is because of~\eqref{eq:Gmax}.
We have a small enough $\epsilon > 0$ and a corresponding $\delta > 0$ such that
\begin{equation}\label{eq:super201}
	e^{n\ell/2} \bfGamma_{n-j}(\aaa_k;n)
	=\bigg(  \int_{E_1} M_{j,n}(y)e^{n\Gfn(y;\aaa_k)}dy  
	+ \int_{E_2} M_{j,n}(y)e^{n\Gfn(y;\aaa_k)}dy \bigg) (1+O(e^{-\delta n})).
\end{equation}
where
\begin{equation} \label{eq:defn_of_E_epsilon}
  E_1 := (x_1 - \epsilon, x_1 + \epsilon), \quad E_2 := (x_2 - \epsilon, x_2 + \epsilon).
\end{equation}
Since the two integrals are asymptotically of same order, the evaluation of the determinant $\det\big[ \bfGamma_{n-j}(\aaa_k;n)\big]_{j,k=1}^{\m}$ is more complicated.

Using the \Andreief's formula in random matrix theory (see \eg\ \cite{Tracy-Widom98}), we have
\begin{equation} \label{eq:super210}
\begin{split}
 & \left( \prod^{\m}_{k=1} e^{n\Gfn(x_1; \aaa_k)} \right)^{-1} \det \left[ \int_{E_1 \cup E_2} M_{j,n}(y) e^{n\Gfn(y;\aaa_k)} dy \right]^{\m}_{j,k = 1} \\
 &= \det \left[ \int_{E_1 \cup E_2} M_{j,n}(y) e^{n(\Gfn(y;\aaa_k)-\Gfn(x_1;\aaa_k))} dy \right]^{\m}_{j,k = 1} \\
  &=  \frac{1}{\m!} \int_{(E_1 \cup E_2)^{\m}} \det[M_{j,n}(y_k)] \det[e^{n(\Gfn(y_k; \aaa_j)-\Gfn(x_1;\aaa_k))}] dy_1 \cdots dy_{\m}.
\end{split}
\end{equation}
For each variable $y_k$, the integral in $y_k$ is over  $E_1 \cup E_2$. 
Using the symmetry of the integrand  in $y_k$ in the last line of~\eqref{eq:super210} is symmetric in $y_k$,~\eqref{eq:super210} equals 
\begin{equation}\label{eq:super210.5.5}
	\frac1{\m!}\sum_{\ell=0}^{\m}  \binom{\m}{\ell} \Ii_\ell
\end{equation}
where, for $\ell=0, \cdots, \m$, 
\begin{equation}\label{eq:super210.5}
  \Ii_\ell := 
  \int_{E^{\m-\ell}_1} dy_1 \cdots dy_{\m-\ell} \int_{E^{\ell}_2} dy_{\m-\ell+1} \cdots dy_{\m} \det[M_{j,n}(y_k)] \det[e^{n (\Gfn(y_k; \aaa_j) - \Gfn(x_1; \aaa_j))}].
\end{equation}

We now evaluate the leading asymptotics of $\Ii_\ell$ for each $\ell$. 
For $t = (t_1, \cdots, t_j)$, let $\Delta_j(t) := \prod_{1 \leq k < \ell \leq j} (t_{\ell} - t_k) $ 
denote the Vandermonde determinant. 
For each $j$, set
\begin{equation} \label{eq:superj006}
  \bfZ_j := \int_{\realR^j} \lvert \Delta_j(t) \rvert^2 \prod^j_{k=1} e^{-\frac{1}{2}t^2_k} dt_k = (2\pi)^{n/2} \prod^j_{k=1} k!.
\end{equation}
This is the partition function of the $j$-dimensional GUE.
We have the following lemma. 

\begin{lemma}\label{lem:Ii} 
For each $\ell=0,1, \cdots, \m$, we have
\begin{equation}\label{eq:super228}
	\Ii_\ell =
	\frac{\error^{\m-\ell}\errort^ \ell}{((K_{m}n)^{q_{m}(x_2-x_1)})^ \ell} 
	\prod_{k=0}^{\m-1-\ell} \bigg(\frac{\error^k}{k!} \bigg)^2  \prod_{k=0}^{\ell-1} \bigg(\frac{\errort^k}{k!} \bigg)^2
	  \cdot \det[\bfP_ \ell] \det[\bfQ_ \ell] 
\bfZ_{\m-\ell} \bfZ_{\ell} (1+o(1))
\end{equation}
where $\bfZ_\ell$ is defined in~\eqref{eq:superj006}, $\bfP_\ell $ and $\bfQ_\ell $ are defined in Theorem \ref{thm:superjump}, and 
\begin{equation}\label{eq:super225.5}
\begin{split}
	\omega_j:= (-n\Gfn(x_j;a))^{-1/2}, \qquad j=1,2.
\end{split}
\end{equation}
\end{lemma}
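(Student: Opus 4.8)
The plan is to apply Laplace's method to the $\m$-fold integral \eqref{eq:super210.5} defining $\Ii_\ell$, after first replacing the two determinants in the integrand by their leading multilinear (Taylor) parts. \textbf{Rescaling.} I would localize near the two maxima of $\Gfn(\cdot;a)$ on $(c(a),\infty)$ and set $y_k = x_1 + \error\xi_k$ for $k = 1,\dots,\m-\ell$ and $y_k = x_2 + \errort\xi_k$ for $k = \m-\ell+1,\dots,\m$, with $\error,\errort$ as in \eqref{eq:super225.5} (read $\Gfn''$ there). The Jacobian gives $\error^{\m-\ell}\errort^{\ell}$, the rescaled domains dilate to all of $\realR$, and $n(\Gfn(y_k;a)-\Gfn(x_i;a))\to -\tfrac12\xi_k^2$. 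Writing $\Gfn(y;\aaa_j)=\Gfn(y;a)+(\aaa_j-a)y$, using $\Gfn'(x_i;a)=0$ and $\Gfn(x_1;a)=\Gfn(x_2;a)$ from \eqref{eq:Gmax}, and setting $\mu_j:=n(\aaa_j-a)=\alpha_j-q_m\log(K_m n)$ from \eqref{eq:superj000-sec5}, the $(j,k)$ entry of the exponential determinant becomes $e^{n(\Gfn(y_k;a)-\Gfn(x_1;a))}e^{\mu_j\error\xi_k}$ for $k\le\m-\ell$ and $e^{n(\Gfn(y_k;a)-\Gfn(x_2;a))}\,(K_m n)^{-q_m(x_2-x_1)}e^{\alpha_j(x_2-x_1)}e^{\mu_j\errort\xi_k}$ for $k>\m-\ell$; the $k$-only factors come out of the determinant, the $(K_m n)^{-q_m(x_2-x_1)}$ once per $x_2$-column, producing the prefactor $\bigl((K_m n)^{q_m(x_2-x_1)}\bigr)^{-\ell}$, and the remaining $e^{n(\Gfn(y_k;a)-\Gfn(x_i;a))}$ become Gaussian weights.

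\textbf{Extracting the leading term.} I would Taylor-expand $M_{j,n}$ about $x_1$ and $x_2$ and each exponential $e^{\mu_j\error\xi_k}$, $e^{\mu_j\errort\xi_k}$ in powers of $\error\xi_k$, $\errort\xi_k$, and use column multilinearity. A nonvanishing term of the resulting expansion requires the chosen powers to be distinct within the $x_1$-block and within the $x_2$-block, so the minimal choice is $\{0,\dots,\m-\ell-1\}$ on the first block and $\{0,\dots,\ell-1\}$ on the second; summing over the permutations that realize it turns the scalar prefactors into $\frac{\error^{\binom{\m-\ell}{2}}}{\prod_{i=0}^{\m-\ell-1}i!}\Delta_{\m-\ell}(\xi_1,\dots,\xi_{\m-\ell})$ and $\frac{\errort^{\binom{\ell}{2}}}{\prod_{i=0}^{\ell-1}i!}\Delta_{\ell}(\xi_{\m-\ell+1},\dots,\xi_{\m})$ and leaves a $\xi$-independent vector determinant. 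For $\det[M_{j,n}(y_k)]$ this is the matrix of $(M_{j,n}^{(p)}(x_1))_j$, $p<\m-\ell$, and $(M_{j,n}^{(p)}(x_2))_j$, $p<\ell$, which by \eqref{eq:Mjerrorz0} equals $\det[\bfP_\ell](1+o(1))$ and, by Proposition~\ref{prop:non-vanishing}\ref{enu:prop:non-vanishing:b}, stays uniformly bounded away from $0$ and $\infty$; for the exponential determinant the corresponding columns are $(\mu_j^{p})_j$ and $(e^{\alpha_j(x_2-x_1)}\mu_j^{p})_j$, and since $\mu_j$ differs from $\alpha_j$ by a constant, the exact column operations reducing $(\mu_j^{p})_j$ to $(\alpha_j^{p})_j$ within each block identify it with $\det[\bfQ_\ell]$. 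Since $\det[\bfP_\ell]\det[\bfQ_\ell]\neq0$ (distinct $\alpha_j$'s together with Proposition~\ref{prop:non-vanishing}) and the residual integrals $\int_{\realR^{\m-\ell}}\Delta_{\m-\ell}(\xi)^2\prod e^{-\xi_k^2/2}\,d\xi=\bfZ_{\m-\ell}$, $\int_{\realR^{\ell}}\Delta_{\ell}(\xi)^2\prod e^{-\xi_k^2/2}\,d\xi=\bfZ_{\ell}$ of \eqref{eq:superj006} are nonzero, this is genuinely the leading term. Collecting the $\error$-powers ($\error^{\m-\ell}$ from the Jacobian times $\bigl(\error^{\binom{\m-\ell}{2}}/\prod_{i=0}^{\m-\ell-1}i!\bigr)^2$ from the two determinants, which is exactly $\error^{\m-\ell}\prod_{k=0}^{\m-1-\ell}(\error^k/k!)^2$) and likewise the $\errort$-powers, together with the factor $\bigl((K_m n)^{q_m(x_2-x_1)}\bigr)^{-\ell}$, the determinants $\det[\bfP_\ell]\det[\bfQ_\ell]$, and $\bfZ_{\m-\ell}\bfZ_{\ell}$, reproduces \eqref{eq:super228}.

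\textbf{Error estimates and the main obstacle.} It remains to check that everything discarded is $o(1)$ relative to the leading term. The Laplace-method errors — restricting $E_1,E_2$ to small neighborhoods of $x_1,x_2$ (legitimate because $\Gfn(\cdot;a)<\Gfn_{\max}$ off $\{x_1,x_2\}$ on $(c(a),\infty)$), replacing them by $\realR$, and the cubic and higher terms of $\Gfn(\cdot;a)$ (controlled since $\Gfn''(x_1;a),\Gfn''(x_2;a)<0$) — are standard, and the Taylor remainders of $M_{j,n}$ enter at strictly higher powers of $\error,\errort$ by the uniform boundedness of $M_{j,n}$ and its derivatives. The one genuinely delicate point — the main obstacle — is that $\mu_j\sim -q_m\log(K_m n)$ grows, so one cannot argue termwise that a smaller power of $\error$ (or $\errort$) always wins: a non-minimal choice of powers with excess $d\geq 1$ in the block-sum loses a factor $\error^{d}$ from the scalar prefactor, but the corresponding $\mu_j$-power subdeterminant is then no longer logarithm-free and can be as large as $(\log n)^{d}$; since $\error\log n=o(1)$ and $\errort\log n=o(1)$ the net contribution is still $o(1)$, but this balance of the $\log n$ growth against the $\error,\errort=O(n^{-1/2})$ decay must be tracked with care. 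Everything else is routine. (When the support of the equilibrium measure has more than one interval, $\bfP_\ell$ is replaced throughout by its $n$-dependent analogue built from the $\M_{j,n}$, which Proposition~\ref{prop:non-vanishing} still keeps bounded away from $0$ and $\infty$; cf.\ Remark~\ref{rmk:multiinterval}.)
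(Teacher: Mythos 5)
Your proposal is correct and follows the same rescale–Taylor–Laplace strategy that the paper uses; the nontrivial point you flag is exactly where the paper spends its care as well. The one genuine organizational difference is how the growing $q_m\log(K_m n)$ shift is bookkept. The paper splits $\aaa_k - a = (\aaa_k - a') + (a' - a)$, puts the $\alpha_k/n$ piece into the $j$-dependent function $Q_j(y)$ (so its Taylor coefficients involve only $\alpha_j$ and are logarithm-free), and absorbs the $j$-independent $-q_m\log(K_m n)/n$ piece into the scalar weight $e^{nD(y)}$; after the change of variables that weight becomes a Gaussian with a vanishing $O\bigl(\tfrac{\log n}{\sqrt n}|s_k|\bigr)$ perturbation, which Laplace's method handles in one stroke (see the line leading to \eqref{eq:super227.01}). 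You instead keep the whole $\mu_j = n(\aaa_j - a)=\alpha_j - q_m\log(K_m n)$ inside the exponential determinant, so the Taylor coefficients carry powers of $\log n$; you then have to (i) perform column operations within each block to turn the minimal-power subdeterminant of $\mu_j$-powers into $\det[\bfQ_\ell]$, and (ii) bound the excess-$d$ terms by noting that a loss of $\error^d$ from the scalar prefactor beats the at-most-$(\log n)^d$ growth of the corresponding subdeterminant because $\error\log n\to0$. Both routes are valid; the paper's is a bit cleaner because it confines $\log n$ to one place (the Gaussian weight) rather than tracking it through the combinatorics of power-choices, but your version is a legitimate alternative and you correctly identify and resolve the key balance. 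Two small remarks: $\Ii_\ell$ in \eqref{eq:super210.5} is already an integral over $E_1^{\m-\ell}\times E_2^\ell$, so the localization to neighborhoods of $x_1,x_2$ happened prior to the lemma, not within it; and you read the displayed $\omega_j:= (-n\Gfn(x_j;a))^{-1/2}$ as $(-n\Gfn''(x_j;a))^{-1/2}$ — correctly, that is a typo in the statement.
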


\begin{proof}
Since $\Gfn(y; \aaa_j)-\Gfn(x_1; \aaa_j)= \Gfn(y; a')-\Gfn(x_1;a') + \frac{\alpha_j}{n}(y-x_1)$ and 
$\Gfn(x_1; a)= \Gfn(x_2;a)$,~\eqref{eq:super210.5} equals 
\begin{equation} \label{eq:super210.5_new}
  \Ii_\ell = \int_{E_1^{\m-\ell} \times E_2^{\ell}} \det[ M_{j,n}(y_k)] \det[ Q_j(y_k)] \prod_{k=1}^{\m} e^{nD(y_k)} dy_k,
\end{equation}
where 
\begin{equation}\label{eq:super208}
\begin{split}
	Q_k(y):= \begin{cases}
	e^{\alpha_k(y-x_1)}, & y\in E_1, \\
	\frac{e^{\alpha_k(x_2-x_1)}}{(K_{m}n)^{q_{m}(x_2-x_1)}}  e^{\alpha_k(y-x_2)}, 
	 &y\in E_2, 
	\end{cases}
\end{split}
\end{equation}
and
\begin{equation}\label{eq:super209}
\begin{split}
	D(y):= \begin{cases}
	\Gfn(y; a')-\Gfn(x_1;a'),  & y\in E_1, \\
	\Gfn(y; a')-\Gfn(x_2;a'),  & y\in E_2. 
	\end{cases}
\end{split}
\end{equation}

Note that first $\ell$ of $y_k$ are in $E_1$ and the rest $y_k$ are in $E_2$. 
Using the Taylor's expansion, we have for $k=1, \cdots, \m-\ell $, 
  \begin{equation}\label{eq:super212}
    M_{j,n}(y_k) =  \sum_{i=1}^{\m-\ell} \frac{M_{j,n}^{(i-1)}(x_1)}{(i-1)!} (y_k-x_1)^{i-1} +O(|y_k-x_1|^{\m-\ell}),  
  \end{equation}
and for $k=\m-\ell +1, \cdots, \m$, 
  \begin{equation}\label{eq:super213}
    M_{j,n}(y_k) =  \sum_{i=1}^{\ell} \frac{M_{j,n}^{(i-1)}(x_2)}{(i-1)!} (y_k-x_2)^{i-1} +O(|y_k-x_2|^{\ell}). 
  \end{equation}
Let $\hat{\bfP}_\ell$ be the matrix defined similar to $\bfP_{\ell}= \frakP^{(x_1(a),\m - \ell),(x_2(a), \ell)}$ 
in the statement of the theorem but with $\M_j$ replaced by $M_{j,n}$ in the entries. 
By Baik and Wang \cite[Proposition 6.1]{Baik-Wang10a} we have that $M_{j,n}^{(i)}(x)= \M_j^{(i)}(x)(1+O(n^{-1})$ and 
they are bounded uniformly in $n$ for any $x$ in a compact subset of $(\redge, \infty)$. (This was also discussed in the previous section in~\eqref{eq:Mjerrorz0}.)
Since $(-1)^{\m(\m-1)/2}\det[\bfP_{\ell}]>0$  and $1/\det[\bfP_{\ell}]$ are bounded uniformly in $n$
(by Proposition \ref{prop:non-vanishing}\ref{enu:prop:non-vanishing:b}; see~\eqref{eq:superj1030}), 
$\det [ \hat{\bfP}_\ell] = \det[\bfP_\ell] (1+o(1))$ and the entries of $\hat{\bfP}^{-1}$ is $O(1)$ . Thus \eqref{eq:super213} implies that 
\begin{equation}\label{eq:super214}
\begin{split}
  \det[ M_{j,n}(y_k)]^{\m}_{j,k = 1} = & \det[\hat{\bfP}_\ell \bfV_\ell +\bfEo ] = \det[\bfP_\ell]  \cdot \det[ \bfV_\ell + \hat{\bfP}^{-1}_\ell\bfEo ]  (1+o(1))
\end{split}
\end{equation}
where 
\begin{equation}\label{eq:super215.5}
\begin{split}
	\bfV_\ell:= \begin{bmatrix} 
	1 & \cdots &1 & 0 &\cdots &0 \\
	(y_1-x_1) & \cdots & (y_{\m-\ell}-x_1) &0 &\cdots &0 \\
	\vdots& \ddots& \vdots& \vdots &\ddots &\vdots\\
	\frac{(y_1-x_1)^{\m-\ell-1}}{(\m-\ell-1)!} & \cdots & \frac{(y_{\m-\ell}-x_1)^{\m-\ell-1}}{(\m-\ell-1)!} & 0 &\cdots &0\\
	0 & \cdots & 0 & 1 & \cdots &1 \\
	0 & \cdots & 0 & 	(y_{\m-\ell+1}-x_2) & \cdots & (y_{\m}-x_2) \\
	\vdots &\ddots&\vdots  & 	\vdots& \ddots& \vdots\\
	0 & \cdots & 0 & 	\frac{(y_{\m-\ell+1}-x_2)^{\ell-1}}{(\ell-1)!} & \cdots & \frac{(y_{\m}-x_2)^{\ell-1}}{(\ell-1)!}  \\
	\end{bmatrix}
\end{split}
\end{equation}
and $\bfEo$ is a matrix with entries satisfying, for each $j=1, \cdots, \m$,
\begin{equation} \label{eq:super216}
  \bfEo_{j,k}= 
\begin{cases}
	O(|y_k-x_1|^{\m-\ell-1}), &k=1, \cdots, \m-\ell, \\
	 O(|y_k-x_2|^{\ell-1}), &k=\m-\ell+1, \cdots, \m.
\end{cases}
\end{equation}

Similarly, we have the Taylor expansions 
\begin{equation}\label{eq:super218}
\begin{split}
	Q_j(y_k)= \sum_{i=1}^{\m-\ell} \frac{\alpha_j^i(y_k-x_1)^i}{i!} + O(|y_k-x_1|^{\m-\ell-1})
\end{split}
\end{equation}
for $k=1, \cdots, \m-\ell$,  and
\begin{equation}\label{eq:super219}
  Q_j(y_k)= \frac{e^{\alpha_k(x_2-x_1)}}{(K_m n)^{q_m(x_2-x_1)}} \bigg(\sum_{i=1}^{\ell} \frac{\alpha_j^i(y_k-x_2)^i}{i!} + O(|y_k-x_2|^{\ell-1}\bigg)
\end{equation}
for $k=\m-\ell+1, \cdots, \m$. 
From the arguments above we find that 
\begin{equation}\label{eq:super220}
  \det[Q_j(y_k)]^{\m}_{j,k =1} = \frac{1}{(K_m n)^{q_m(x_2-x_1)\ell}} \det[\bfQ_\ell] \cdot \det[ \bfV_\ell+ \bfQ^{-1}_\ell\bfEt],
\end{equation}
where $\bfV_\ell$ is same as in~\eqref{eq:super215.5}, $\bfQ_\ell$ is defined in Theorem \ref{thm:superjump} 
and the error matrix $\bfEt$ satisfies the same estimate~\eqref{eq:super216} of $\bfEo$.
Note that $(-1)^{\m(\m-1)/2}\det[\bfQ_{\ell}] > 0$ for $\alpha_1 > \cdots > \alpha_{\m}$.

We now evaluate the integral in $\Ii_\ell$ using the Laplace's method with the change of variables
\begin{equation}\label{eq:super222}
\begin{split}
	y_k= x_1 + \frac{s_k}{\sqrt{-n\Gfn(x_1;a)}}, \qquad & k=1, \cdots, \m-\ell, \\
	y_{\m-\ell+k}= x_2 + \frac{t_k}{\sqrt{-n\Gfn(x_2;a)}}, \qquad & k=1, \cdots, \ell.
\end{split}
\end{equation}
Note that under this change of variables, with the notations $\omega_j$ defined in~\eqref{eq:super225.5}, $\bfV_\ell$ equals the diagonal matrix 
$\diag(1, \error, \cdots, \frac{\error^{\m-\ell-1}}{(\m-\ell-1)!}, 1, \omega_2, \cdots, \frac{\omega_2^{\ell-1}}{\ell-1)!})$
times the matrix
 \begin{equation}\label{eq:super224}
\begin{split}
	\Delta_{\m-\ell, \ell}(s,t):=
	\begin{bmatrix} 
	1 & \cdots &1 & 0& \cdots & 0  \\
	s_1 & \cdots &s_{\m-\ell} & 0& \cdots & 0  \\
	\vdots & \ddots & \vdots & \vdots & \ddots & \vdots\\
	s_1^{\m-\ell-1} & \cdots & s_{\m-\ell}^{\m-\ell-1} & 0& \cdots & 0  \\
	0& \cdots & 0 & 1 & \cdots &1\\
	0& \cdots & 0 & t_1 & \cdots & t_\ell  \\
	\vdots & \ddots & \vdots & \vdots & \ddots & \vdots  \\
	0& \cdots & 0 & t_1^{\ell-1}& \cdots & t_\ell^{\ell-1}
	\end{bmatrix} .
\end{split}
\end{equation}
Also for each $j=1, \cdots, \m$, \eqref{eq:super216} implies that 
\begin{equation}\label{eq:super225}
  \bfEo_{j,k}=  
\begin{cases}
	O( n^{-(\m-\ell)/2} \cdot |s_j|^{\m-\ell}), & k=1, \cdots, \m-\ell, \\
	O( n^{-\ell/2} \cdot |t_j|^{\ell}), & k=\m-\ell+1, \cdots, \m. 
\end{cases}
\end{equation}
Hence
\begin{equation}\label{eq:super226}
	\det[\bfV_\ell +\hat{\bfP}^{-1}_\ell \bfEo]
	=  
	\prod_{k=0}^{\m-\ell-1} \frac{\error^{k} }{k!} \prod_{k=0}^{\ell-1} \frac{\omega_2^k}{k!}
	\cdot \det \bigg[ \Delta_{\m-\ell, \ell}(s,t) 
	+ O(\frac{ \max_k |s_k|^{\m-\ell}+\max_k |t_k|^{\ell}}{\sqrt{n}})
	\bigg].
\end{equation}
The determinant of $\bfV_\ell+ \bfQ^{-1}_\ell \bfEt$ has the same asymptotics. 
Also note that for $k=1, \cdots, \m-\ell$, the term $e^{nD(y_k)}dy_k$ in the integral~\eqref{eq:super210.5_new} becomes, under the change of variables~\eqref{eq:super222}, 
\begin{equation}\label{eq:super227.01}
	e^{nD(y_k)}dy_k = e^{n(\Gfn(y_k;a)-\Gfn(x_1;a)-q_{m} \frac{\log(K_{m} n)}{n} (y-x_1))} dy_k= \error e^{-\frac12 s_k^2 +O(|s_k|^3/\sqrt{n})+O(\frac{\log n}{\sqrt{n}} |s_k|)  } ds_k.
\end{equation}
The term for $k=\m-\ell+1, \cdots, \m$ is also similar. 
Therefore, we find by applying the Laplace's method to~\eqref{eq:super210.5_new} and using~\eqref{eq:super214},~\eqref{eq:super220},~\eqref{eq:super226} that 
\begin{equation}\label{eq:super227}
\begin{split}
	\Ii_\ell
	=&  \frac{\error^{\m-\ell}\errort^\ell}{(K_mn)^{q_m(x_2-x_1)\ell}} 
	\prod_{k=0}^{\m-\ell-1} \bigg( \frac{\error^{k} }{k!}\bigg)^2 \prod_{k=0}^{\ell-1}  \bigg( \frac{\omega_2^k}{k!} \bigg)^2 \cdot \det[\bfP_\ell] \det[\bfQ_\ell] \\
	&\times 
	\bigg[\int_{\R^{\m-\ell}} |\Delta_{\m-\ell}(s)|^2 \prod_{j=1}^{\m-\ell} e^{-\frac12 s_j^2} ds_j  \bigg]
	\bigg[ \int_{\R^{\ell}} |\Delta_\ell(t)|^2  e^{-\frac12 t_k^2}dt_k \bigg] (1+o(1)) \\
        = & \frac{\error^{\m-\ell}\errort^\ell}{(K_mn)^{q_m(x_2-x_1)\ell}} 
	\prod_{k=0}^{\m-\ell-1} \bigg( \frac{\error^{k} }{k!}\bigg)^2 \prod_{k=0}^{\ell-1}  \bigg( \frac{\omega_2^k}{k!} \bigg)^2 \cdot \det[\bfP_\ell] \det[\bfQ_\ell] \cdot \bfZ_{\m-\ell} \bfZ_{\ell} (1+o(1)).
\end{split}
\end{equation}
\end{proof}

We now evaluate the asymptotics of the sum in~\eqref{eq:super210.5.5} using the above lemma. 
Since $\sqrt{n}\omega_i$, $(\sqrt{n}\omega_i)^{-1}$,  $\det[\bfQ_{\ell}] \det[\bfP_{\ell}]$ and its reciprocal are all $ O(1)$, we find from Lemma~\ref{lem:Ii} that 
\begin{equation}\label{eq:super228.1}
  \frac{\Ii_j}{\Ii_k} = O \left( \frac{n^{-\frac{1}{2}(\m^2 - 2\m j + 2j^2) - q_m(x_2 - x_1)j}}{n^{-\frac{1}{2}(\m^2 - 2\m k + 2k^2) - q_m(x_2 - x_1)k}} \right) =  O(n^{(k - m + \frac{1}{2})^2 - (j - m + \frac{1}{2})^2})
\end{equation}
for all $j,k\in \{0, \cdots, \m\}$, 
by using the definition~\eqref{eq:superj000} of $q_m$. 
Hence $\Ii_{m-1}$ and $\Ii_m$ are of same order and the other $\Ii_\ell$ are of smaller orders (at least by factor $n^{2}$).
This implies that~\eqref{eq:super210.5.5} becomes
\begin{equation}\label{eq:super228.4}
	\frac1{\m!} \bigg[ \binom{\m}{m-1} \Ii_{m-1} + \binom{\m}{m} \Ii_m \bigg] (1+o(1)).
\end{equation}
Thus, from~\eqref{eq:super201} and~\eqref{eq:super210}, we have 
\begin{equation}
  \det\big[ \bfGamma_{n-j}(\aaa_k;n)\big]_{j,k=1}^{\m} =
  \left( \prod^{\m}_{k=1} e^{n\Gfn(x_1; \aaa_k)} \right) \frac1{\m!} \bigg[ \binom{\m}{m-1} \Ii_{m-1} + \binom{\m}{m} \Ii_m \bigg] (1+o(1)).
\end{equation}

Therefore, using the asymptotics~\eqref{eq:super228} of $\Ii_\ell$, the value of $\bfZ_j$ in \eqref{eq:superj006}, and the definition of $K_{m}$ in~\eqref{eq:superj005}, we obtain
\begin{equation}\label{eq:super228.7}
  \det [ \bfGamma_{n-j}(\aaa_k;n) ]_{j,k=1}^{\m} = 
  \left( \prod_{k=1}^{\m} e^{n\Gfn(x_1; \aaa_k)-n\ell/2} \right) \frac{\binom{\m}{m-1} }{\m!} \Ii_{m-1} \bigg[ 1+  \frac{\det[\bfP_{m}] \det[\bfQ_{m}]}{\det[\bfP_{m-1}] \det[\bfQ_{m-1}]} \bigg] (1+o(1)).
\end{equation}

\subsection{Evaluation of $\det\big[ \bfGamma_{n-j}(\aaa_k;n) \bar{\cE}_{n-j+1,n} (\aaa_k;\Intx(x_i);s)\big]_{j,k=1}^{\m}$, $i=1,2$}\label{sec:supjump0}

The analysis in this subsection is similar to Section~\ref{sec:sup2e} 
but as in Section~\ref{sec:supjump1} the main contribution to involved integrals comes from neighborhoods of two points $x_1$ and $x_2$. 

We consider the interval $\Intx(x_2)$ first.
From~\eqref{eq:supersim} in Section~\ref{sec:sup2e}, 
\begin{multline}
	\det\big[ \bfGamma_{n-j}(\aaa_k;n) \bar{\cE}_{n-j+1,n} (\aaa_k; \Intx(x_2);s )\big]^{\m}_{j,k = 1} \\
	=\det\bigg[ \bfGamma_{n-j}(\aaa_k;n) \frac{\cE_{n-j+1,n} (\aaa_k; \Intx(x_2);s)}{\cE_{n-j,n}(\Intx(x_2); s)} \bigg]^{\m}_{j,k = 1}  (1+o(1)). 
\end{multline}
Now~\eqref{eq:Gammanjak11} is changed to, as it happened to~\eqref{eq:super201}, 
\begin{multline}\label{eq:super601}
  e^{n\ell/2} \bfGamma_{n-j}(\aaa_k) \frac{\mathcal{E}_{n-j+1}(\aaa_k; \Intx(x_2); s)}{\mathcal{E}_{n-j}(\Intx(x_2); s)} = \bigg[ \int_{E_1} M_{j,n}(y) e^{n\Gfn(y;\aaa_k)}dy \\
  + \int_{E_2}  M_{j,n}(y) e^{n\Gfn(y;\aaa_k)} (1-s\chi_{E_{T,\epsilon}(x_2)}(y))dy \bigg] (1+O(e^{-\delta n}))
\end{multline}
for some $\epsilon>0$ and $\delta>0$, where $E_i$ are in \eqref{eq:defn_of_E_epsilon} and 
$E_{T,\epsilon}(x_2)$ is the interval 
\begin{equation}
	E_{T,\epsilon}(x_2):= (x_2 + T/\sqrt{-\Gfn''(x_2)n},\ x_2+\epsilon).
\end{equation}
Now the analysis of Section~\ref{sec:supjump1} goes through with the change that the measure $dy$ is changed to $ (1-s\chi_{E_{T,\epsilon}(x_2)}(y))dy$ for $y \in E_2$. 
Thus we find (cf.~\eqref{eq:super210.5.5} and~\eqref{eq:super210.5})
\begin{multline} \label{eq:super210.5.5_x_2_variation}
  \det \bigg[ \int_{E_1} M_{j,n}(y) e^{n\Gfn(y;\aaa_k)}dy + \int_{E_2}  M_{j,n}(y) e^{n\Gfn(y;\aaa_k)} (1-s\chi_{E_{T,\epsilon}(x_2)}(y))dy \bigg]^{\m}_{j,k = 1} = \\
  \left( \prod^{\m}_{k=1} e^{n\Gfn(x_1; \aaa_k)} \right) \frac1{\m!}\sum_{\ell=0}^{\m}  \binom{\m}{\ell} \Ii_\ell(T;s)
\end{multline}
where 
\begin{equation}\label{eq:super210.5.0.0.1}
	\Ii_\ell(T; s) := \int_{E_1^{\m-\ell} \times E_2^{\ell}} \det[M_{j,n}(y_k)] \det[ Q_j(y_k)] \prod_{k=1}^{\m} e^{nD(y_k)} d\mu(y_k)
\end{equation}
with
\begin{equation}\label{eq:super210.5.0.0.2}
\begin{split}
	d\mu(y_k) =\begin{cases}
	dy_k, \qquad &k=1, \cdots, \m-\ell, \\
	(1- s \chi_{E_{T,\epsilon}(x_2)}(y_k))dy_k, \qquad & k=\m-\ell+1, \cdots, \m.
	\end{cases}
\end{split}
\end{equation}
The analysis that yields Lemma~\ref{lem:Ii} applies with trivial modifications and we obtain (cf.~\eqref{eq:super228})
\begin{equation}\label{eq:super228.0.0.2}
  \Ii_\ell (T; s) = \frac{\error^{\m-\ell}\errort^ \ell}{((K_{m}n)^{q_{m}(x_2-x_1)})^ \ell} \prod_{k=0}^{\m-1-\ell} \bigg(\frac{\error^k}{k!} \bigg)^2  \prod_{k=0}^{\ell-1} \bigg(\frac{\errort^k}{k!} \bigg)^2 \cdot \det[\bfP_ \ell] \det[\bfQ_ \ell] 
  \bfZ_{\m-\ell} \cdot \bfZ_{\ell}(T; s) (1+o(1))
\end{equation}
where
\begin{equation}\label{eq:super602}
\begin{split}
	\bfZ_i(T; s):= \int_{\R^i} |\Delta_i(t)|^2 \prod_{j=1}^i e^{-\frac12 t_j^2} (1-s\chi_{(T,\infty)}(t_j))dt_j  .
\end{split}
\end{equation}
Therefore, we obtain,  as in~\eqref{eq:super228.7},
\begin{multline}\label{eq:super602.01}
  \det \left[ \bfGamma_{n-j}(\aaa_k;n) \frac{\cE_{n-j+1,n}(\aaa_k; \Intx(x_2);s)}{\cE_{n-j,n}(\Intx(x_2);s)} \right]^{\m}_{j,k = 1} = \prod_{k=1}^{\m} e^{n\Gfn(x_1; \aaa_k)-n\ell/2} \\
  \times \frac{\binom{\m}{m-1} }{\m!} \Ii_{m-1}(T;s) \bigg[ 1+ \frac{\bfZ_{m-1}}{\bfZ_{m}}\frac{\bfZ_{m}(T; s)}{\bfZ_{m-1}(T; s)} \frac{\det[\bfP_{m}] \det[\bfQ_{m}]}{\det[\bfP_{m-1}] \det[\bfQ_{m-1}]} \bigg] (1+o(1)).
\end{multline}

\bigskip

For the interval $\Intx(x_1)$,~\eqref{eq:super601} is changed to 
\begin{multline}\label{eq:super604}
  e^{n\ell/2} \bfGamma_{n-j}(\aaa_k) \frac{\mathcal{E}_{n-j+1}(\aaa_k; \Intx(x_1); s)}{\mathcal{E}_{n-j}(\Intx(x_1); s)} = 
  \bigg[ \int_{E_1} M_{j,n}(y) e^{n\Gfn(y;\aaa_k)}(1-s\chi_{E_{T,\epsilon}(x_1)}(y))dy + \\
   \int_{E_2}  M_{j,n}(y) e^{n\Gfn(y;\aaa_k)} (1-s)dy \bigg]  (1+O(e^{-\delta n})).
\end{multline}
Corresponding to \eqref{eq:super210.5.5} and \eqref{eq:super210.5.5_x_2_variation}, we have
\begin{multline} \label{eq:super210.5.5_x_1_variation}
  \det \bigg[ \int_{E_1} M_{j,n}(y) e^{n\Gfn(y;\aaa_k)}(1-s\chi_{E_{T,\epsilon}(x_1)}(y))dy + 
  \int_{E_2}  M_{j,n}(y) e^{n\Gfn(y;\aaa_k)} (1-s)dy \bigg] ^{\m}_{j,k = 1} = \\
  \left( \prod^{\m}_{k=1} e^{n\Gfn(x_1; \aaa_k)} \right) \frac1{\m!}\sum_{\ell=0}^{\m}  \binom{\m}{\ell} \tilde{\Ii}_\ell(T;s)
\end{multline}
where $\tilde{\Ii}_\ell(T;s)$ are defined analogously to $\Ii_{\ell}$ in \eqref{eq:super210.5_new} and \eqref{eq:super210.5.0.0.1},
and it is straightforward to obtain
\begin{multline}\label{eq:super228.0.0.2.01}
  \tilde{\Ii}_\ell (T;s) = \frac{\error^{\m-\ell}\errort^ \ell}{((K_{m}n)^{q_{m}(x_2-x_1)})^ \ell} \prod_{k=0}^{\m-1-\ell} \bigg(\frac{\error^k}{k!} \bigg)^2  \prod_{k=0}^{\ell-1} \bigg(\frac{\errort^k}{k!} \bigg)^2 \cdot \det[\bfP_ \ell] \det[\bfQ_ \ell] \\
  \times \bfZ_{\m-\ell}(T;s) \cdot (1-s)^{\ell}\bfZ_{\ell}(1+o(1)).
\end{multline}
The difference from~\eqref{eq:super228.0.0.2} for $\Ii_{\ell}(T;s)$ is that we now have $\bfZ_{\m-\ell}(T; s)$ and$(1-s)^\ell\bfZ_\ell$ in place of $\bfZ_{\m-\ell}$ and $\bfZ_\ell(T;s)$, respectively.
Therefore, we obtain (cf. \eqref{eq:super602.01})
\begin{multline}\label{eq:super602.01.01}
  \det \left[ \bfGamma_{n-j}(\aaa_k;n) \frac{\cE_{n-j+1,n}(\aaa_k; \Intx(x_1);s)}{\cE_{n-j,n}(\Intx(x_1);s)} \right]^{\m}_{j,k = 1} = \prod_{k=1}^{\m} e^{n\Gfn(x_1; \aaa_k)-n\ell/2} \\
  \times \frac{\binom{\m}{m-1} }{\m!} \tilde{\Ii}_{m-1}(T; s) \bigg[ 1+ (1-s)\frac{\bfZ_{\m-m+1}}{\bfZ_{\m-m}} \frac{\bfZ_{\m-m}(T; s)}{\bfZ_{\m-m+1}(T; s)} \frac{\det[\bfP_{m}] \det[\bfQ_{m}]}{\det[\bfP_{m-1}] \det[\bfQ_{m-1}]} \bigg] \\
  \times (1+o(1)).
\end{multline}

\bigskip

Combining~\eqref{eq:super602.01} and~\eqref{eq:super228.7}, we obtain 
\begin{multline}\label{eq:603}
  \mathcal{E}_n(\aaa_1, \cdots \aaa_\m; \Intx(x_2); s) = 
  \frac{\det[\bfP_{m-1}]\det[\bfQ_{m-1}] \frac{\bfZ_{m-1}(T; s)}{\bfZ_{m-1}} + \det[\bfP_{m}]\det[\bfQ_{ m}] \frac{\bfZ_{m}(T; s)}{\bfZ_{m}} } {\det[\bfP_{m-1}]\det[\bfQ_{m-1}] + \det[\bfP_m]\det[\bfQ_{m}] } + o(1)
\end{multline}
and from~\eqref{eq:super602.01.01} and~\eqref{eq:super228.7} we obtain
\begin{multline}\label{eq:605}
  \mathcal{E}_n(\aaa_1, \cdots \aaa_\m; \Intx(x_1); s) = \\
  \frac{\det[\bfP_{m-1}]\det[\bfQ_{m-1}] \frac{\bfZ_{\m-m+1}(T; s)}{\bfZ_{\m-m+1}} (1-s)^{m-1} + \det[\bfP_m]\det[\bfQ_m] \frac{\bfZ_{\m-m}(T; s)}{\bfZ_{\m-m}} (1-s)^{m}} {\det[\bfP_{m-1}]\det[\bfQ_{m-1}] + \det[\bfP_{m}]\det[\bfQ_{m}] } 
  +o(1),
\end{multline}
where we use the convention that $\frac{\bfZ_{0}(T; s)}{\bfZ_{0}}=1$. It is easy to check that the convergences in~\eqref{eq:603} and~\eqref {eq:605} are all uniform for $s$ which is close to $1$. Since $\frac{\bfZ_\ell(T;s)}{\bfZ_\ell} = \erf_\ell(T; 0, \cdots, 0;s)$ (see~\eqref{eq:defination_of_spiked_GUE_dist_rank_k}), Theorem~\ref{thm:superjump} follows from~\eqref{eq:defination_of_distr_of_j_eigen_in_k_GUE}.

\section{Proof of Theorem~\ref{thm:critical1}: critical case 1, continuous transition} \label{higher_rank_critical_case}

We assume that the critical value $\acc= \frac12 V'(\redge)$ and suppose that 
$\acc\notin \mathcal{J}_V$. Let 
\begin{equation}
	\aaa_k = \frac{1}{2} V'(\redge) + \frac{\beta \alpha_k}{n^{1/3}}, 
	\qquad k=1, \cdots, \m,
\end{equation}
for fixed, distinct real numbers $\alpha_1, \cdots, \alpha_\m$.
Here $\beta$ is a positive constant defined in~\eqref{eq:defn_of_beta}.
The proof of this critical case is more involved  than other cases. 
We first need to perform some algebraic manipulations of the determinant in Theorem~\ref{thm:alg} 
to make it asymptotically easy to evaluate.

We start with a formula that is equivalent to but slightly different from Theorem~\ref{thm:alg}. 
From Lemma~\ref{lem:Dmiddle}, which is an intermediate step toward the proof of Theorem~\ref{thm:alg},  
\begin{multline}\label{eq:afom2}
  	\frac{\cE_n(\aaa_1, \cdots, \aaa_{\m}; E;s)}{\cE_n(E;s)} 
  	= \frac{1}{\det[\ga_{n-j}(\aaa_k)]^{\m}_{j,k = 1}} \det \bigg[  \langle \psi_{n-j},  \hv^t- s\chi_E(1-s\chi_E K_{n,n} \chi_E)^{-1}\hw^t \rangle \bigg]_{j=1}^{\m}
\end{multline}
where $\langle, \rangle$ is the real inner product on $\realR$ and  $K_{\ell,n}(x,y)=(p_0(x)p_0(y)+\cdots p_{\ell-1}(x)p_{\ell-1}(y))e^{-\frac{n}{2}(V(x)+V(y))}$ is the usual Crhistroffel--Darboux kernel.
Here $p_\ell(x)=p_\ell(x;n)$ is the orthonormal polynomial with respect to the (varying) measure $e^{-nV(x)}dx$ on $\R$, 
and 
$\psi_\ell(x):= p_\ell(x)e^{-\frac{n}2V(x)}$.
The column vector $\hv(x):= (v_1(x), \cdots, v_\m(x))^t$ is defined by 
\begin{equation} 
	v_k(x) :=e^{n(\aaa_k x-V(x)/2)},
\end{equation}
and the column vector $\hw(x):= (w_1(x), \cdots, w_\m(x))^t$ is given by 
\begin{equation}\label{eq:-3}
   	w_k(x):= ((1-K_{n,n}) v_k)(x).
\end{equation}

Note that the kernel $K_{n,n}$ in~\eqref{eq:afom2} is independent of $j$. 
This is the difference from the formula~\eqref{eq:alg}: in terms of $\psi_{\ell}$ and $K_{\ell, n}$, the formula~\eqref{eq:alg} becomes~\eqref{eq:last_eq_in_proof_8.1} in which the Christoffel--Darboux kernel appears as $K_{n-j+1,n}$, depending on the row index $j$. 
This change makes the following computation easier. 

\medskip

We use  the three-term recurrence relation of orthonormal polynomials (see \eg, \cite{Szego75}) repeatedly below. In terms of $\psi_\ell(x)=p_\ell(x) e^{-\frac{n}{2}V(x)}$, 
\begin{equation}\label{eq:threeterm}
	x \psi_{\ell}(x)= b_{\ell} \psi_{\ell+1}(x) + a_{\ell} \psi_{\ell}(x) + b_{\ell-1} \psi_{\ell-1}(x), \qquad \ell \ge 1
\end{equation}
for some constants $a_\ell$ and for positive constants
\begin{equation}
	b_{\ell} = \frac{\gamma_{\ell}}{\gamma_{\ell+1}},
\end{equation}
where $\gamma_{\ell}$ is the leading coefficient of $p_{\ell}(x)$.

\subsection{Evaluation of $\det\big[ \bfGamma_{n-j}(\aaa_k) \big]_{j,k=1}^{\m}$}\label{sec:numcri}

Using the notations above, we have (see~\eqref{eq:gadef})
\begin{equation}
  \det\big[ \bfGamma_{n-j}(\aaa_k) \big]_{j,k=1}^{\m} = \det\big[ \langle \psi_{n-j}, \hv^t \rangle \big]_{j=1}^{\m}.
\end{equation}
By taking a linear combination of the last three rows and using the three-term recurrence relation~\eqref{eq:threeterm}, we can replace the last row in the above matrix by the vector 
\begin{equation}
	\frac1{b_{n-\m}} \langle (x-\redge) \psi_{n-\m+1}(x), \hv^t(x) \rangle.
\end{equation}
We then can replace the $(\m-1)$-th row  similarly by using the two rows above. By repeating this process up to the third row, we obtain
\begin{equation}
  \begin{split}
    \bigg( \prod_{\ell=n-\m}^{n-3} b_\ell \bigg) \det\big[ \bfGamma_{n-j}(\aaa_k) \big]_{j,k=1}^{\m} 
    = \det
    \begin{bmatrix}
      \langle \psi_{n-1}, \hv^t \rangle \\
      \langle \psi_{n-2}, \hv^t \rangle  \\
      \langle (x-\redge) \psi_{n-2}(x), \hv^t \rangle \\
      \vdots \\
      \langle (x-\redge) \psi_{n-\m+1}(x), \hv^t \rangle
    \end{bmatrix}.
  \end{split}
\end{equation}
Now we can change the last row of this new matrix to 
\begin{equation}
	\frac1{b_{n-\m+1}} \langle (x-\redge)^2 \psi_{n-\m+2}(x), \hv^t \rangle
\end{equation}
without changing the determinant, 
by using a linear combination of the last three rows and the three-term recurrence relation again. We repeat this process up to the fifth row and obtain 
\begin{equation}
    \bigg( \prod_{\ell=n-\m+1}^{n-4} b_\ell \bigg)  \bigg( \prod_{\ell=n-\m}^{n-3} b_\ell \bigg) \det\big[ \bfGamma_{n-j}(\aaa_k) \big]_{j,k=1}^{\m} 
    = \det
        \begin{bmatrix}
          \langle \psi_{n-1}, \hv^t \rangle \\
          \langle \psi_{n-2}, \hv^t \rangle  \\
	\langle (x-\redge) \psi_{n-2}(x), \hv^t \rangle \\
        \langle (x-\redge) \psi_{n-3}(x), \hv^t \rangle \\ 
	\langle (x-\redge)^2 \psi_{n-3}(x), \hv^t \rangle \\
        \vdots \\
        \langle (x-\redge)^2 \psi_{n-\m+1}(x), \hv^t \rangle
	\end{bmatrix}.
\end{equation}
We repeat the process and obtain, for even $\m$, 
\begin{equation}\label{eq:GammatoR}
  \bigg( \prod_{\ell=1}^{[\m/2]-1} (b_{n-2-\ell}b_{n-\m-1+\ell})^\ell \bigg) \det\big[ \bfGamma_{n-j}(\aaa_k) \big]_{j,k=1}^{\m} = \det
  \begin{bmatrix}
    R_1  \\
    \vdots \\
    R_{[\m/2]} 
  \end{bmatrix}.
\end{equation}
where each $R_\ell$ is a $2\times \m$ matrix defined by  
\begin{equation}
  R_\ell =
  \begin{bmatrix}
   \langle (x-\redge)^{\ell-1} \psi_{n-\ell}(x), \hv^t \rangle \\
   \langle (x-\redge)^{\ell-1} \psi_{n-\ell-1}(x), \hv^t \rangle
  \end{bmatrix}.
\end{equation}
The determinant in~\eqref{eq:GammatoR} is unchanged if 
we add the second row of $R_\ell$ by a constant multiple of the first row. 
Hence we can change the matrices $R_\ell$ in~\eqref{eq:GammatoR} to  
\begin{equation}\label{eq:-2}
  R_\ell =
  \begin{bmatrix}
    \langle (x-\redge)^{\ell-1} \psi_{n-\ell}(x), \hv^t \rangle \\
    \langle  (x-\redge)^{\ell-1} (\psi_{n-\ell-1}(x) - \frac{B_{\ell+1, n}(\redge)}{B_{\ell, n}(\redge)} \psi_{n-\ell}(x)), \hv^t \rangle
  \end{bmatrix}
\end{equation}
where $B_{j,n}(\redge)$ is the value of $B_{j,n}(z)$ at $z = \redge$, and 
the function $B_{j,n}(z)$ is a function defined in \cite[Proposition 6.1(b)]{Baik-Wang10a} which appears in the asymptotic of orthonormal polynomial near the edge $\redge$ of the support of the equilibrium measure. 
From the asymptotics of $B_{j,n}(z)$ \cite[Formulas (323) and (313)]{Baik-Wang10a}, it was shown that $B_{j,n}(z)$ and its reciprocal are uniformly bounded in a neighborhood of $z=\redge$.

When $\m$ is odd, we need to add an extra row $\langle (x-\redge)^{[\m/2]} \psi_{n-[\m/2]-1}(x), \hv^t \rangle$ to the matrix to the right-hand side of~\eqref{eq:GammatoR} and the extra term $b_{n-[\m/2]-1}^{[\m/2]}$ needs to be multiplied on the left-hand side. In the remaining part of this section, we consider only even $\m$ since the odd $\m$ case can be solved by the same method.

We now evaluate the asymptotics of $R_{\ell+1}$ for each $\ell=0,1,\cdots$. 
First consider  the top row of $R_{\ell+1}$.
We consider a slightly more general quantity 
\begin{equation}\label{eq:QPO1}
	\langle (x-\redge)^{\ell} \psi_{n-j}(x), e^{n(\aaa_k x-V(x)/2)} \rangle
\end{equation}
for a later use. 
Observe that $\psi_{n-\ell}$ is changed to $\psi_{n-j}$. 
Note from~\eqref{eq:gadef} with $a=\aaa_k$, 
$\bfGamma_{n-j}(\aaa_k)= \langle \psi_{n-j}(x), e^{n(\aaa_k x-V(x)/2)} \rangle$. 
The asymptotics of this inner product at the critical case was obtained in \cite[Section 5.1]{Baik-Wang10a}
and the asymptotics of~\eqref{eq:QPO1} is very similar. 
Namely setting 
\begin{equation}
	\varphi_{n-j}(x) := \psi_{n-j}(x) e^{-\frac{n}{2}V(x)},
\end{equation}
we see that~\eqref{eq:QPO1} equals 
\begin{equation}\label{eq:QPO2}
	\int_{\Sigma_+\cup \Sigma_-} (C\varphi_{n-j})(z) (z-\redge)^\ell e^{n\aaa_k z} dz
	+ \int^{\infty}_{\redge} \varphi_{n-j}(z)(z-\redge)^{\ell} e^{n\aaa_k y}dz 
\end{equation}
where $(C\varphi_{n-j})(z)$ is the Cauchy transform of $\varphi_{n-j}(x)$
and $\Sigma_+$ and $\Sigma_-$ are certain contours from $\redge$ to $\infty$ lying in 
$\compC_+$ and $\compC_-$, respectively (see \cite[Figure 9]{Baik-Wang10a}).
When $\ell=0$, it was shown in \cite[Section 5.1]{Baik-Wang10a} that, under the criticality assumption the main contribution to the above integrals comes from a neighborhood of the point $z=\redge$
and an appropriate change of variable is 
$\xi= (z-\redge)\beta n^{2/3}$. 
The presence of the term $(z-\redge)^\ell$ above does not change the analysis except for an inclusion of an extra term $(\frac{\xi}{\beta n^{2/3}})^\ell$ in \cite[Formula (222)]{Baik-Wang10a}
and we obtain
\begin{multline} \label{eq:evaluation_of_likeGamma_l_j}
  	\langle (x-\redge)^{\ell} \psi_{n-j}(x), e^{n(\aaa_k x-V(x)/2)} \rangle \\
  = \frac{Q_n(\aaa_k)}{\beta\sqrt{n} (\beta n^{2/3})^{\ell} }  \left(B_{j,n}(\redge) \int^{\infty}_0 \xi^\ell \Ai(\xi)(e^{\alpha_k\xi}+ e^{\omega\alpha_k\xi}+e^{\omega^2\alpha_k\xi}) d\xi + o(1) \right).
\end{multline}
where 
(see \cite[Formula (206)]{Baik-Wang10a})
\begin{equation} \label{eq:190}
	Q_n(\aaa_k) = e^{n(-\frac{1}2V(\redge) + \aaa_k \redge)}=e^{n(\Hfn(\redge; \aaa_k)-\ell/2)} = e^{n(\Gfn(\redge; \aaa_k)-\ell/2)}.
\end{equation}
(The result of \cite{Baik-Wang10a} involves $\B_{j,n}(\redge)$ in place of $B_{j,n}(\redge)$. But as $B_{j,n}(z)= \B_{j,n}(z)(1+O(n^{-1}))$ from \cite[Formula (323)]{Baik-Wang10a}
and $1/\B_{j,n}(\redge)$ is uniformly bounded, the above statement follows.) 
Now observe that (see \cite[Formula (223)]{Baik-Wang10a})
\begin{equation}\label{eq:bfGak6}
\begin{split}
	 \int_0^\infty  \Ai(\xi) (e^{\alpha \omega \xi}
	+  e^{\alpha \omega^2 \xi} +e^{\alpha\xi}) d\xi = e^{\alpha^3/3}.
\end{split}
\end{equation}
By taking the derivatives of this identity with respect to $\alpha$, we find that
\begin{equation}\label{eq:bfGak7}
\begin{split}
	\int_0^\infty \xi^\ell \Ai(\xi) (\omega^\ell e^{\alpha \omega \xi}
	+ \omega^{2\ell} e^{\alpha \omega^2 \xi} +e^{\alpha\xi}) d\xi 
	= \bigg( \frac{d}{d\alpha}\bigg)^\ell e^{\alpha^3/3}. 
\end{split}
\end{equation}
Hence we obtain
\begin{equation}\label{eq:cr101}
  \langle (x-\redge)^{\ell} \psi_{n-j}(x), e^{n(\aaa_k x-V(x)/2)} \rangle 
	= \frac{Q_n(\aaa_k)}{\beta\sqrt{n} (\beta n^{2/3})^{\ell} } B_{j,n}(\redge) \bigg( \frac{d}{d\alpha_k}\bigg)^{\ell} e^{\alpha_k^3/3}  (1+ o(1) ).
\end{equation}

\bigskip

Now we consider the second row of $R_{\ell+1}$. Again we consider a slightly more general quantity 
\begin{equation}\label{eq:QPR3}
	\langle (x-\redge)^{\ell} (\psi_{n-j-1}(x) - \frac{B_{j+1,n}(\redge)}{B_{j,n}(\redge)} \psi_{n-j}(x)), e^{n(\aaa_k x-V(x)/2)} \rangle.
\end{equation}
This can be written as the sum of the integrals~\eqref{eq:QPO2} with the terms $(C\varphi_{n-j})(z)$ and $\varphi_{n-j}(z)$ replaced by 
$\varphi_{n-j-1}(z) - \frac{B_{j+1,n}(\redge)}{B_{j,n}(\redge)} \varphi_{n-j}(z)$
and $\varphi_{n-j-1}(z) - \frac{B_{j+1,n}(\redge)}{B_{j,n}(\redge)} \varphi_{n-j}(z)$, respectively.
Then again the main contribution to the integrals come near $z=\redge$. 
The precise behaviors of the integrands near $z$ are well known (see  \cite[Formula (322)]{Baik-Wang10a}).
First, 
\begin{equation} \label{eq:separation_of_B_part_and_D_part}
  	\varphi_{n-j-1}(z) - \frac{B_{j+1,n}(\redge)}{B_{j,n}(\redge)} \varphi_{n-j}(z) = 
  \left( n^{1/6} \Ai(\Phi(z)) c_1(z) + n^{-1/6} \Ai'(\Phi(z)) c_2(z) \right) e^{-\frac{n}{2}V(z)}
\end{equation}
for $z$ near $\redge$. Secondly, 
\begin{multline} \label{eq:separation_of_B_part_and_D_part_Cauchy_trans_upper}
  (C\varphi_{n-j-1})(z) - \frac{B_{j+1,n}(\redge)}{B_{j,n}(\redge)} (C\varphi_{n-j})(z) = \\
  e^{\frac{\pi i}{3}} \left( n^{1/6} \Ai(\omega^2\Phi(z)) c_1(z) + n^{-1/6} \omega^2\Ai'(\omega^2\Phi(z)) c_2(z) \right) e^{-\frac{n}{2}V(z)}
\end{multline}
for $z$ near $\redge$ with $z\in  \compC_+$. Finally, 
\begin{multline} \label{eq:separation_of_B_part_and_D_part_Cauchy_trans_lower}
  (C\varphi_{n-j-1})(z) - \frac{B_{j+1,n}(\redge)}{B_{j,n}(\redge)} (C\varphi_{n-j})(z) = \\
  -e^{\frac{\pi i}{3}}  \left( n^{1/6} \omega^2 \Ai(\omega\Phi(z)) c_1(z) + n^{-1/6} \Ai'(\omega\Phi(z)) c_2(z) \right) e^{-\frac{n}{2}V(z)}
\end{multline}
for $z$ near $\redge$ with $z\in \compC_-$.
Here $\Phi(z)$ is a function that satisfies $\Phi(z)= \beta n^{2/3} (z-\redge) (1+O(|z-\redge|))$ as $z\to \redge$ and is defined by Baik and Wang \cite[Formula (309)]{Baik-Wang10a}).
The functions $c_1(z)$ and $c_2(z)$ are given by 
\begin{equation} \label{eq:Bc1c2}
  	c_1(z) := B_{j+1,n}(z) - \frac{B_{j+1,n}(\redge)}{B_{j,n}(\redge)} B_{j,n}(z),  \quad
	c_2(z) := D_{j+1,n}(z) - \frac{B_{j+1,n}(\redge)}{B_{j,n}(\redge)} D_{j,n}(z).
\end{equation}
As $B_{\ell, n}(z)$ is analytic at $z=\redge$ and its reciprocal is uniformly away from zero in a neighborhood of $\redge$, we have 
\begin{equation}\label{eq:q010}
	c_1(z)=O(z-\redge) \qquad \text{for $z$ near $\redge$.}
\end{equation} 
Also $B_{j,n}(z)D_{j+1,n}(z) - B_{j+1,n}(z)D_{j,n}(z)$ is shown to be independent of $z$ (see \cite[Formula (329)]{Baik-Wang10a}) and equal to $\frac{\gamma_{n-j}}{\gamma_{n-j-1}}$ where 
$\gamma_\ell$ is the leading coefficient of $p_{\ell}(z)$.
Hence
\begin{equation} \label{eq:D_part_at_e}
  	c_2(\redge)  
	 = - \frac{\kappa_{j,n}}{B_{j,n}(\redge)}, \qquad
	 \kappa_{j,n}:= - \frac{\gamma_{n-j}}{\gamma_{n-j-1}} .
\end{equation}
We note that from the explicit asymptotics \cite[Formula (303), (304)]{Baik-Wang10a}
of $\gamma_{n-j}$, $ \kappa_{j,n}$ and its reciprocal are bounded uniformly in $n$.

From this we can find the asymptotics of~\eqref{eq:QPR3} 
in a  similar form as~\eqref{eq:evaluation_of_likeGamma_l_j}.
The resulting formula contains two integrals, one involving $\Ai$ and the other $\Ai'$, 
since each of~\eqref{eq:separation_of_B_part_and_D_part},~\eqref{eq:separation_of_B_part_and_D_part_Cauchy_trans_upper}, and~\eqref{eq:separation_of_B_part_and_D_part_Cauchy_trans_lower} contains such two terms. 
Now notice that due to~\eqref{eq:q010} and the change of variables 
$\xi= (z-\redge)\beta n^{2/3}$, 
the integral involving $\Ai$ is smaller than the integral involving $\Ai'$ by the factor  $O(n^{-1/3})$. 
Thus we find that 
\begin{multline} \label{eq:temtem}
  	\langle (x-\redge)^{\ell}  (\psi_{n-j-1}(x)-\frac{B_{j+1,n}(\redge)}{B_{j,n}(\redge)}\psi_{n-j}(x)), e^{n(\aaa_k x-V(x)/2)} \rangle \\
  	= - \frac{Q_n(\aaa_k)\kappa_{j,n} }{\beta n^{5/6} (\beta n^{2/3})^{\ell} B_{j,n}(\redge)}  
	\left( \int^{\infty}_0 \xi^\ell \Ai'(\xi)(e^{\alpha_k\xi}+ \omega^2\omega^{\ell} e^{\omega\alpha_k\xi} + \omega\omega^{2\ell}e^{\omega^2\alpha_k\xi}) d\xi + o(1) \right).
\end{multline}

The integral above can be simplified by the identity
\begin{equation}\label{eq:bfGak9}
	\int_0^\infty \xi^\ell \Ai'(\xi) (\omega^2 \omega^\ell e^{\alpha \omega \xi}
	+ \omega \omega^{2\ell} e^{\alpha \omega^2 \xi} +e^{\alpha\xi}) d\xi 
	= \bigg( \frac{d}{d\alpha}\bigg)^\ell (-\alpha e^{\alpha^3/3}).
\end{equation}
This identity is obtained by taking derivatives with respect to $\alpha$ of the identity 
\begin{equation}\label{eq:bfGak8}
\begin{split}
	 \int_0^\infty  \Ai'(\xi) (\omega^2 e^{\alpha \omega \xi}
	+  \omega e^{\alpha \omega^2 \xi} +e^{\alpha\xi}) d\xi =-\alpha  e^{\alpha^3/3},
\end{split}
\end{equation}
which follows from~\eqref{eq:bfGak6} after integrating by parts. 
Hence we obtain 
\begin{multline}\label{eq:-1}
  	\langle (x-\redge)^{\ell} (\psi_{n-j-1}(x)-\frac{B_{j+1,n}(\redge)}{B_{j,n}(\redge)}\psi_{n-j}(x)), e^{n(\aaa_k x-V(x)/2)} \rangle \\
  =  \frac{Q_n(\aaa_k)\kappa_{j,n} }{\beta n^{5/6} (\beta n^{2/3})^{\ell} B_{j,n}(\redge)}   
  \bigg( \frac{d}{d\alpha_k}\bigg)^{\ell} (\alpha_k e^{\alpha_k^3/3}) + o(1). 
\end{multline}

\bigskip

Inserting~\eqref{eq:cr101} 
and~\eqref{eq:-1} 
(with $\ell\mapsto \ell-1$ and $j=\ell$) into~\eqref{eq:-2}, we obtain that~\eqref{eq:GammatoR} 
equals, when $\m$ is even, 
\begin{equation} \label{eq:big_formula_of_det_Gamma_critical_1}
  	\prod_{\ell=1}^{[\m/2]}
	\frac{\kappa_{\ell, n}}{(\beta n^{2/3})^{2\ell}} 
	\prod_{k=1}^{\m} Q_n(\aaa_k) 
  	\left( \det
  \begin{bmatrix} 
    e^{\alpha_1^3/3} &  \cdots & e^{\alpha_\m^3/3}  \\
    \alpha_1 e^{\alpha_1^3/3} & \cdots &\alpha_\m e^{\alpha_\m^3/3} \\
    \frac{d}{d\alpha_1} e^{\alpha_1^3/3} & \cdots & \frac{d}{d\alpha_\m} e^{\alpha_\m^3/3}  \\
    \frac{d}{d\alpha_1} (\alpha_1 e^{\alpha_1^3/3})  & \cdots & \frac{d}{d\alpha_\m} (\alpha_\m e^{\alpha_\m^3/3}) \\
    \big( \frac{d}{d\alpha_1}\big)^2 e^{\alpha_1^3/3}  & \cdots &\big( \frac{d}{d\alpha_\m}\big)^2 e^{\alpha_\m^3/3}  \\
    \big( \frac{d}{d\alpha_1}\big)^2 (\alpha_1 e^{\alpha_1^3/3}) & \cdots &\big( \frac{d}{d\alpha_\m}\big)^2 (\alpha_\m e^{\alpha_\m^3/3}) \\
    \vdots &\vdots & \vdots  
  \end{bmatrix}
  +o(1) \right).
\end{equation}

Note that the $(j,k)$ entry of the determinant on the right-hand side of \eqref{eq:big_formula_of_det_Gamma_critical_1} is of the form $P_j(\alpha_k)e^{\alpha_k^3/3}$ for some polynomial $P_j(x)$ of degree $j-1$ with leading coefficient $1$, (\ie, $P_j(x)=x^{j-1} + \cdots$),
which are defined by the conditions 
\begin{equation}\label{eq:defpolyPj}
  P_j(\alpha) =
  \begin{cases}
    e^{-\alpha^3/3} \big( \frac{d}{d\alpha} \big)^{i} e^{\alpha^3/3} & \textnormal{if $j = 2i$,} \\
    e^{-\alpha^3/3} \big( \frac{d}{d\alpha} \big)^{i} (\alpha e^{\alpha^3/3}) & \textnormal{if $j=2i+1$.}
  \end{cases}
\end{equation}
Therefore, by elementary row operations we find that the determinant is same as the determinant 
of the matrix $(\alpha_k^{j-1} e^{\alpha_k^3/3})_{j,k=1}^{\m}$. 
The determinant of this matrix is $\prod_{1\le j<k\le \m} (\alpha_k-\alpha_j) \prod_{k=1}^{\m} e^{\alpha_k^3/3}$ and this is nonzero. 
Therefore, when $\m$ is even, 
\begin{multline}\label{eq:-10}
  	\bigg[ \prod_{\ell=1}^{[\m/2]-1}  (b_{n-2-\ell}b_{n-\m+\ell})^\ell \bigg] \det\big( \bfGamma_{n-j}(\aaa_k) \big)_{j,k=1}^{\m}  \\
  	= \prod_{\ell=1}^{[\m/2]}
	\frac{\kappa_{\ell, n}}{(\beta n^{2/3})^{2\ell} } 
	\prod_{k=1}^{\m} Q_n(\aaa_k) e^{\alpha_k^3/3} 
	 \prod_{1\le j<k\le \m} (\alpha_k-\alpha_j) (1+o(1)).
\end{multline}
We have a similar result when $\m$ is odd.

\subsection{Evaluation of $\det \big[  \langle \psi_{n-j},  \hv^t- s\chi_{\Int}(1-s\chi_{\Int} K_{n,n} \chi_{\Int})^{-1}\hw^t \rangle \big]_{j=1}^{\m}$}

We now evaluate the numerator of~\eqref{eq:afom2} when $E=\Int$. Note that $\psi_{n-j}$ is the only term that depends on $j$.  Hence by using the same row operations as in Section~\ref{sec:numcri} that lead to~\eqref{eq:GammatoR} and \eqref{eq:-2}, we find that, when $\m$ is even, 
\begin{equation}\label{eq:GammatoS}
  \bigg[ \prod_{\ell=1}^{[\m/2]-1} (b_{n-2-\ell}b_{n-\m+\ell})^\ell \bigg] \det \bigg[  \langle \psi_{n-j},  \hv^t - s\chi_{\Int}(1-s\chi_{\Int} K_{n,n} \chi_{\Int})^{-1}\hw^t \rangle \bigg]_{j=1}^{\m} = 
  \det
  \begin{bmatrix}
    S_1  \\
    \vdots \\
    S_{[\m/2]} 
  \end{bmatrix},
\end{equation}
where $S_\ell$ is a $2\times \m$ matrix given by  
\begin{equation}\label{eq:Slle-0101}
  \begin{split}
    S_\ell = &
    \begin{bmatrix}
      \langle (x-\redge)^{\ell-1} \psi_{n-\ell}(x), \hv^t- s\chi_{\Int}(1-s\chi_{\Int} K_{n,n} \chi_{\Int})^{-1}\hw^t \rangle \\
      \langle (x-\redge)^{\ell-1} (\psi_{n-\ell-1}(x) - \frac{B_{\ell+1, n}(\redge)}{B_{\ell, n}(\redge)} \psi_{n-\ell}(x)), \hv^t- s\chi_{\Int}(1-s\chi_{\Int} K_{n,n} \chi_{\Int})^{-1}\hw^t \rangle
    \end{bmatrix}.
  \end{split}
\end{equation}
We can write this as 
\begin{equation}\label{eq:Slle}
  \begin{split}
    S_\ell = R_{\ell} - \begin{bmatrix} \bfU_{\ell-1, \ell} \\  \bfV_{\ell-1, \ell} \end{bmatrix}
  \end{split}
\end{equation}
where $\bfU_{\ell, j} = (\bfU_{\ell, j}(\aaa_1), \cdots, \bfU_{\ell, j}(\aaa_\m))$ 
and $\bfV_{\ell, j} = (\bfV_{\ell, j}(\aaa_1), \cdots, \bfV_{\ell, j}(\aaa_\m))$ with
\begin{equation}
\begin{split}
	\bfU_{\ell, j}(\aaa_k) &:= \langle (x-\redge)^{\ell} \psi_{n-j}(x), s\chi_{\Int}(1- s\chi_{\Int} K_{n,n} \chi_{\Int})^{-1} w_k \rangle \\
	\bfV_{\ell, j}(\aaa_k) &:=  \langle (x-\redge)^{\ell}  (\psi_{n-j-1}(x) - \frac{B_{\ell+1, n}(\redge)}{B_{\ell, n}(\redge)} \psi_{n-j}(x)), s\chi_{\Int}(1- s\chi_{\Int} K_{n,n} \chi_{\Int})^{-1} w_k \rangle .
\end{split}	
\end{equation}

The asymptotics of 
$\tilde{\psi}_n(x;\aaa_k) = \frac{w_k(x)}{\bfGamma_n(\aaa_k)}$ were obtained \cite[Lemma 5.2]{Baik-Wang10a}. 
From this the asymptotics of $\bfU_{\ell, j}/\bfGamma_n(\aaa_k)$ when $\ell=0$ and $s=1$
were obtained  in \cite[Section 5.1.3]{Baik-Wang10a}. 
It is straightforward to extend this to other $\ell$ and $s$ as  
in the previous subsection.
We can follow the arguments in \cite[Sections 5.1.2 and 5.1.3]{Baik-Wang10a} almost verbatim and find 
\begin{equation}
\begin{split}
  	&\frac{\bfU_{\ell, j}(\aaa_k)}{\bfGamma_n(\aaa_k)}=  \langle (x - \redge)^{\ell} \psi_{n-j}(x), s\chi_{E_{T, \epsilon}} (1 - \chi_{\Int}K_{n,n}\chi_{\Int})^{-1} \tilde{\psi}_{n-j}(x; \aaa_k) \rangle (1 + o(1)) \\
  = & \frac1{(\beta n^{2/3})^\ell}  \langle \xi^\ell \Ai(\xi),  s\chi_{[T, \infty)}(1- s \chi_{[T, \infty)}K_{\Airy} \chi_{[T, \infty)})^{-1} C_{-\alpha_k}(\xi) \rangle (1+o(1)),
\end{split}
\end{equation}
where $E_{T,\epsilon} = \Int \setminus (\redge+\epsilon, \infty)$ with a small enough constant $\epsilon$ 
and
\begin{equation}
	C_{-\alpha}(\xi)= \frac1{2\pi} \int e^{i\frac13 z^3+i\xi z} \frac{dz}{-\alpha+iz}
\end{equation}
is defined in~\eqref{eq:Calphadef}. 
Using the asymptotics \cite[Formula (197)]{Baik-Wang10a} of $\Gamma_n(\aaa_k)$ (It was given in terms of $\B_{j,n}(\redge)$ but we can change it to $B_{j,n}(\redge)$. See text in parenthesis below equation \eqref{eq:190}.), 
this implies that  
\begin{equation}\label{eq:-5}
 	\bfU_{\ell, j}(\aaa_k)=   \frac{Q_n(\aaa_k)B_{j,n}(\redge)}{\beta\sqrt{n} (\beta n^{2/3})^{\ell} }   e^{\alpha_k^3/3} 
        \langle \xi^\ell \Ai(\xi), s\chi_{[T, \infty)} (1- s\chi_{[T, \infty)} K_{\Airy} \chi_{[T, \infty)})^{-1} C_{-\alpha_k} \rangle  (1+ o(1) ).
\end{equation}
Similarly, as in the argument for the asymptotics~\eqref{eq:temtem}, we find that 
\begin{equation}\label{eq:-6}
	\bfV_{\ell,j}(\aaa_k) = - \frac{Q_n(\aaa_k)\kappa_{j,n}}{\beta n^{5/6}(\beta n^{2/3})^{\ell} 	B_{j,n}(\redge)  }  e^{\alpha_k^3/3}  \langle \xi^\ell \Ai'(\xi), s (1- s\chi_{[T, \infty)} K_{\Airy} \chi_{[T, \infty)})^{-1} C_{-\alpha_k} \rangle  (1+ o(1) ).
\end{equation}

Recall the polynomials $P_j(\alpha)$ defined in~\eqref{eq:defpolyPj}. We claim that  
\begin{equation}
  \xi^{i} \Ai(\xi) =  P_{2i}(- \frac{d}{d\xi}  ) \Ai (\xi),  \qquad 
  -\xi^{i} \Ai'(\xi) =  P_{2i+1}(- \frac{d}{d\xi}  ) \Ai (\xi). \label{eq:Aipwer2}
\end{equation}
To see this, note that successive integrations by parts 
of the integral representation of the Airy function $\Ai(\xi) = \frac1{2\pi \sqrt{-1}} \int_{\infty e^{-\pi i/3}}^{\infty e^{\pi i/3}} e^{- \xi s+ \frac13 s^3} ds$ imply that, for any $i$, 
\begin{equation}
\begin{split}
	\Ai(\xi) 
	&=  \frac1{2\pi \sqrt{-1}} \int_{\infty e^{-\pi i/3}}^{\infty e^{\pi i/3}}
	\frac1{\xi^i} e^{-\xi s} \bigg[ \bigg( \frac{d}{ds}\bigg)^i e^{\frac13 s^3} \bigg] ds= \frac1{2\pi  \sqrt{-1}} \int_{\infty e^{-\pi i/3}}^{\infty e^{\pi i/3}}
	\frac1{\xi^i} e^{-\xi s} P_{2i}(s) e^{\frac13 s^3}  ds.
\end{split}
\end{equation}
Hence
\begin{equation}
\begin{split}
	\xi^i \Ai(\xi) 
	&=   \frac1{2\pi  \sqrt{-1}} \int_{\infty e^{-\pi i/3}}^{\infty e^{\pi i/3}}
	 P_{2i}(s) e^{-\xi s}  e^{\frac13 s^3}  ds\\
	 &=   \frac1{2\pi  \sqrt{-1}} \int_{\infty e^{-\pi i/3}}^{\infty e^{\pi i/3}}
	 \bigg[ P_{2i}(- \frac{d}{d\xi}  ) e^{-\xi s}  \bigg] e^{\frac13 s^3}  ds 
	 =P_{2i}(- \frac{d}{d\xi}  ) \Ai (\xi),
\end{split}
\end{equation}
which proves the first identity of~\eqref{eq:Aipwer2}. Similarly, for any $i$, 
\begin{equation}
\begin{split}
	-\xi^i \Ai'(\xi) 
	&=   \xi^i \frac1{2\pi \sqrt{-1}} \int_{\infty e^{-\pi i/3}}^{\infty e^{\pi i/3}}
	s e^{-\xi s+\frac13 s^3}  ds =   \frac1{2\pi \sqrt{-1}} \int_{\infty e^{-\pi i/3}}^{\infty e^{\pi i/3}}
	e^{-\xi s} \bigg[ \bigg( \frac{d}{ds}\bigg)^i (s e^{\frac13 s^3}) \bigg] ds \\
	&= \frac1{2\pi  \sqrt{-1}} \int_{\infty e^{-\pi i/3}}^{\infty e^{\pi i/3}}
	e^{-\xi s} P_{2i+1}(s) e^{\frac13 s^3}  ds  
	= P_{2i+1}(- \frac{d}{d\xi}  ) \Ai (\xi),
\end{split}
\end{equation}
which proves the second identity. 

We insert~\eqref{eq:Aipwer2} into~\eqref{eq:-5} and~\eqref{eq:-6}. 
This gives the asymptotics of the second matrix in the definition of $S_\ell$. For $R_\ell$, we use the asymptotics~\eqref{eq:cr101} and~\eqref{eq:-1} and insert~\eqref{eq:defpolyPj}. 
Then we obtain, when $\m$ is even, 
\begin{equation}\label{eq:-7}
  \begin{split}
    & (-1)^{[\m/2]} \prod_{\ell=1}^{[\m/2]}\frac{\kappa_{\ell,n}}{(\beta n^{2/3})^{2\ell}} \prod_{k=1}^{\m} Q_n(\aaa_k)e^{\alpha_k^3/3}  \\
    &\times \det
    \left[ P_{j-1}(\alpha_k) - \langle P_{j-1}(- \frac{d}{d\xi}  )  \Ai(\xi), s\chi_{[T, \infty)} (1-s\chi_{[T, \infty)} K_{\Airy} \chi_{[T, \infty)})^{-1} C_{-\alpha_k} \rangle + o(1) \right]_{j,k=1}^{\m}.
\end{split}
\end{equation}
Simple row operations then imply that  the last determinant, without $o(1)$ term, equals 
\begin{equation}\label{eq:-8}
  \begin{split}
    &\det \left[ \alpha_k^{j-1}  - \langle  (- \frac{d}{d\xi}  )^{j-1}  \Ai(\xi), s \chi_{[T, \infty)}(1-s\chi_{[T, \infty)} K_{\Airy} \chi_{[T, \infty)})^{-1} C_{-\alpha_k} \rangle \right]_{j,k=1}^{\m} . 
\end{split}
\end{equation}

\subsection{Proof of Theorem~\ref{thm:critical1}}

From~\eqref{eq:afom2},~\eqref{eq:-10}, and~\eqref{eq:-8}, we find that 
\begin{equation} \label{eq:PT1}
\begin{split}
	&\frac{\cE_n(\aaa_1, \cdots, \aaa_\m; \Int; s)}{F_0(x)} 
	=
	 \frac{\det [ D_\m(s) ] }
	 {	 \prod_{1\le j<k\le \m} (\alpha_k-\alpha_j)}  +o(1)
\end{split}
\end{equation}
where
\begin{equation} \label{eq:PT2}
\begin{split}
	D_\m(s) :=  \left[ (-\alpha_k)^{j-1}  - \langle (\frac{d}{d\xi}  )^{j-1}  \Ai(\xi), s\chi_{[T, \infty)} (1-s\chi_{[T, \infty)} K_{\Airy} \chi_{[T, \infty)})^{-1} C_{-\alpha_k} \rangle \right]_{j,k=1}^{\m} .
\end{split}
\end{equation}

When $s=1$, $D_\m(s)$ is precisely the matrix $M$ defined in \cite[Formula (3.36)]{Baik06} with $w_k=-\alpha_k$ (see \cite[Formula (3.9)]{Baik06} for the definition of $E_w$ and \cite[Formulas (3.4) and (1.10)]{Baik06} for the definition of $T_1$).  
A different formula of $\det (M)$ was then obtained \cite[Formula (3.46)]{Baik06} in terms of function $f(x;w)$.
Comparing with the case of $k=1$ of \cite[Formula (1.16)]{Baik06}, this function 
$f(x,w)= \frac{F_1(x; w)}{F_0(x)}$ and this implies that 
\begin{equation}\label{eq:e0}
\begin{split}
	\det[D_m(1)]= \det \left[ (-\alpha_k + \frac{d}{d T} )^{j-1} \frac{F_1(T;-\alpha_k;1)}{F_0(T)} \right].
\end{split}
\end{equation}

When $s\neq 1$, the only difference of $D_m(s)$ from $M$ is that 
the function $E_{w}$ (defined in \cite[Formula (3.9)]{Baik06}) is changed to 
$E_{w}^s (u):= s\big( \frac1{1-s\mathbb{A}_x}\widetilde{C}_w\big) (u)$. 
The proof of Baik \cite[Formula (3.46)]{Baik06} goes through without any changes and
we obtain 
\begin{equation}\label{eq:e1}
\begin{split}
	\det[D_m(s)]= \det \left[ (-\alpha_k + \frac{d}{d T} )^{j-1} \frac{F_1(T;\alpha_k;s)}{F_0(T; s)} \right].
\end{split}
\end{equation}
From the definition of $F_k$~\eqref{eq:defination_of_generalized_TW_dist_rank_k}, 
we obtain~\eqref{eq:result_of_thm:critical1}.

\section{Non-vanishing property of some determinants}
\label{sec:proof_of_non_vanishing_property}

As discussed in Section~\ref{sec:org}, 
in each of the Sections~\ref{sec:proof_of_sub_and_sup_1}-\ref{sec:superjump} 
we need the fact that the determinant of a certain matrix is nonzero and is uniformly bounded away from zero.  
Specifically, we need this property for the following four matrices $[\tilde{\M}_{j,n}(c(\aaa_k))]_{j,k=1}^{\m}$ in~\eqref{eq:ratio_of_det_formula_sub} of Section~\ref{sec:proof_of_sub_and_sup_1}, $\frakP$ in~\eqref{eq:defn_of_frakP_simple-super1} of
Section~\ref{sec:proof_of_sub_and_sup_1.1}, 
$\calP = [ \M^{(i-1)}_{j,n}(x_0(a)) ]^{\m}_{j,i = 1}$
in~\eqref{eq:calPPa} of
Section~\ref{subsection:multiple_interwining},
and $\bfP_{\ell}=\frakP^{(a,\m - j),(b,j)}$ in~\eqref{eq:superj1030} (see also the discussion after~\eqref{eq:super213} in Section~\ref{sec:superjump}).

In this section, we prove that the determinants of these matrices 
are uniformly away from zero in a unifying way.
This was obtained  by considering 
a more general matrix which includes the above matrices as special cases. 
We can show the nonvanishing property from  
a direct algebraic manipulation of the determinant when the support of the equilibrium measure consists of a single interval (i.e. $N=0$) since in this case the entries of the matrix do not depend on $n$ 
and are expressed in terms of a simple rational function.
However, when the support consists of multiple intervals (i.e. $N>0$), the entries involve a Riemann theta function and it is not easy to check directly that the determinant is nonzero. 

Instead we proceed as follows. 
The entries of the desired matrix are expressed in terms of the solution of 
the so-called ``global parametrix''  Riemann--Hilbert problem (RHP) for orthogonal polynomials.
Using this, we show that the desired determinant itself can be expressed as a product of the 
solutions of different RHPs, which are a Darboux-type transformation of the above global parametrix RHP. 
We exploit a relationship between the original RHP and its transformation in order to prove the nonvanishing property.

\bigskip

We now introduce the general matrix which we analyze.
Let $\M_{j,n}(z)$ and $\tilde{\M}_{j,n}(z)$ be defined in \cite[Formula (311) and (312)]{Baik-Wang10a}.
They are  expressed in terms of the solution to the global parametrix RHP, 
see~\eqref{eq:relation_of_cal_M_and_normal_M0} and~\eqref{eq:first_simple_RHP} 
for the explicit formula. 
We note that they are analytic, in particular, for $z \in (\redge, \infty)$. 
We also note that when $N=0$ (see~\eqref{eq:Mjindepn})
\begin{equation}
	\M_{j,n}(z)=  \sqrt{\frac{2}{\pi(\redge-\ledge)}} \frac{\gamma(z)+\gamma(z)^{-1}}2
	\bigg( \frac{\gamma(z)-\gamma(z)^{-1}}{\gamma(z)+\gamma(z)^{-1}} \bigg)^j, 
	\qquad z\in \compC\setminus(-\infty, \redge]
\end{equation}
for $z$ in $\compC\setminus(-\infty, \redge]$. 

Let $\{c_1, \cdots, c_p\}$ be a set distinct real numbers in $(\redge, \infty)$ and let $\{d_1, \cdots, d_q\}$ be another set of distinct real numbers in $z \in (\redge, \infty)$ for some nonnegative integers $p$ and $q$. For each $n$, we define the $(p+q) \times (p+q)$ matrix
\begin{equation} \label{eq:defn_of_frakP_simple}
	\frakP^{c_1, \cdots, c_p}_{d_1, \cdots, d_q}
	:= \begin{bmatrix}
	\tilde{\M}_{1,n}(d_1) &\cdots & \tilde{\M}_{1,n}(d_q) & \M_{1,n}(c_1) &\cdots & \M_{1,n}(c_p) \\ 
	\vdots & \ddots & \vdots & \vdots & \ddots &\vdots \\
	\tilde{\M}_{p+q,n}(d_1) &\cdots & \tilde{\M}_{p+q,n}(d_q) & \M_{p+q,n}(c_1) &\cdots & \M_{p+q,n}(c_p) \\ 
	\end{bmatrix}.
\end{equation}
Special cases of this matrix appeared in the proofs of Theorem~\ref{thm:sub} in Section~\ref{sec:proof_of_sub_and_sup_1} and of Theorem~\ref{thm:sup1} in Section~\ref{sec:proof_of_sub_and_sup_1.1}.

We also consider a slight extension of the above matrix whose special cases appeared in the proofs of Theorem~\ref{thm:sup2} in Section~\ref{subsection:multiple_interwining} and of Theorem~\ref{thm:superjump} in Section~\ref{sec:superjump}. 
Let $m_1, \cdots, m_p$ and $n_1, \cdots, n_q$ be positive integers, and set $\mathbf{s} := m_1 + \cdots +m_p$ and $\mathbf{t} = n_1 + \cdots + n_q$. 
For each $n$, define the $(\mathbf{s}+\mathbf{t}) \times (\mathbf{s}+\mathbf{t})$ 
matrix  $\frakP^{(c_1, m_1), \cdots, (c_p, m_p)}_{(d_1,n_1), \cdots, (d_q,n_q)}$ by the entries, 
for each $j = 1, \cdots, \mathbf{s}+\mathbf{t}$, 
\begin{equation} \label{eq:defn_of_frakP_confluent}
  \left( \frakP^{(c_1, m_1), \cdots, (c_p, m_p)}_{(d_1,n_1), \cdots, (d_q,n_q)}\right)_{j,k} := 
  \begin{cases}
    \tilde{\M}^{(k-1)}_{j,n}(d_1) & \textnormal{for $k = 1, \cdots, n_1$,} \\
    \tilde{\M}^{(k-n_1-1)}_{j,n}(d_2) & \textnormal{for $k = n_1+1, \cdots, n_1+n_2$,} \\
    \vdots & \\
    \tilde{\M}^{(k - \mathbf{t}+n_q-1)}_{j,n}(d_q) & \textnormal{for $k = \mathbf{t}- n_q+1, \cdots, \mathbf{t}$,} \\
    \M^{(k-\mathbf{t}-1)}_{j,n}(d_1) & \textnormal{for $k = \mathbf{t}+1, \cdots, \mathbf{t} + m_1$,} \\
    \vdots & \\
    \M^{(k-\mathbf{s}-\mathbf{t}+m_p-1)}_{j,n}(c_p) & \textnormal{for $k = \mathbf{s}+\mathbf{t}-m_q+1, \cdots, \mathbf{s} + \mathbf{t}$.}
  \end{cases}
\end{equation}
Note that $\frakP^{c_1, \cdots, c_p}_{d_1, \cdots, d_q}$ is a special case of $\frakP^{(c_1, m_1), \cdots, (c_p, m_p)}_{(d_1,n_1), \cdots, (d_q,n_q)}$ when 
all $m_j=1$ and $n_j=1$. 
The main result of this section is the following proposition.

\begin{prop} \label{prop:non-vanishing}
  Let $p,q$ be nonnegative integers, $c_1, \cdots, c_p$ be a set of distinct real numbers in $(\redge, \infty)$ and $d_1, \cdots, d_q$ be another set of distinct real numbers in $(\redge, \infty)$. 
  \begin{enumerate}[label=(\alph*)]
  \item \label{enu:prop:non-vanishing:a}
 Let $\frakP^{c_1, \cdots, c_p}_{d_1, \cdots, d_q}$ be defined in \eqref{eq:defn_of_frakP_simple}. 
 Then for all positive integer $n$, $\det [\frakP^{c_1, \cdots, c_p}_{d_1, \cdots, d_q}] \neq 0$.
 Also  both $\det [\frakP^{c_1, \cdots, c_p}_{d_1, \cdots, d_q}]$ and its reciprocal are bounded uniformly in $n$. Moreover, if $c_1 < \cdots < c_p$ and $d_1 < \cdots < d_q$, then $(-1)^{pq+p(p-1)/2}(-i)^q \det [\frakP^{c_1, \cdots, c_p}_{d_1, \cdots, d_q}] > 0$.
  \item \label{enu:prop:non-vanishing:b}
    Let $m_1, \cdots, m_p$ and $n_1, \cdots, n_q$ be positive integers and 
    let $\frakP^{(c_1, m_1), \cdots, (c_p, m_p)}_{(d_1,n_1), \cdots, (d_q,n_q)}$ be defined by \eqref{eq:defn_of_frakP_confluent}. 
    Then for all positive integer $n$, $\det[\frakP^{(c_1, m_1), \cdots, (c_p, m_p)}_{(d_1,n_1), \cdots, (d_q,n_q)}] \neq 0$.
    Also both $\det [\frakP^{(c_1, m_1), \cdots, (c_p, m_p)}_{(d_1,n_1), \cdots, (d_q,n_q)}]$ and its reciprocal are bounded uniformly in $n$. Moreover, if $c_1 < \cdots < c_p$ and $d_1 < \cdots < d_q$, then $(-1)^{\mathbf{s}\mathbf{t} + \mathbf{s}(\mathbf{s}-1)/2}(-i)^{\mathbf{t}}\det [\frakP^{(c_1, m_1), \cdots, (c_p, m_p)}_{(d_1,n_1), \cdots, (d_q,n_q)}] > 0$, where $\mathbf{s} = m_1 + \cdots + m_p$ and $\mathbf{t} = n_1 + \cdots + n_q$.
  \end{enumerate}
\end{prop}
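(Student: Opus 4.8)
The plan is to reduce everything to two facts: an explicit positivity computation in the single-interval case $N=0$, and a continuity/compactness argument that transports non-vanishing and uniform bounds to the multi-interval case $N>0$. I would first treat part~\ref{enu:prop:non-vanishing:b}, since part~\ref{enu:prop:non-vanishing:a} is the special case $m_j=n_j=1$ (and in fact the sign $(-1)^{\mathbf{s}\mathbf{t}+\mathbf{s}(\mathbf{s}-1)/2}(-i)^{\mathbf{t}}$ specializes to $(-1)^{pq+p(p-1)/2}(-i)^q$). Step one: record the structure of the entries. The functions $\M_{j,n}$ and $\tilde{\M}_{j,n}$ are built from the solution of the global parametrix RHP; when $N=0$ they are the explicit functions in \eqref{eq:Mjindepn} and \eqref{eq:Mjindepntilde} involving $\gamma(z)=\big(\frac{z-\ledge}{z-\redge}\big)^{1/4}$. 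The key observation is that each column of $\frakP^{(c_1,m_1),\dots}_{(d_1,n_1),\dots}$ is (a derivative of) the vector $(\M_{1,n}(z),\dots,\M_{\mathbf{s}+\mathbf{t},n}(z))^t$ or its tilde-analogue, and these vectors are, up to a common scalar, the moment vectors $(1,w,w^2,\dots)$ under the conformal substitution $w=w(z):=\frac{\gamma(z)-\gamma(z)^{-1}}{\gamma(z)+\gamma(z)^{-1}}$ (respectively $\tilde w=w^{-1}$ for the tilde functions, with an extra sign/factor $-i$). Concretely, factoring $\M_{j,n}(z)=\rho(z)\,w(z)^{j}$ with $\rho(z)=\sqrt{\tfrac{2}{\pi(\redge-\ledge)}}\,\tfrac{\gamma+\gamma^{-1}}{2}$ and similarly for $\tilde{\M}$, the matrix becomes, after pulling the scalars $\rho(d_i),\rho(c_k)$ out of each column, a \emph{confluent (generalized) Vandermonde} (Wronskian-type) matrix in the values $w(c_k)$ and $\tilde w(d_i)=w(d_i)^{-1}$.

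Step two: compute this confluent Vandermonde determinant and read off its sign. For $z\in(\redge,\infty)$ one checks $\gamma(z)>1$ is real, so $w(z)\in(0,1)$ and $\tilde w(z)=w(z)^{-1}\in(1,\infty)$; moreover $z\mapsto w(z)$ is strictly monotone on $(\redge,\infty)$. Hence the arguments $\tilde w(d_1),\dots,\tilde w(d_q),w(c_1),\dots,w(c_p)$ are $\mathbf{s}+\mathbf{t}$ \emph{distinct} positive reals, so the confluent Vandermonde is a nonzero product of differences $\prod(\xi_\beta-\xi_\alpha)$ raised to appropriate powers; its sign is determined by the ordering, and combined with the scalar prefactors $\prod\rho(d_i)^{n_i}\prod\rho(c_k)^{m_k}>0$ (note $\rho>0$ on $(\redge,\infty)$) and the extra factors $(-i)$ coming from the definition \eqref{eq:Mjindepntilde} of $\tilde\M$, one gets exactly $(-1)^{\mathbf{s}\mathbf{t}+\mathbf{s}(\mathbf{s}-1)/2}(-i)^{\mathbf{t}}\det[\frakP]>0$ when $c_1<\cdots<c_p$ and $d_1<\cdots<d_q$. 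The combinatorial bookkeeping of the sign is the one genuinely fiddly part of this step, but it is a finite check: order the block so that all $w(c_k)$ come first in increasing order, then all $\tilde w(d_i)$; the block-swap contributes $(-1)^{\mathbf{s}\mathbf{t}}$, the intra-$c$ ordering contributes $(-1)^{\mathbf{s}(\mathbf{s}-1)/2}$ relative to the natural order of columns $(\M^{(0)},\M^{(1)},\dots)$ versus $(1,w,w^2,\dots)$, and one tracks the $(-i)^{\mathbf{t}}$ from the tilde-normalization. This settles both non-vanishing and the sign for $N=0$, and the uniform bounds are automatic there since the entries do not depend on $n$.

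Step three: the multi-interval case $N>0$. Here the entries genuinely depend on $n$ through Riemann theta functions and are quasi-periodic in $n$, so a direct computation is not available. The plan is the RHP-theoretic argument sketched in the preamble of this section: express $\det[\frakP]$ itself as (a product of values of) the solution of an auxiliary RHP obtained from the global parametrix by a Darboux-type transformation that inserts poles/zeros at the points $c_k$ (with multiplicity $m_k$) and $d_i$ (with multiplicity $n_i$). The non-vanishing of $\det[\frakP]$ is then equivalent to the unique solvability of this transformed RHP, which one establishes by a vanishing-lemma argument: a homogeneous solution would give, via the usual Schwarz-reflection/Cauchy-pairing trick on the jump contour, a contradiction with the positivity of the jump matrices (the jumps of the global parametrix are the standard ones and their transformation under the Darboux step keeps the relevant positive-definiteness). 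The uniform-in-$n$ bound on $\det[\frakP]$ and its reciprocal then follows because the transformed RHP depends on $n$ only through the quasi-periodic theta-data, which ranges over a compact set (the Jacobian torus), together with the continuity of the RHP solution in that data; since the solution never degenerates, a compactness argument gives bounds away from $0$ and $\infty$.

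The main obstacle I expect is precisely Step three: making rigorous the claim that $\det[\frakP^{(c_1,m_1),\dots}_{(d_1,n_1),\dots}]$ equals (up to explicit nonzero factors) a solution of a well-posed Darboux-transformed RHP, and then proving that RHP has no nontrivial homogeneous solution. In the confluent case ($m_j>1$ or $n_j>1$) the Darboux transformation must insert higher-order zeros, which complicates the vanishing lemma. A cleaner route, which I would actually pursue first, is to avoid the confluent RHP by a limiting argument: prove part~\ref{enu:prop:non-vanishing:a} (the simple case) by the RHP method, then obtain part~\ref{enu:prop:non-vanishing:b} by letting the $c$'s (resp.\ $d$'s) within each cluster coalesce and dividing by the appropriate Vandermonde factor $\prod_{c_i\to c_k,\,i<k}(w(c_i)-w(c_k))$, using the analyticity of $\M_{j,n},\tilde\M_{j,n}$ on $(\redge,\infty)$ to identify the limit with the Wronskian-type matrix \eqref{eq:defn_of_frakP_confluent}; the sign and the uniform bounds pass to the limit since, by part~\ref{enu:prop:non-vanishing:a} applied with nearby coalescing points, the normalized determinant stays bounded away from zero. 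This reduces the whole proposition to the single non-confluent RHP non-vanishing statement plus elementary limiting arguments.
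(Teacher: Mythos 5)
Your overall architecture coincides with the paper's: the entries are read off the global parametrix RHP, the determinant is tied to a Darboux-type transformation of that RHP, and the confluent case \ref{enu:prop:non-vanishing:b} is obtained from \ref{enu:prop:non-vanishing:a} by coalescence and \lHopital. The place where you diverge, and where the proposal has a genuine gap, is precisely the step you flag as the main obstacle. First, the link between $\det[\frakP^{c_1,\cdots,c_p}_{d_1,\cdots,d_q}]$ and RHP data is not a single transformation: the paper builds it one point at a time, defining intermediate matrices $(\Mk_k)^{c_1,\cdots,c_s}_{d_1,\cdots,d_t}$ out of determinants of the entries $\xi^{(i)}_k,\eta^{(i)}_k$, proving (Lemma~\ref{lemma:construction_of_RHP_big}) that each such matrix is a diagonal factor times the solution $\Mkh$ of the RHP with rational jump~\eqref{eq:defn_of_v_fh}, and iterating to get the exact product formula~\eqref{eq:ultimate_formula_of_det_Q}. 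Without some such identity your phrase ``express $\det[\frakP]$ as a product of values of the solution of an auxiliary RHP'' is an assertion, not a proof. Second, and more importantly, what the argument ultimately needs is not solvability of the transformed RHP but \emph{pointwise} non-vanishing and sign-definiteness of specific entries of its solution at the points $c_k,d_i\in(\redge,\infty)$, i.e.\ $[\Mkh]_{11}(x)>0$ and $-i[\Mkh]_{12}(x)>0$ there, uniformly in $n$. A vanishing-lemma argument establishes existence and uniqueness of the solution; it says nothing about whether a given entry vanishes at a given point off the contour. Likewise, your compactness-on-the-Jacobian-torus argument for the uniform bounds presupposes that ``the solution never degenerates,'' which is exactly the non-vanishing statement to be proved. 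The paper closes this by solving the transformed RHP explicitly: a scalar multiplicative RHP for $D(z)$ with real phase constants $t_j$ (determined by the real linear system~\eqref{eq:system_of_linear_equations}) conjugates $\vfh$ back to the standard jump, so the solution is the usual theta-function parametrix with $\Omega_j$ shifted by real constants, and the non-vanishing of the theta function at real arguments (Deift--Its--Zhou) plus $D(x)>0$ on $(\redge,\infty)$ delivers both the signs and the uniform two-sided bounds.

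Two smaller points. Your $N=0$ confluent-Vandermonde computation is fine as far as it goes, but in the paper it plays no logical role: the single product formula~\eqref{eq:ultimate_formula_of_det_Q} handles all $N\ge 0$ at once, and the sign is read off from the signs of the individual factors rather than from a Vandermonde. For part \ref{enu:prop:non-vanishing:b}, your coalescence argument as stated (``by part \ref{enu:prop:non-vanishing:a} applied with nearby coalescing points, the normalized determinant stays bounded away from zero'') needs uniformity of the part-\ref{enu:prop:non-vanishing:a} bounds in the configuration of points, which you have not established; the paper instead takes the limit inside the exact identity~\eqref{eq:ultimate_formula_of_det_Q}, whose right-hand side remains a product of well-defined, nonzero RHP solution values because the rational jump~\eqref{eq:defn_of_v_fh} with repeated factors is still positive on $J$ and the scalar $D$-construction goes through unchanged.
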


Even though Proposition \ref{prop:non-vanishing} \ref{enu:prop:non-vanishing:a} is a special case of Proposition \ref{prop:non-vanishing} \ref{enu:prop:non-vanishing:b}, we state these results separately for the ease of citation. 

 The idea of this proof is motivated by the paper \cite{Baik-Deift-Strahov03} which evaluates the orthogonal polynomials and their Cauchy transforms with respect to a weight which is a multiplication 
of a given weight  by a rational function. 
This procedure bears resemblance to the Darboux transformation in spectral theory.

\begin{rmk}
In this section, we use the abbreviation `$f_n  \asymp O(1)$ uniformly in $n$' to mean that for a sequence  $f_n$, both $f_n$ and $\frac1{f_n}$ are bounded uniformly in $n$. 
\end{rmk}

\subsection{Proof of Proposition \ref{prop:non-vanishing}}

We first prove part \ref{enu:prop:non-vanishing:a}.

Let $J := \bigcup^N_{j=0} (b_j, a_{j+1})$, $b_0 < a_1 < \cdots < a_{N+1}$, 
be the support of the equilibrium measure given in \eqref{eq:defn_of_J}.
From \cite[Formulas (311) and (312)]{Baik-Wang10a}, 
\begin{equation} \label{eq:relation_of_cal_M_and_normal_M0}
  \M_{k,n}(z) = C_{k,n} \cdot [\Mk_k ]_{11}(z) , \quad \tilde{\M}_{k,n}(z) = C_{k,n} \cdot [\Mk_k]_{12}(z)
\end{equation}
for any $z \in \compC \setminus (b_0, a_{N+1})$, where 
the constant $C_{k,n} :=\hat{\gamma}_{n-k} e^{n\ell/2}$ and the $2\times 2$ matrix $\Mk_k(z):= M_{k,n}^{(\infty)}(z)$ satisfy the following properties. 

First, the positive number $\hat{\gamma}_{n-k}$ is defined in \cite[Formula (304)]{Baik-Wang10a} 
in terms of the Riemann theta function $\theta$.
This particular theta function satisfies the property that $\theta(V)\neq 0$ for all real vector $V$ from (the proof of) \cite[Formula (3.38)]{Deift-Its-Zhou97}. 
Hence due to the periodicity of the Riemann theta function, $|\theta(V)|$ is uniformly bounded below and above for real vectors $V$. 
Since all the arguments of the Riemann theta functions in the definition of $\hat{\gamma}_{n-k}$ are real, we find that 
\begin{equation} \label{eq:hat_gamma_asym_O(1)}
  	C_{k,n} \asymp O(1)
\end{equation}
uniformly in $n$.

The matrix $\Mk_k(z):=M_{k,n}^{(\infty)}(z)$ is explicitly defined in 
\cite[Formulas (300) and (301)]{Baik-Wang10a}) in terms of a Riemann theta function. 
However, we do not use this formula; instead  we use the following Riemann--Hilbert characterization 
given in \cite[Formulas (295)--(297)]{Baik-Wang10a}.
Let $\vf(x)$  for $x\in \Sigma:=(b_0, a_{N+1})\setminus\{b_1, \cdots, b_N, a_1, \cdots, a_N\}$ 
be the jump matrix defined by 
\begin{equation} \label{eq:defn_of_v_f}
  	\vf(x) :=
  \begin{cases}
    \begin{bmatrix}
      e^{-in\Omega_j} & 0 \\
      0 & e^{in\Omega_j}
    \end{bmatrix}, &
    \textnormal{for $x \in (a_j, b_j)$, $j = 1, \cdots, N$,} \\
    \begin{bmatrix}
      0 & 1 \\
      -1 & 0
    \end{bmatrix}, & \textnormal{for $x \in J$,}
  \end{cases}
\end{equation}
where $\Omega_j$, $j=1, \cdots, N$, are real constants defined in \cite[Formula (1.21)]{Deift-Kriecherbauer-McLaughlin-Venakides-Zhou99} (see also \cite[Table 1]{Baik-Wang10a}).
Then for $k\in\mathbb{Z}$, $\Mk_k(z):=M_{k,n}^{(\infty)}(z)$ solves the following RHP:
\begin{equation}\label{eq:first_simple_RHP}
\begin{cases}
  	&\Mk_k(z) \textnormal{ is analytic in $\compC \setminus \overline{\Sigma}$ and is continuous up to the boundary},  \\
  	&[\Mk_k]_+(x) = [\Mk_{k}]_-(x) \vf(x) \textnormal{ for } x \in \Sigma, \\
  	&\Mk_k(z) = [I_{2 \times 2} + O(z^{-1})]
  \begin{bmatrix}
    z^{-k} & 0 \\
    0 & z^k
  \end{bmatrix} \textnormal{ as } z \to \infty.
\end{cases}
\end{equation}
Let us denote 
\begin{equation}
  	\Mk_{k}(z) =
  \begin{bmatrix}
    \xi^{(1)}_{k}(z) & \eta^{(1)}_{k}(z) \\
    \xi^{(2)}_{k}(z) & \eta^{(2)}_{k}(z)
  \end{bmatrix}
\end{equation}
so that~\eqref{eq:relation_of_cal_M_and_normal_M0} become
\begin{equation} \label{eq:relation_of_cal_M_and_normal_M}
 	 \M_{k,n}(z) = C_{k,n} \xi^{(1)}_k (z) , \quad \tilde{\M}_{k,n}(z) = C_{k,n} \eta^{(1)}_k(z).
\end{equation}
Note that even though we do not explicitly indicate it, 
$ \xi^{(1)}_k (z)$ and $\eta^{(1)}_k(z)$ depend on $n$. 

From the hypothesis of Proposition \ref{prop:non-vanishing}\ref{enu:prop:non-vanishing:a}, $c_1, \cdots, c_p$ are distinct real numbers and $d_1, \cdots, d_q$ is another set of distinct real numbers, all in $(\redge,\infty)$. 
For integers $0 \leq s \leq p$ and $0 \leq t \leq q$, 
let $(\Mk_{k})^{c_1, \cdots, c_s}_{d_1, \cdots, d_t}(z)$ be the $2 \times 2$ matrix   whose entries are defined by, for each $i = 1,2$, 
\begin{multline} \label{eq:the_first_column_of_big_M}
  [(\Mk_k)^{c_1, \cdots, c_s}_{d_1, \cdots, d_t}]_{i1}(z) := \prod^s_{j=1} \frac{1}{c_j - z} \\
  \times \det
  \begin{pmatrix}
    \xi^{(i)}_{k-s}(c_1) & \cdots & \xi^{(i)}_{k-s}(c_s) & \eta^{(i)}_{k-s}(d_1) & \cdots & \eta^{(i)}_{k-s}(d_t) & \xi^{(i)}_{k-s}(z) \\
    \xi^{(i)}_{k-s+1}(c_1) & \cdots & \xi^{(i)}_{k-s+1}(c_s) & \eta^{(i)}_{k-s+1}(d_1) & \cdots & \eta^{(i)}_{k-s+1}(d_t) & \xi^{(i)}_{k-s+1}(z) \\
    \vdots & & \vdots & \vdots & & \vdots & \vdots \\
    \xi^{(i)}_{k+t}(c_1) & \cdots & \xi^{(i)}_{k+t}(c_s) & \eta^{(i)}_{k+t}(d_1) & \cdots & \eta^{(i)}_{k+t}(d_t) & \xi^{(i)}_{k+t}(z)
  \end{pmatrix},
\end{multline}
and
\begin{multline} \label{eq:the_second_column_of_big_M}
  [(\Mk_k)^{c_1, \cdots, c_s}_{d_1, \cdots, d_t}]_{i2} (z) := \prod^t_{j=1} \frac{1}{d_j - z} \\
  \times \det
  \begin{pmatrix}
    \xi^{(i)}_{k-s}(c_1) & \cdots & \xi^{(i)}_{k-s}(c_s) & \eta^{(i)}_{k-s}(d_1) & \cdots & \eta^{(i)}_{k-s}(d_t) & \eta^{(i)}_{k-s}(z) \\
    \xi^{(i)}_{k-s+1}(c_1) & \cdots & \xi^{(i)}_{k-s+1}(c_s) & \eta^{(i)}_{k-s+1}(d_1) & \cdots & \eta^{(i)}_{k-s+1}(d_t) & \eta^{(i)}_{k-s+1}(z) \\
    \vdots & & \vdots & \vdots & & \vdots & \vdots \\
    \xi^{(i)}_{k+t}(c_1) & \cdots & \xi^{(i)}_{k+t}(c_s) & \eta^{(i)}_{k+t}(d_1) & \cdots & \eta^{(i)}_{k+t}(d_t) & \eta^{(i)}_{k+t}(z)
  \end{pmatrix}.
\end{multline}

Now we proceed to prove Proposition~\ref{prop:non-vanishing}\ref{enu:prop:non-vanishing:a} as follows. 
We consider only the case when $q > 0$. The proof is completely analogous when $q=0$. 
When $q>0$, from~\eqref{eq:relation_of_cal_M_and_normal_M} and~\eqref{eq:the_second_column_of_big_M}, we have 
  \begin{equation} \label{eq:relation_between_detQ_and_big_RHP_det}
    \det [\frakP^{c_1, \cdots, c_p}_{d_1, \cdots, d_q}] = (-1)^{pq} \left( \prod^{p+q}_{j=1} C_{j,n} \right) \left( \prod^{q-1}_{\ell=1} (d_{\ell} - d_q) \right) 
    ( (\Mk_{p+1})^{c_1, \cdots, c_p}_{d_1, \cdots, d_{q-1}} )_{12}(d_q).
  \end{equation}
To show that $\det [\frakP^{c_1, \cdots, c_p}_{d_1, \cdots, d_q}] \asymp O(1)$, 
we only need to prove that 
$[ (\Mk_{p+1})^{c_1, \cdots, c_p}_{d_1, \cdots, d_{q-1}} ]_{12}(d_q) \asymp O(1)$ uniformly in $n$
due to~\eqref{eq:hat_gamma_asym_O(1)}.
We prove this by showing that  
for all $s \in \{ 0, 1, \cdots, p \}$ and $t \in \{ 0, 1, \cdots, q \}$,
\begin{equation} \label{eq:M_k^upper_lower_asymp_O(1)}
  [ (\Mk_k)^{c_1, \cdots, c_s}_{d_1, \cdots, d_t} ]_{i,j}(x) \asymp O(1) 
\end{equation}
uniformly in $n$ for each $i,j = 1,2$, for each integer $k$,  and for each real number $x \in (\redge, \infty)$, using an induction in $s$ and $t$.

When $s = t = 0$, $\Mk_k(z)$ has an explicit formula  in terms of the Riemann theta function $\theta$ \cite[Formulas (300) and (301)]{Baik-Wang10a}. Note that for $x\in (\redge, \infty)$, $u_{\pm}(x)$ is a real vector by the construction of $u$ defined in \cite[Formula (1.29)]{Deift-Its-Zhou97} and \cite[Formula (1.29)]{Deift-Kriecherbauer-McLaughlin-Venakides-Zhou99}. Since all the arguments of the Riemann theta functions are real vectors, we find, as in the discussion above~\eqref{eq:hat_gamma_asym_O(1)}, that $[\Mk_k]_{ij}(x) \asymp O(1)$ 
for each $i,j = 1,2$, for each integer $k$,  and for each real number $x \in (\redge, \infty)$

\medskip

To complete the induction step, it is enough to show that 
if~\eqref{eq:M_k^upper_lower_asymp_O(1)} holds for $s = \ell$ and $t = \ell'$, 
then it  holds for $s = \ell+1, t = \ell'$, and also for $s = \ell, t = \ell'+1$.
Recall that the $(j,\ell)$-minor of a square matrix is the determinant of the matrix formed by removing the $j$-th row and $\ell$-th column from the original matrix. 
For  $i=1,2$, we denote the $(1, s+t+1)$-minor of the matrix on the right-hand side of \eqref{eq:the_first_column_of_big_M} by $\mn_{i1}$.
We also denote  the $(s+t+1, s+t+1)$-minor of the matrix on the right-hand side of \eqref{eq:the_second_column_of_big_M} by $\mn_{i2}$. 
It is easy to see from the definition that and when $t > 0$, 
\begin{align}
  	\mn_{i1} = & [ (\Mk_{k+1})^{c_1, \cdots, c_s}_{d_1, \cdots, d_{t-1}} ]_{i2}(d_t) \prod^{t-1}_{j=1} (d_j - d_t), \label{eq:defn_of_A_i1_second} \\
  	\mn_{i2} = & [ (\Mk_k)^{c_1, \cdots, c_s}_{d_1, \cdots, d_{t-1}} ]_{i2}(d_t)  \prod^{t-1}_{j=1} (d_j - d_t), \label{eq:defn_of_A_i2_second}
\end{align}
and when $t = 0$,
\begin{align}
  	\mn_{i1} = & [ (\Mk_k)^{c_1, \cdots, c_{s-1}} ]_{i1}(c_s) \prod^{s-1}_{j=1} (c_j - c_s), \label{eq:defn_of_A_i1_first} \\
  	\mn_{i2} = & [ (\Mk_{k-1})^{c_1, \cdots, c_{s-1}} ]_{i1}(c_s)  \prod^{s-1}_{j=1} (c_j - c_s).\label{eq:defn_of_A_i2_first}
\end{align}

Now let $(\Mkh_{k})^{c_1, \cdots, c_s}_{d_1, \cdots, d_t}(z)$ be the solution to the RHP~\eqref{eq:first_simple_RHP} where the jump matrix is changed to $\vfh(x)$ which is given by $\vfh(x)=\vf(x)$ for $x\in \Sigma\setminus J$, and   
\begin{equation} \label{eq:defn_of_v_fh}
  \vfh(x) =
  \begin{bmatrix}
    0 &  \frac{(c_1-x) \cdots (c_s-x)}{(d_1-x) \cdots (d_t-x)} \\
    - \frac{(d_1-x) \cdots (d_t-x)}{(c_1-x) \cdots (c_s-x)} & 0
  \end{bmatrix},
  \quad \textnormal{for $x \in J$.}
\end{equation}
The existence of the solution to this RHP is given in the next subsection. 
The uniqueness follows from the fact that $\det \vfh\equiv 1$.

\begin{lemma} \label{lemma:construction_of_RHP_big} 
  \begin{enumerate}[label={(\alph*)}]
  \item
    For $s>0$, if $((\Mk_k)^{c_1, \cdots, c_{s-1}})_{11}(c_s) \neq 0$ and $((\Mk_{k-1})^{c_1, \cdots, c_{s-1}})_{21}(c_s) \neq 0$, then 
    \begin{equation} \label{eq:lemma:construction_of_RHP_big:1.0}
      (\Mk_{k})^{c_1, \cdots, c_s}(z) = \diag(\mn_{11}, \mn_{22}) (\Mkh_{k})^{c_1, \cdots, c_s}(z),
    \end{equation}
    where $\mn_{11}$ and $\mn_{22}$ are given in \eqref{eq:defn_of_A_i1_first} and \eqref{eq:defn_of_A_i2_first}.
  \item \label{enu:assumption_a_of_A_ii}
    For  $t > 0$, if $((\Mk_{k+1})^{c_1, \cdots, c_s}_{d_1, \cdots, d_{t-1}})_{12}(d_t) \neq 0$ and $((\Mk_{k})^{c_1, \cdots, c_s}_{d_1, \cdots, d_{t-1}})_{22}(d_t) \neq 0$, then 
    \begin{equation} \label{eq:lemma:construction_of_RHP_big:1}
      (\Mk_{k})^{c_1, \cdots, c_s}_{d_1, \cdots, d_t}(z) = (-1)^t \diag( \mn_{11}, \mn_{22}) (\Mkh_{k})^{c_1, \cdots, c_s}_{d_1, \cdots, d_t}(z),
    \end{equation}
    where $\mn_{11}$ and $\mn_{22}$ are given in \eqref{eq:defn_of_A_i1_second} and \eqref{eq:defn_of_A_i2_second}.
  \end{enumerate}
\end{lemma}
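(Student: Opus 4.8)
The plan is to prove this lemma by a Darboux-type transformation argument relating the "decorated" matrix $(\Mk_k)^{c_1,\dots,c_s}_{d_1,\dots,d_t}$ to the solution $(\Mkh_k)^{c_1,\dots,c_s}_{d_1,\dots,d_t}$ of the modified RHP with jump $\vfh$. The idea is that the determinantal formulas \eqref{eq:the_first_column_of_big_M} and \eqref{eq:the_second_column_of_big_M} produce, column by column, a matrix built out of shifted solutions $\Mk_{k-s},\dots,\Mk_{k+t}$ of the original RHP, and the Cramer-type structure of those determinants, together with the prefactors $\prod(c_j-z)^{-1}$ and $\prod(d_j-z)^{-1}$, precisely converts the ``constant'' off-diagonal jump $\begin{bmatrix}0&1\\-1&0\end{bmatrix}$ on $J$ into the rational jump $\vfh$ on $J$, while leaving the jumps on the gaps $(a_j,b_j)$ untouched because those jumps are diagonal and commute with the column operations. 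So the strategy is: verify that $(\Mk_k)^{c_1,\dots,c_s}_{d_1,\dots,d_t}$, suitably normalized by the diagonal matrix $\diag(\mn_{11},\mn_{22})$, satisfies the three defining properties of the RHP for $(\Mkh_k)^{c_1,\dots,c_s}_{d_1,\dots,d_t}$, and then invoke uniqueness (which holds since $\det\vfh\equiv1$).

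Concretely, I would argue by induction on $s+t$, treating the two items in parallel. Assume the statement holds for $(s-1,t)$ and for $(s,t-1)$; this is exactly what lets us re-express the first-column determinant in \eqref{eq:the_first_column_of_big_M} and the second-column determinant in \eqref{eq:the_second_column_of_big_M} via Laplace expansion along the last column, with the cofactors $\mn_{i1},\mn_{i2}$ identified in \eqref{eq:defn_of_A_i1_second}--\eqref{eq:defn_of_A_i2_first} as lower-order decorated matrices. Under the non-vanishing hypotheses on $\mn_{11},\mn_{22}$ (equivalently, on $((\Mk_{k+1})^{\cdots}_{\cdots})_{12}(d_t)$ and $((\Mk_k)^{\cdots}_{\cdots})_{22}(d_t)$, and similarly for the $c$-case), one checks the three RHP conditions for $\widehat M := (-1)^t\diag(\mn_{11},\mn_{22})^{-1}(\Mk_k)^{c_1,\dots,c_s}_{d_1,\dots,d_t}$: analyticity away from $\overline\Sigma$ is clear since each entry is a polynomial combination of analytic functions divided by the polynomials $\prod(c_j-z)$ or $\prod(d_j-z)$, and the apparent poles at $z=c_j$ or $z=d_j$ are removable because the corresponding determinant has two equal columns there (this is the key local cancellation); the asymptotics as $z\to\infty$ reduce to the $z^{-k}$/$z^k$ normalization of $\Mk_k$ together with the degree count coming from the $s$ (resp. $t$) extra rows shifted by $-s$ (resp. $+t$) and the rational prefactor; and the jump relation on $\Sigma$ is verified by applying the jump of each $\Mk_{k+j}$ in the determinant. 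The diagonal jumps on the gaps are unchanged; on $J$, the swap-and-sign jump applied to every row, combined with the interchange of the two determinantal columns and the ratio of the prefactors $\prod(c_j-x)/\prod(d_j-x)$, yields exactly $\vfh(x)$.

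The main obstacle, and the place where the bookkeeping has to be done carefully, is the jump computation on $J$: one must track how the off-diagonal jump matrix $\begin{bmatrix}0&1\\-1&0\end{bmatrix}$ acts simultaneously on all the columns appearing inside the $(s+t+1)\times(s+t+1)$ determinants, show that the $\xi$-columns become $\eta$-columns and vice versa (with the correct signs from $-1$), and then re-sort the columns back into the canonical order, picking up a permutation sign that must be reconciled with the overall $(-1)^t$ and with the ratio of prefactors evaluated on $J$. A secondary technical point is verifying that the removable-singularity claim at $z=c_j,d_j$ is genuinely removable and not merely finite — this follows because evaluating the last column at $z=c_j$ (resp. $d_j$) makes it coincide with an earlier column, so the numerator determinant vanishes to first order and exactly cancels the simple pole from the prefactor. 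Once these checks are in place, uniqueness of the RHP solution (guaranteed by $\det\vfh\equiv 1$ and the standard Liouville argument) forces \eqref{eq:lemma:construction_of_RHP_big:1.0} and \eqref{eq:lemma:construction_of_RHP_big:1}; the existence of $(\Mkh_k)^{c_1,\dots,c_s}_{d_1,\dots,d_t}$ needed to even state the identity is deferred to the next subsection as indicated.
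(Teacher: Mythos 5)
Your strategy --- check that the determinantally defined matrix, normalized by $(-1)^t\diag(\mn_{11},\mn_{22})$, satisfies the three defining conditions of the RHP with jump $\vfh$ and then invoke uniqueness --- is exactly what the paper intends; its entire proof of this lemma is the sentence ``This is straightforward to check,'' so your outline supplies the missing verification, and the removable-singularity argument at $z=c_j,d_j$ and the identification of the cofactors with $\mn_{i1},\mn_{i2}$ via expansion along the last column are both correct. One correction, though, precisely at the step you single out as the main obstacle: the jump computation on $J$ is much simpler than you describe, because in each determinant in \eqref{eq:the_first_column_of_big_M}--\eqref{eq:the_second_column_of_big_M} only the \emph{last} column depends on $z$. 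The columns evaluated at $c_1,\dots,c_s,d_1,\dots,d_t\in(\redge,\infty)$ are constants (evaluations at points off $\overline{\Sigma}$, where every $\Mk_{k'}$ is analytic) and have no boundary values, so there is no simultaneous transformation of all columns, no re-sorting, and no permutation sign to reconcile. Using $\xi^{(i)}_{k',+}=-\eta^{(i)}_{k',-}$ and $\eta^{(i)}_{k',+}=\xi^{(i)}_{k',-}$ on $J$ in the last column alone, the $+$ boundary value of the first column of the decorated matrix equals $-\prod_j(d_j-x)\big/\prod_j(c_j-x)$ times the $-$ boundary value of its second column, and symmetrically for the other column, which is exactly the jump $\vfh$ in \eqref{eq:defn_of_v_fh}. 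The factor $(-1)^t$ in part (b) does not come from the jump at all; it comes from the normalization at infinity, where the prefactor $\prod_{j=1}^{s}(c_j-z)^{-1}\sim(-1)^sz^{-s}$ combines with the cofactor sign $(-1)^{1+(s+t+1)}$ of the $(1,s+t+1)$ entry and the leading behavior $\xi^{(1)}_{k-s}(z)\sim z^{s-k}$ to produce $(-1)^t\mn_{11}z^{-k}$. With that adjustment your argument goes through.
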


\begin{proof}
This is straightforward to check. 
\end{proof}

From the RHP, we can show the nonvanishing property of the entries of $ (\Mkh_{k})^{c_1, \cdots, c_s}_{d_1, \cdots, d_t}$. 
 
\begin{lemma} \label{lemma:RHP}
  For any integer $k$, real number $x \in (\redge, \infty)$, $s \in \{ 0, 1, \cdots, p \}$ and $t \in \{ 0, 1,\cdots, q \}$,
\begin{equation} \label{eq:M_hat^upper_asym_O(1)}
  [ (\Mkh_{k})^{c_1, \cdots, c_s}_{d_1, \cdots, d_t} ]_{ij}(x) \asymp O(1), \quad i,j = 1,2,
\end{equation}
uniformly in $n$ and $[ (\Mkh_{k})^{c_1, \cdots, c_s}_{d_1, \cdots, d_t} ]_{11}(x)$, $[ (\Mkh_{k})^{c_1, \cdots, c_s}_{d_1, \cdots, d_t} ]_{22}(x)$, $[ -i(\Mkh_{k})^{c_1, \cdots, c_s}_{d_1, \cdots, d_t} ]_{12}(x)$ and $i[ (\Mkh_{k})^{c_1, \cdots, c_s}_{d_1, \cdots, d_t} ]_{21}(x)$ 
are all positive.
\end{lemma}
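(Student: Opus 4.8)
The plan is to prove both the boundedness ($\asymp O(1)$) and the positivity assertions simultaneously by induction on $s+t$, following the Darboux-type transformation encoded in Lemma~\ref{lemma:construction_of_RHP_big}. The base case $s=t=0$ is exactly the statement already established for $[\Mk_k]_{ij}(x)$ after the explicit theta-function formulas \cite[Formulas (300), (301)]{Baik-Wang10a}: there $[\Mk_k]_{11}(x),[\Mk_k]_{22}(x)$ are real and positive on $(\redge,\infty)$, and $[\Mk_k]_{12}(x),[\Mk_k]_{21}(x)$ are purely imaginary with the stated signs, and all four are $\asymp O(1)$ uniformly in $n$ because all theta-function arguments $u_\pm(x)$ are real vectors. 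Since $(\Mkh_k)^{\emptyset}_{\emptyset}=\Mk_k$, this gives the start of the induction.

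For the inductive step I would use the factorization \eqref{eq:lemma:construction_of_RHP_big:1} (and its analogue \eqref{eq:lemma:construction_of_RHP_big:1.0} when $t=0$). Suppose the lemma holds for $(s,t)$; I want it for $(s,t+1)$, say. From Lemma~\ref{lemma:construction_of_RHP_big}\ref{enu:assumption_a_of_A_ii}, writing $d=d_{t+1}$,
\begin{equation}
(\Mk_k)^{c_1,\cdots,c_s}_{d_1,\cdots,d_{t+1}}(z) = (-1)^{t+1}\diag(\mn_{11},\mn_{22})\,(\Mkh_k)^{c_1,\cdots,c_s}_{d_1,\cdots,d_{t+1}}(z),
\end{equation}
where by \eqref{eq:defn_of_A_i1_second}--\eqref{eq:defn_of_A_i2_second} the factors $\mn_{11},\mn_{22}$ are, up to the explicit nonzero real product $\prod_{j=1}^{t}(d_j-d)$, equal to $[(\Mk_{k+1})^{c_1,\cdots,c_s}_{d_1,\cdots,d_t}]_{12}(d)$ and $[(\Mk_k)^{c_1,\cdots,c_s}_{d_1,\cdots,d_t}]_{22}(d)$. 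By the induction hypothesis applied to $(\Mkh)$ at index $(s,t)$ together with the already-established factorization at level $(s,t)$, both of these are $\asymp O(1)$ uniformly in $n$ and have definite sign (one purely imaginary with known sign, one positive); in particular they are nonzero, which is precisely the hypothesis needed to invoke Lemma~\ref{lemma:construction_of_RHP_big}. Consequently $(\Mkh_k)^{c_1,\cdots,c_s}_{d_1,\cdots,d_{t+1}}$, being obtained from $(\Mk_k)^{c_1,\cdots,c_s}_{d_1,\cdots,d_{t+1}}$ by left multiplication by an invertible diagonal matrix with $O(1)$, nonzero entries, has entries $\asymp O(1)$; and tracking the phases ($(-1)^{t+1}$, the sign of $\prod(d_j-d)$, the imaginary unit coming from the $\mn$'s, and the signs from the induction hypothesis on the right-hand $(\Mk)$ matrix) one checks the four sign claims. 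The case $(s+1,t)$ is handled identically using \eqref{eq:lemma:construction_of_RHP_big:1.0} and \eqref{eq:defn_of_A_i1_first}--\eqref{eq:defn_of_A_i2_first}. One must also verify that the $(\Mk_k)^{\cdots}_{\cdots}$ matrices themselves obey the right sign pattern on $(\redge,\infty)$: this follows by the same induction, since \eqref{eq:the_first_column_of_big_M}--\eqref{eq:the_second_column_of_big_M} express their entries as determinants of $[\Mk]$-entries divided by real products $\prod(c_j-z)$ or $\prod(d_j-z)$, and on $(\redge,\infty)$ these products have a fixed sign.

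The main obstacle is bookkeeping of signs and of the imaginary unit across the recursion: one has to keep straight, at each level, that the $11$- and $22$-entries are real positive while the $12$- and $22$-entries differ by a factor $i$ with a sign that flips in a controlled way as $s$ and $t$ increase, and that the prefactors $(-1)^t$, $\prod_{j}(c_j-z)$, $\prod_j(d_j-z)$, and $\prod_{j}(d_j-d_t)$ each contribute a sign depending on the ordering $c_1<\cdots<c_p$, $d_1<\cdots<d_q$ and on the location of $z$. I would organize this by proving, as the actual induction hypothesis, the sharper statement that on $(\redge,\infty)$ the four quantities $[(\Mkh_k)^{\cdots}_{\cdots}]_{11}$, $[(\Mkh_k)^{\cdots}_{\cdots}]_{22}$, $-i[(\Mkh_k)^{\cdots}_{\cdots}]_{12}$, $i[(\Mkh_k)^{\cdots}_{\cdots}]_{21}$ are positive and $\asymp O(1)$, together with the companion positivity/sign statement for $(-1)^{?}[(\Mk_k)^{\cdots}_{\cdots}]_{ij}$ with an explicitly specified exponent; then each step is a mechanical but careful sign count. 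Once Lemma~\ref{lemma:RHP} is in hand, the uniform nonvanishing of $\det[\frakP^{c_1,\cdots,c_p}_{d_1,\cdots,d_q}]$ in Proposition~\ref{prop:non-vanishing}\ref{enu:prop:non-vanishing:a} follows from \eqref{eq:relation_between_detQ_and_big_RHP_det}, \eqref{eq:hat_gamma_asym_O(1)}, and the $(s,t)=(p,q-1)$, $i=j=2$ case of the lemma, with the explicit sign $(-1)^{pq+p(p-1)/2}(-i)^q$ read off from the accumulated phases; the confluent case \ref{enu:prop:non-vanishing:b} is obtained by letting the $c_j$'s (resp.\ $d_j$'s) coalesce and using \LHopital's rule / the analyticity of the $\M_{j,n}$, since all the estimates above are locally uniform in $c_j,d_j$ on compact subsets of $(\redge,\infty)$.
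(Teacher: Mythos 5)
Your plan is circular and does not close. You want to prove the lemma --- which asserts $\asymp O(1)$ together with sign-definiteness of the entries of $(\Mkh_{k})^{c_1,\cdots,c_s}_{d_1,\cdots,d_t}$ --- by induction on $s+t$ using the factorization of Lemma~\ref{lemma:construction_of_RHP_big}. But that factorization reads
\begin{equation*}
(\Mk_k)^{c_1,\cdots,c_s}_{d_1,\cdots,d_{t+1}} = (-1)^{t+1}\diag(\mn_{11},\mn_{22})\,(\Mkh_k)^{c_1,\cdots,c_s}_{d_1,\cdots,d_{t+1}},
\end{equation*}
so to extract $(\Mkh_k)$ at the new level $(s,t+1)$ you must control both the prefactors $\mn_{11},\mn_{22}$ (which you correctly get from the inductive hypothesis at level $(s,t)$) \emph{and} the entries of $(\Mk_k)^{\cdots}_{\cdots}$ at the new level $(s,t+1)$. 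For the latter you appeal only to the determinantal formulas \eqref{eq:the_first_column_of_big_M}--\eqref{eq:the_second_column_of_big_M} and a sign count on the entries $\xi^{(i)}_\ell,\eta^{(i)}_\ell$. That argument establishes a sign convention and an upper bound, but it does not give the lower bound: a determinant of $O(1)$ entries with definite signs can still vanish or degenerate as $n\to\infty$, and establishing exactly this nonvanishing, uniformly in $n$, is the entire content of Proposition~\ref{prop:non-vanishing}. So your induction silently presupposes what it is supposed to prove.

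The paper's actual proof of Lemma~\ref{lemma:RHP} is not inductive: it solves the Riemann--Hilbert problem for $(\Mkh_{k})^{c_1,\cdots,c_s}_{d_1,\cdots,d_t}$ directly. One observes that this matrix solves the RHP~\eqref{eq:first_simple_RHPq} whose jump on $J$ is $\bigl(\begin{smallmatrix}0 & f \\ -1/f & 0\end{smallmatrix}\bigr)$ with $f(x)=\prod_j(c_j-x)/\prod_j(d_j-x)$, a function positive on $\overline J$ since all $c_j,d_j>\redge$. One then constructs the scalar Szeg\H{o}-type function $D(z)$ solving~\eqref{eq:Drhp}, via the Plemelj representation~\eqref{eq:formula_of_tilde_L} and the nondegenerate linear system~\eqref{eq:system_of_linear_equations} which determines the \emph{real} constants $t_1,\cdots,t_N$. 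Conjugating by $\diag(D,D^{-1})$ as in~\eqref{eq:defination_of_M_f,k_by_tilde_M_f,k} turns $(\Mkh_k)^{\cdots}_{\cdots}$ into a matrix $\tilde{\NN}$ solving the original constant-jump RHP~\eqref{eq:first_almost_simple_RHP}, but with $\Omega_j$ shifted by $-t_j/n$. Because the $t_j$ are real, the Riemann theta arguments remain real, so the explicit theta formula gives $[\tilde{\NN}]_{ij}(x)\asymp O(1)$ with the stated positivity; since $D(x)>0$ on $(\redge,\infty)$, the conclusion for $(\Mkh_k)^{\cdots}_{\cdots}$ follows. The induction you describe is indeed what the paper uses --- but for the \emph{Proposition}, to control $(\Mk_k)^{\cdots}_{\cdots}$, with Lemma~\ref{lemma:RHP} available as an independently established input that breaks the circularity. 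Your proposal omits the crucial $D$-function construction that plays this role.
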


\begin{proof}
The proof will be given in Section \ref{subsec:solution_to_a_generalized_RHP}.
\end{proof}

Thus Lemmas \ref{lemma:construction_of_RHP_big} and \ref{lemma:RHP} imply that
if~\eqref{eq:M_k^upper_lower_asymp_O(1)} holds for $s = \ell$ and $t = \ell'$, 
and so does  for $s = \ell+1, t = \ell'$, and also for $s = \ell, t = \ell'+1$.
The induction step of the proof is complete and we obtain 
$\det [\frakP^{c_1, \cdots, c_p}_{d_1, \cdots, d_q}] \asymp O(1)$.

If we take inductive steps, in particular,  as $(s,t) = (0,0)$, $(1,0)$, $\cdots$, $(p,0)$, $(p,1)$, $\cdots$, $(p, q-1)$,
then we find an explicit formula of  
$[ (\Mk_{p+1})^{c_1, \cdots, c_p}_{d_1, \cdots, d_{q-1}} ]_{12}(d_q)$, 
which implies from~\eqref{eq:relation_between_detQ_and_big_RHP_det} that 
\begin{multline} \label{eq:ultimate_formula_of_det_Q}
  \frac{(-1)^{pq} \left( \prod^{p+q}_{j=1} C^{-1}_{j,n} \right) \det[\frakP^{c_1, \cdots, c_p}_{d_1, \cdots, d_q}]}{\left( \prod_{1 \leq j < k \leq p} (c_j - c_k) \right) \left( \prod_{1 \leq j < k \leq q-1} (d_k - d_j) \right)} \\
  = \prod^q_{t=1} [(\Mkh_{p+1})^{c_1, \cdots, c_p}_{d_1, \cdots, d_{t-1}}]_{12}(d_t) \prod^p_{s=1} [(\Mkh_s)^{c_1, \cdots, c_{s-1}}]_{11}(c_s).
\end{multline}
From this formula and the signs of $[(\Mkh_k)^{c_1, \cdots, c_s}_{d_1, \cdots, d_t}]_{i,j}(x)$ i
n Lemma \ref{lemma:RHP}, 
we find that if $c_j$ and $d_j$ are both in ascending orders, then $(-1)^{pq+p(p-1)/2}(-i)^q \det [\frakP^{c_1, \cdots, c_p}_{d_1, \cdots, d_q}] > 0$. 
This complete a proof of Proposition \ref{prop:non-vanishing}\ref{enu:prop:non-vanishing:a}.

\medskip
We now consider Proposition \ref{prop:non-vanishing}\ref{enu:prop:non-vanishing:b}.
Note that the identity~\eqref{eq:ultimate_formula_of_det_Q} is analytic in $c_j$'s and $d_j$'s. Hence if we take the limit so that some of $c_j$ are identical and some of $d_j$ are identical, 
then by \lHopital's rule, we obtain Proposition \ref{prop:non-vanishing}\ref{enu:prop:non-vanishing:b}.

\subsection{Evaluation of $(\Mkh_k)^{c_1, \cdots, c_s}_{d_1, \cdots, d_t}$}
\label{subsec:solution_to_a_generalized_RHP}

In this section we prove Lemma \ref{lemma:RHP}
by finding an explicit formula of $(\Mkh_k)^{c_1, \cdots, c_s}_{d_1, \cdots, d_t} (z)$, which is obtained by solving an
RHP in terms of a Riemann theta function


We can consider the following slightly more general RHP. 
Let $\Sigma:=(b_0, a_{N+1})\setminus\{b_1, \cdots, b_N, a_1, \cdots, a_N\}$ 
and $J = \bigcup^N_{j=0} (b_j, a_{j+1})$ as in \eqref{eq:first_simple_RHP}. 
Let $f(x)$ be a positive real analytic function on $\bar{J}$.
Let the $2 \times 2$ matrix  $\NN(z)$ be the solution to the following RHP: 
\begin{equation}\label{eq:first_simple_RHPq} 
\begin{cases}
  &\NN(z) \textnormal{ is analytic in $\compC \setminus \overline{\Sigma}$ and is continuous up to the boundary}, \\
  &\NN_+(x) = \NN_-(x)
  \begin{bmatrix}
    e^{-in\Omega_j} & 0 \\
    0 & e^{in\Omega_j}
  \end{bmatrix}
  \textnormal{ for } x \in \Sigma\setminus J, \\
  &\NN_+(x) = \NN_-(x)
  \begin{bmatrix}
    0 & f(x) \\
    -1/f(x) & 0
  \end{bmatrix}
  \textnormal{ for } x \in J, \\
  &\NN(z) = [I_{2 \times 2} + O(z^{-1})]
  \begin{bmatrix}
    z^{-k} & 0 \\
    0 & z^k
  \end{bmatrix}
  \textnormal{ as } z \to \infty.
\end{cases}
\end{equation}
The matrix $(\Mkh_k)^{c_1, \cdots, c_s}_{d_1, \cdots, d_t} (z)$ is the special case of $\NN(z)$ when $f(x)$ is rational. 

We now solve the the above RHP for $\NN(z)$ explicitly. This is done by finding an algebraic transformation of $\NN$ so that the jump matrix on $J$ becomes $\big( \begin{smallmatrix}
      0 & 1 \\
      -1 & 0
    \end{smallmatrix}\big)$ while the jump matrix on $\Sigma\setminus J$ remains similar to the original one except that each $\Omega_j$ changes to a different constant. The asymptotic condition as $z\to\infty$ is unchanged. 
The solution to the resulting RHP is well known \cite{Deift-Kriecherbauer-McLaughlin-Venakides-Zhou99}.
    
For constants $t_1, \cdots, t_N$, let $D(z)$ be a solution to the following scalar RHP:
\begin{equation}\label{eq:Drhp}
\begin{cases}
	&\text{$D(z)$ is analytic in $\compC\setminus \overline{\Sigma}$ and is continuous up to the boundary,}\\
	&\text{$D(z)\neq 0$ for all $z\in \compC\setminus \overline{\Sigma}$,}\\
  	&D_+(x) D_-(x) = f(x) \quad \textnormal{for } x \in J, \\
  	&D_+(x) = D_-(x) e^{it_j}  \quad \textnormal{for } x \in (a_j, b_j), \quad j = 1, \cdots, N, \\
  	&D(\infty):=\lim_{z \to \infty} D(z) \textnormal{ exists and is non-zero.} 
\end{cases}
\end{equation}
For most choices of $t_1, \cdots, t_N$, there is no solution to this RHP. Below we construct a (unique) array of $t_1, \cdots, t_N$ for which $D(z)$ exists. Note that $D(z)$ is unique if it exists. 

Set 
\begin{equation} \label{eq:G_by_D}
  	L(z) = \log D(z),\qquad z\in \compC\setminus\Sigma
\end{equation}
where $\log$ is defined on  the principal branch of logarithm. The RHP for $D(z)$ implies that 
$L(z)$ is a well-defined analytic function in $\compC \setminus \Sigma$. 
Set $q(z) := \prod^N_{j=0} (z-b_j)(z-a_{j+1})$. 
Define the square root $\sqrt{q(z)}$ to be analytic in $\compC \setminus \bar{J}$ and satisfy $\sqrt{q(z)} \sim z^N$ as $z \to \infty$. Then the function
\begin{equation} \label{eq:G_tilde_by_G}
  \tilde{L}(z) = \frac{L(z)}{\sqrt{q(z)}}
\end{equation}
satisfies the following scalar RHP: $\tilde{L}(z)$ is analytic in  $z\in \compC\setminus \overline{\Sigma}$, continuous up to the boundary of $\Sigma$, and 
$\tilde{L}(z)= O((z-b_j)^{-1/2}$ and $\tilde{L}(z)= O((z-a_{j+1}))^{-1/2}$ for $j=0, \cdots, N$, and also 
\begin{equation}\label{eq:Ltilde}
\begin{cases}
 	&\tilde{L}_+(x) = \tilde{L}_-(x) + \frac{\log f(x)}{(\sqrt{q(x)})_+} \quad \textnormal{for $x \in J$,}  \\
  	& \tilde{L}_+(x)= \tilde{L}_-(x) + it_j \quad \textnormal{for $x \in (a_j, b_j)$, $j = 1, \cdots, N$,} \\
 	&\tilde{L}(z)= O(z^{-N-1}) \quad \textnormal{as $z \to \infty$.}
\end{cases}
\end{equation}
The additive jump conditions imply that, from the Plemelj formula, 
\begin{equation} \label{eq:formula_of_tilde_L}
  	\tilde{L}(z) = \frac{1}{2\pi i} \left( \sum^N_{j=0} \int^{a_{j+1}}_{b_j} \frac{\log f(s)}{(\sqrt{q(s)}_+)} \frac{ds}{s-z} + \sum^N_{j=1} it_j \int^{b_j}_{a_j} \frac{ds}{s-z} \right) +E(z)
\end{equation}
for an entire function $E(z)$. Now in order to satisfy the asymptotic condition $\tilde{L}(z)= O(z^{-N-1})$ as $z\to \infty$, we must have that $E(z)\equiv 0$ and 
\begin{equation} \label{eq:system_of_linear_equations}
  \sum^N_{j=0} \int^{a_{j+1}}_{b_j} \frac{\log f(s)}{i(\sqrt{q(s)}_+)} s^k ds + \sum^N_{j=1} t_j \int^{b_j}_{a_j} s^k ds = 0 \quad \textnormal{for $k = 0, \cdots, N-1$.}
\end{equation}
We regard \eqref{eq:system_of_linear_equations} as a system of $N$ 
linear equations for $t_1, \cdots, t_N$. 
This system has 
a unique solution since its Jacobian is
\begin{equation} \label{eq:determinant_needed_to_be_nonzero}
  \det \left( \int^{b_j}_{a_j} s^{\ell-1} ds \right)^N_{j,\ell = 1} = \int^{b_1}_{a_1} ds_1 \cdots \int^{b_N}_{a_N} ds_N \det(s^{\ell-1}_j)^N_{j,\ell = 1},
\end{equation}
which is positive. 
For this particular $t_j$, the RHP \eqref{eq:Ltilde} has a solution, and accordingly the RHP \eqref{eq:Drhp} have a solution. 
Note that since $i(\sqrt{q(s)}_+)$ is real for $s\in J$, the above system of equations has real coefficients and hence the solution $t_j$ are real. 
From this and~\eqref{eq:formula_of_tilde_L}, we find that $D(x) > 0$ for all $x \in (\redge, \infty)$.

Set 
\begin{equation} \label{eq:defination_of_M_f,k_by_tilde_M_f,k}
  	\tilde{\NN}(z) := 
  \begin{bmatrix}
    D(\infty)^{-1} & 0 \\
    0 & D(\infty)
  \end{bmatrix} \NN(z)
  \begin{bmatrix}
    D(z) & 0 \\
    0 & D(z)^{-1}
  \end{bmatrix},
\end{equation}
for $z\in \Sigma\setminus J$. It is easy to check using~\eqref{eq:Drhp} that $\tilde{\NN}$ satisfies the same analytic and asymptotic condition as the RHP~\eqref{eq:first_simple_RHPq}
and it satisfies the jump condition  
\begin{equation}\label{eq:first_almost_simple_RHP} 
\begin{cases}
  	&\tilde{\NN}_+(x) = \tilde{\NN}_-(x)
        \begin{bmatrix}
      e^{-i(n\Omega_j-t_j)} & 0 \\
      0 & e^{i(n\Omega_j-t_j)}
    \end{bmatrix}
    \textnormal{ for } x \in \Sigma\setminus J, \\
  	&\tilde{\NN}_+(x) = \tilde{\NN}_-(x)
        \begin{bmatrix}
      0 & 1 \\
      -1 & 0
    \end{bmatrix}
    \textnormal{ for } x \in J.
\end{cases}
\end{equation}
This is the same RHP~\eqref{eq:first_simple_RHP} for $\Mk_k$ with the changes $\Omega_j\mapsto \Omega_j- \frac1{n}t_j$. 
Hence the solution $\tilde{\NN}(z)$ is given by the usual Riemann theta function construction. 
Since $t_j$'s are real, we find, as in the case of $\Mk_k$,  from the property of the theta function 
that for any $x \in (\redge, \infty)$, $[\tilde{\NN}]_{ij}(x) \asymp O(1)$ uniformly in $n$, $i,j = 1,2$. 
Also from the explicit formula of $\tilde{\NN}(z)$, we find that $[\tilde{\NN}]_{11}(x) > 0$, $-i[\tilde{\NN}]_{12}(x) > 0$. Since $D(x) > 0$ for $x \in (\redge, \infty)$, we find by \eqref{eq:defination_of_M_f,k_by_tilde_M_f,k} that $[\NN]_{ij}(x) \asymp O(1)$ uniformly in $n$, $i,j = 1,2$, and $[\NN]_{11}(x) > 0$, $-i[\NN]_{12}(x) > 0$. 

Thus, as a special case, we obtain Lemma \ref{lemma:RHP}.

\section{Proof of Theorem~\ref{thm:alg}}\label{sec:higher}

Theorem~\ref{thm:alg} is an algebraic relation that reduce the higher rank case to the rank one case. 
We give an elementary proof of  this theorem in this section. 
A different, more conceptual proof based on the integrable structure of the Hermitian matrix model with external source
can be found in \cite{Baik-Wang12}.


Since the proof is purely algebraic, 
we drop the dependence on $n$ in the density function~\eqref{eq:generalized_pdf} 
and consider the following matrix model. 
Let $W(x)$ is a nonnegative function on the real line such that $\log W(x)$ grows faster than any linear function as $|x|\to \infty$.
We also assume that the orthonormal polynomials $p_0(x), p_1(x), \cdots$ with respect to the weight $W(x)$ exist. 
Fix the matrix 
$A = \diag(a_1, \cdots, a_\ddd)$, and consider the following measure on the set $\HilH_\ddd$ of 
$\ddd \times \ddd$ Hermitian matrix $M$:
\begin{equation} \label{eq:unscaled_external_source_model}
	f_d (M)dM := \frac{1}{Z} \det(W(M)) e^{\Tr(AM)}dM.
\end{equation}
Here $W(M)$ is defined in terms of the continuous functional calculus of Hermitian matrices, and 
$Z$ is the normalization constant. 
We emphasize that  $\ddd$ is the dimension of both the random matrix $M$ 
and the external source matrix $A$. 

For a subset $E\subset \realR$ and $s \in \compC$, 
define
\begin{equation} \label{eq:defn_of_E_a_dots_E_s0}
  \begin{split}
    \fE_\ddd(a_1, \cdots, a_\ddd; E; s)  := 
    \int_{\HilH^{\ddd}} \prod^{\ddd}_{j=1} (1 - s\chi_E(\lambda_j)) f_d(M) dM,
  \end{split}
\end{equation}
where $\lambda_1, \cdots, \lambda_{\ddd}$ are the eigenvalues of $M$. 
When $A= \diag(a_1, \cdots, a_\m, \underbrace{0, \cdots, 0}_{d-\m})$ for some $\m \leq \ddd$, we 
suppress the zero eigenvalues of the external source matrix $A$ denote~\eqref{eq:defn_of_E_a_dots_E_s0} by
\begin{equation} \label{eq:defn_of_E_a_dots_E_s}
  \begin{split}
    \fE_\ddd(a_1, \cdots, a_\m; E; s) .
  \end{split}
\end{equation}
We also set
\begin{equation} \label{eq:defn_of_bar_E_a_dots_E_s}
  \bar{\fE}_\ddd(a_1, \cdots, a_\m; E; s) := \frac{\fE_\ddd(a_1, \cdots, a_\m; E; s)}{\fE_\ddd( E; s )}.
\end{equation}
where $\fE_\ddd( E; s )$ is~\eqref{eq:defn_of_E_a_dots_E_s0} when $A=0$. 
For a real number $a$, define (cf.~\eqref{eq:gadef})
\begin{equation}\label{eq:gatam}
  \ga_j(a):= \int_{\R} p_j (s) e^{a s} W(s) ds. 
\end{equation}

Theorem~\ref{thm:alg} follows from the following proposition when 
$W(x) = e^{-nV(x)}$ and $A=n\A_\ddd= \diag(n\aaa_1, \cdots, n\aaa_\m, 0, \cdots, 0)$.

\begin{prop}\label{thm:algE}
We have, assuming that $a_1, \cdots, a_\m$ are nonzero and distinct, 
\begin{equation} \label{eq:algE}
  \bar{\fE}_\ddd(a_1, \cdots,a_\m; E; s) 
  = \frac{ \det \big[ \ga_{\ddd-j}(a_k) \bar{\fE}_{\ddd-j+1}(a_k; E; s) \big]
  _{j,k=1}^{\m} }{\det[\ga_{\ddd-j}(a_k)]_{j,k=1}^{\m}}.
\end{equation}
\end{prop}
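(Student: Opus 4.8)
\textbf{Proof plan for Proposition~\ref{thm:algE}.}
The plan is to expand both sides using the orthogonal polynomial / determinantal structure of the Hermitian matrix model with external source, reducing the identity to a Cauchy--Binet / Andr\'eief-type manipulation. First I would write down the Harish-Chandra--Itzykson--Zuber-type formula for the joint eigenvalue density of $M$ under $f_d$ when $A = \diag(a_1,\dots,a_\m,0,\dots,0)$. Since only $\m$ of the external source eigenvalues are nonzero, the standard argument (take the confluent limit of the HCIZ integral in the $d-\m$ coinciding zero eigenvalues) yields a density proportional to a product of two determinants: one Vandermonde-type determinant $\det[\lambda_k^{i-1}]$ for the ``zero block'' rows combined with $\det[e^{a_j\lambda_k}]$ for the ``spiked'' rows, times $\Delta(\lambda)\prod W(\lambda_k)$. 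Concretely, the eigenvalue density is (up to normalization) $\Delta(\lambda)\,\det\bigl[\,\phi_i(\lambda_k)\,\bigr]_{i,k=1}^d \prod_{k=1}^d W(\lambda_k)$ where $\phi_i(\lambda) = \lambda^{i-1}$ for $i=1,\dots,d-\m$ and $\phi_{d-\m+j}(\lambda)=e^{a_j\lambda}$ for $j=1,\dots,\m$. Replacing the monomials $\lambda^{i-1}$ in the first block by the orthonormal polynomials $p_{i-1}(\lambda)$ (an elementary row operation that multiplies the determinant by $\prod \gamma_i^{-1}$, absorbed into normalization), the density becomes $\Delta(\lambda)\,\det\bigl[\Phi_i(\lambda_k)\bigr]\prod W(\lambda_k)$ with $\Phi_i = p_{i-1}$ in the first block and $\Phi_{d-\m+j}=e^{a_j\cdot}$ in the second.

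Next I would compute $\fE_d(a_1,\dots,a_\m;E;s)$ by Andr\'eief's (Heine's) identity: integrating $\Delta(\lambda)\det[\Phi_i(\lambda_k)]\prod W(\lambda_k)(1-s\chi_E(\lambda_k))$ against $d\lambda$ gives, up to the same normalization constant, $\det\bigl[\langle \lambda^{i-1}\Phi_j(\lambda), (1-s\chi_E)W\rangle\bigr]_{i,j=1}^d$ — or, using the antisymmetrized form, a single determinant whose rows for $i\le d-\m$ involve $p_{i-1}$ and whose rows for $i > d-\m$ involve $e^{a_j\lambda}$. The key structural observation is that the matrix splits into a $(d-\m)\times d$ block of ``polynomial moments'' $\langle p_{i-1},(1-s\chi_E)W\rangle$-type entries and an $\m\times d$ block of ``exponential moments'' $\langle e^{a_j\lambda}p_{i-1}(\lambda),(1-s\chi_E)W\rangle$. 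To extract the ratio in \eqref{eq:algE}, I would use the multilinearity of the determinant together with the recursive structure: the $(d-\m)$-fold polynomial part, when $s=0$ or more generally, produces the rank-zero quantity $\fE_d(E;s)$ (itself a determinant of polynomial moments), and the remaining $\m$ exponential rows, after row-reducing against the polynomial rows, produce exactly the entries $\ga_{d-j}(a_k)\bar{\fE}_{d-j+1}(a_k;E;s)$. Here the rank-one quantities $\fE_{d-j+1}(a_k;E;s)$ appear because each rank-one gap-expectation is itself an $(d-j+1)\times(d-j+1)$ determinant of moments with a single exponential row $e^{a_k\lambda}$ — precisely the $(j,k)$-cofactor-type minor arising in the Laplace expansion of the big determinant along the spiked rows.

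Thus the concrete route is: (i) write the big $d\times d$ moment determinant $D(a_1,\dots,a_\m;E;s)$ whose ratio to $D(\cdot;E;s)|_{A=0}$ is $\bar{\fE}_d$; (ii) perform a Laplace (generalized cofactor) expansion along the $\m$ rows containing the exponentials $e^{a_k\lambda}$, grouping by which $\m$ columns are selected — here the distinctness of the $a_k$ guarantees the relevant minors are generic and the expansion is clean; (iii) identify each resulting $(d-\m)\times(d-\m)$ complementary minor of the polynomial block with $\fE_{d-j+1}(E;s)$-type objects and each $\m\times\m$ selected minor with $\det[\ga_{d-j}(a_k)\bar{\fE}_{d-j+1}(a_k;E;s)]$ after normalizing; (iv) divide by the $A=0$ specialization to get the stated ratio. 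I expect the \emph{main obstacle} to be step (iii): correctly matching the combinatorics of the Laplace expansion (which columns pair with which exponential rows) to the nested sequence of dimensions $d, d-1, \dots, d-\m+1$ appearing in $\bar{\fE}_{d-j+1}$, and verifying that the cofactor signs and the normalization constants $\prod\gamma_i$ cancel exactly between numerator and denominator. A secondary technical point is justifying the confluent HCIZ limit rigorously (the $d-\m$ zero eigenvalues of $A$), but this is standard and can be cited. Once the bookkeeping in (iii) is pinned down, the identity \eqref{eq:algE} follows, and Theorem~\ref{thm:alg} is the special case $W=e^{-nV}$, $A = n\A_d$.
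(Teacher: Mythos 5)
Your starting point — the HCIZ joint eigenvalue density and the Andr\'eief reduction to a $\ddd\times\ddd$ moment determinant — matches the paper's opening move, its formula for the eigenvalue density with the two determinants $\det[\lambda_i^{j-1}]$ and $\det[e^{a_i\lambda_j}]/\det[a_i^{j-1}]$. The divergence comes at your steps (ii)--(iii), and I believe what is missing there is structural, not bookkeeping.

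The paper does not Laplace-expand. Its route is operator-theoretic: it invokes the fact from \cite{Baik09} that the correlation kernel $\Ka_\ddd$ of the model with external source is a rank-$\m$ perturbation of the unperturbed Christoffel--Darboux kernel $\Ko_\ddd$, explicitly $\Ka_\ddd = \Ko_\ddd + \sum_{j=1}^\m \hat{\mathbf{w}}_j\otimes(\mathbf{B}^{-1}\hat{\mathbf{t}})_j$ with $\mathbf{B} = [\ga_{\ddd-\m+j-1}(a_k)]_{j,k=1}^\m$, and then applies the standard finite-rank-perturbation identity for Fredholm determinants to reduce $\det(1-sP\Ka_\ddd P)/\det(1-sP\Ko_\ddd P)$ to a single $\m\times\m$ determinant (Lemma~\ref{lem:Dmiddle}) whose entries \emph{all} carry the same kernel $\Ko_\ddd$. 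In your finite moment-determinant picture this corresponds to a Schur-complement / block-determinant manipulation (the ratio $\det M/\det M_0$ equals the determinant of the $\m\times\m$ block of $M_0^{-1}M$ corresponding to the exponential columns), not to a Laplace expansion. A Laplace expansion along the $\m$ exponential columns produces a \emph{sum} over all $\m$-element column subsets, each term a product of an $\m\times\m$ minor and a complementary $(\ddd-\m)\times(\ddd-\m)$ minor; this sum does not collapse to the required single ratio, and moreover the rank-one quantities $\bar{\fE}_{\ddd-j+1}(a_k;E;s)$ in the target already have $\m$ distinct sizes $\ddd,\ddd-1,\dots,\ddd-\m+1$, which no fixed complementary minor arising from a Laplace expansion can deliver simultaneously.

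The second missing ingredient — and it is the real crux — is the recursive dimension shift of Lemma~\ref{lem:Dp}. After the perturbation formula produces an $\m\times\m$ determinant with $\Ko_\ddd$ in every entry, the paper exploits the rank-one step $\Ko_{\ddd-j+1} = \Ko_{\ddd-j}+\psi_{\ddd-j}\otimes\psi_{\ddd-j}$ in a telescoping row-operation argument to replace $\Ko_\ddd$ in the $j$-th row by $\Ko_{\ddd-j+1}$ without changing the determinant. Only after this shift can each entry be recognized, by comparison with the rank-one $\m=1$ case of the same reduction, as $\ga_{\ddd-j}(a_k)\bar{\fE}_{\ddd-j+1}(a_k;E;s)$. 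Nothing in your plan produces the nested chain of kernels $\Ko_\ddd, \Ko_{\ddd-1},\dots,\Ko_{\ddd-\m+1}$. So the obstacle you flagged in (iii) is not an index-matching issue: you need to replace the Laplace expansion with the Schur-complement reduction and, separately, you need an analogue of the per-row dimension-shift lemma, and the proposal as written contains neither.
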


\begin{proof}
The density function of the unordered eigenvalues $\lambda_1, \cdots, \lambda_\ddd\in \R$ of $M$
induced from~\eqref{eq:unscaled_external_source_model} is a symmetric function and is given by 
\begin{equation} \label{eq:800}
	 \frac1{Z'} \det[\lambda_i^{j-1}]_{i,j=1}^d \frac{\det[ e^{a_i\lambda_j}]_{i,j=1}^d}{\det[a_i^{j-1}]_{i,j=1}^d} \prod_{i=1}^{\ddd} W(\lambda_i), \qquad (\lambda_1, \cdots, \lambda_d)\in \R^d
\end{equation}
where $Z'$ is the normalization constant. 
Here the ratio $\frac{\det[ e^{a_i\lambda_j}]}{\det[a_i^{j-1}]} $ is evaluated using \lHopital's rule
as $a_{\m+1}=\cdots = a_d=1$.  
Recall that $a_1, \cdots, a_\m$ are nonzero and distinct by assumption. 
From the usual random matrix theory, the eigenvalues form a
determinantal point process 
whose kernel $\Ka_d$ obtained from the bi-orthonormal system constructed from 
$\{1, x, x^2, \cdots, x^{\ddd-1}\}$ and $\{1, x, \cdots, x^{d-\m-1}, e^{a_1x}, \cdots, e^{a_\m x}\}$. 
Then 
\begin{equation}
	\fE_\ddd(a_1, \cdots, a_\m; E; s)= \det [1-sP\Ka_\ddd P]
\end{equation} 
where  $P$ denote the projection operator on the set $E$.

Let  $\Ko_\ddd(x,y)= \sum_{j=0}^{n-1} p_j(x)p_j(y) W(x)^{1/2}W(y)^{1/2}$
be the usual Christoffel--Darboux kernel when $A=0$. 
Hence $\fE_\ddd(E; s)= \det [1-sP\Ka P]$
and 
$\bar{\fE}_\ddd(a_1, \cdots, a_\m; E; s) =  \frac{\det [1-sP\Ka_\ddd P]}{\det[1-sP\Ko_\ddd P]}$.

Note that the first $d-\m$ terms of the bi-orthogonal functions for $\Ka_\ddd$ 
are the orthonormal polynomials $p_j(x)$. 
It was shown in \cite{Baik09} that $\Ka_\ddd$  is a rank $\m$ perturbation of $\Ko_\ddd$ as follows. 
Define the column vectors
\begin{equation}
\begin{split}
	\hat{\mathbf{t}}(x)  := {}& (p_{\ddd-\m}(x)W(x)^{1/2}, \cdots, p_{\ddd-1}(x)W(x)^{1/2})^t , \\
	\quad \hat{\mathbf{v}}(x)  := {}& (e^{a_1 x}W(x)^{1/2}, \cdots, e^{a_{\m} x}W(x)^{1/2})^t,
\end{split}
\end{equation}
and define an $\m \times \m$ matrix (the second equality follows from~\eqref{eq:gatam})
\begin{equation}\label{e:BBBB}
	\mathbf{B}
	:= \int_{\R} \hat{\mathbf{t}}(s) \hat{\mathbf{v}}(s)^t ds
	=  \big[ \ga_{\ddd-\m+j-1}(a_k) \big]_{j,k=1}^\m.
\end{equation}
Set 
\begin{equation}\label{eq:wintermv}
	\hat{\mathbf{w}}(z) := \hat{\mathbf{v}}(z)- \int_{\R} \Ko_\ddd(s,z) \hat{\mathbf{v}}(s) ds
	= \big[ (1-\Ko_\ddd)_{\R} \hat{\mathbf{v}} \big](z).
\end{equation}
Then (The formula is equivalent to \cite[Theorem 1]{Baik09} once one changes the monic orthogonal polynomials $\pi_j(x)$ to the orthonormal polynomials $p_j(x)$, and conjugate both sides of \cite[Formula (19)]{Baik09} by $W(x)^{1/2}=e^{-\frac12 V(x)}$.)
\begin{equation} \label{eq:kernel_for_external_source_model}
	\Ka_\ddd= \Ko_\ddd + \sum_{j=1}^{\m} 
	\hat{\mathbf{w}}_j \otimes (\mathbf{B}^{-1} \hat{\mathbf{t}})_j.
\end{equation}

We now derive~\eqref{eq:algE} from~\eqref{eq:kernel_for_external_source_model}.
Let $\langle, \rangle$ be the real inner product over $\R$.  
Then~\eqref{eq:kernel_for_external_source_model} implies that
\begin{equation}\label{eq:draK}
\begin{split}
	\bar{\fE}_\ddd(a_1, \cdots, a_\m; E; s) 
	= & \det\bigg[ 1-  s\sum_{j=1}^{\m} ((1-sP\Ko_\ddd P)^{-1}\hat{\mathbf{w}})_j \otimes (\mathbf{B}^{-1} \hat{\mathbf{t}})_jP\bigg] \\
	= & \det \bigg[ \mathbb{I} - s\langle (\mathbf{B}^{-1}\hat{\mathbf{t}}P, (1-sP\Ko_\ddd P)^{-1}\hat{\mathbf{w}}^t \rangle \bigg] \\
	= & \frac{1}{\det[\mathbf{B}]}  \det \bigg[ \mathbf{B} - s\langle \hat{\mathbf{t}},  P(1-sP\Ko_\ddd P)^{-1}\hat{\mathbf{w}}^t \rangle \bigg].
\end{split}
\end{equation}
Set $\psi_k(x):= p_k(x) W(x)^{1/2}$. Then $\hat{\mathbf{t}}_j= \psi_{\ddd-j}$. 
By  using the definition~\eqref{eq:wintermv} of $\hat{\mathbf{w}}$, we find that 
\begin{equation}\label{eq:algFredi-0}
	\bar{\fE}_\ddd(a_1, \cdots, a_\m; E; s) = 
	\frac{1}{\det[\mathbf{B}]}   \det \bigg[ \langle \psi_{\ddd-\m+j-1}, (1 -sP(1-sP\Ko_\ddd P)^{-1}(1-\Ko_\ddd))\hat{\mathbf{v}}_k \rangle \bigg]_{j,k=1}^{\m} 
\end{equation}
where $\hat{\mathbf{v}}_k$ is the $k$th component of $\hat{\mathbf{v}}$. 
By arranging the columns backward and using~\eqref{e:BBBB}, we obtain 

\begin{lemma}\label{lem:Dmiddle}
We have 
\begin{equation}\label{eq:algFredi}
	\frac{\fE_\ddd(a_1, \cdots, a_\m; E; s)}{\fE_\ddd( E; s )}
	= \frac{1}{\det[\ga_{\ddd-j}(a_k)]_{j,k=1}^{\m}}   \det \bigg[ \langle \psi_{\ddd-j}, (1 -sP(1-sP\Ko_\ddd P)^{-1}(1-\Ko_\ddd))\hat{\mathbf{v}}_k \rangle \bigg]_{j,k=1}^{\m} .
\end{equation}
\end{lemma}

Now the following lemma shows that the subscript $\ddd$ of $\Ko_\ddd$ in the right-hand side of~\eqref{eq:algFredi} can be replaced by $\ddd-j+1$.

\begin{lemma}\label{lem:Dp}
We have 
\begin{equation}\label{eq:dettemid}
\begin{split}
	&\det \bigg[ \langle \psi_{\ddd-j}, (1 -sP(1-sP\Ko_\ddd P)^{-1}(1-\Ko_\ddd ))\hat{\mathbf{v}}_k \rangle\bigg]_{j,k=1}^{\m} \\
	&= \det \bigg[ \langle \psi_{\ddd-j}, (1 -sP(1-sP\Ko_{\ddd-j+1}P)^{-1}(1-\Ko_{\ddd -j+1}))\hat{\mathbf{v}}_k \rangle \bigg]_{j,k=1}^{\m}.
\end{split}
\end{equation}
\end{lemma}

\begin{proof}[Proof of Lemma~\ref{lem:Dp}]
We first observe the following general identity: for an operator $A$, if $B=A+f\otimes f$ and if $P$ is a projection, then 
\begin{equation}\label{eq:generalmid0}
\begin{split}
	&(1-sPAP)^{-1}(1-A) - (1-sPBP)^{-1}(1-B)\\
	&= (1-sPAP)^{-1}f\otimes f + ((1-sPAP)^{-1}-(1-sPBP)^{-1})(1-B) \\
	&= 	(1-sPAP)^{-1}f\otimes f - s(1-sPAP)^{-1}Pf\otimes fP(1-sPBP)^{-1}(1-B) \\
	&= 	(1-sPAP)^{-1}f\otimes f(1-sP(1-sPBP)^{-1}(1-B)).
\end{split}
\end{equation}
Hence for any square integrable functions $g$ and $h$, 
\begin{multline}\label{eq:generalmid}
	\langle g, (1-sPAP)^{-1}(1-A) h \rangle = \\
        \langle g,  (1-sPBP)^{-1}(1-B) h \rangle
        +\langle g, (1-sPAP)^{-1}f \rangle\langle f, (1-sP(1-sPBP)^{-1}(1-B))h \rangle.
\end{multline}

Also observe that since $\Ko_k= \psi_0\otimes \psi_0+ \cdots +\psi_{k-1}\otimes \psi_{k-1}$, we have 
$\Ko_{\ddd-j+1}= \Ko_{\ddd-j}+\psi_{\ddd-j}\otimes \psi_{\ddd-j}$.

We denote the matrix on each side of the identity~\eqref{eq:dettemid} as $L$ and $R$. 
Consider $R_{ij}$. 
Applying~\eqref{eq:generalmid} to $A=\Ko_{\ddd-j+1}$, $B=\Ko_{d-j+2}$, $g=\psi_{\ddd-j}$ and $h= \hat{\mathbf{v}}_k$, we obtain 
\begin{equation}
\begin{split}
	R_{jk} &= \langle \psi_{\ddd-j}, (1 -sP(1-sP\Ko_{\ddd-j+2}P)^{-1}(1-\Ko_{\ddd-j+2}))\hat{\mathbf{v}}_k \rangle\\
	&\quad + \langle \psi_{\ddd-j}, (1 -sP(1-sP\Ko_{\ddd-j+1}P)^{-1}) \psi_{\ddd-j+1} \rangle R_{j-1,k}.
\end{split}
\end{equation}
If we apply~\eqref{eq:generalmid} again with $A=\Ko_{\ddd-j+2}$, $B=\Ko_{d-j+3}$, $g=\psi_{\ddd-j}$ and $h:= \hat{\mathbf{v}}_k$, then we obtain 
\begin{equation}
\begin{split}
	R_{jk} &= \langle \psi_{\ddd-j}, (1 -sP(1-sP\Ko_{\ddd-j+3}P)^{-1}(1-\Ko_{\ddd-j+3}))\hat{\mathbf{v}}_k \rangle\\
	&\quad + \langle \psi_{\ddd-j}, (1 -sP(1-sP\Ko_{\ddd-j+2}P)^{-1} )\psi_{\ddd-j+2} \rangle R_{j-2,k}\\
	&\quad + \langle \psi_{\ddd-j}, (1 -sP(1-sP\Ko_{\ddd-j+1}P)^{-1} )\psi_{\ddd-j+1} \rangle R_{j-1,k}.
\end{split}
\end{equation}
Repeating this procedure $j$ times, we obtain that $R_{jk}$ equals $L_{jk}$ plus a linear combination of $R_{j-1, k}, \cdots, R_{1, k}$. This implies that the determinant of $R$ equals the determinant of $L$. 
\end{proof}

For the spiked model of dimension $\ddd-j+1$ with the single spiked eigenvalue $a_k$,~\eqref{eq:algFredi} implies that 
\begin{equation} \label{eq:last_eq_in_proof_8.1}
	\bar{\fE}_{\ddd-j+1}(a_k; E; s)  
	=\frac1{\ga_{\ddd-j}(a_{k}) } \langle \psi_{\ddd-j}, (1 -sP(1-sP\Ko_{\ddd-j+1} P)^{-1}(1-\Ko_{\ddd-j+1} ))\hat{\mathbf{v}}_k \rangle.
\end{equation}
Comparing with the right-hand side of~\eqref{eq:dettemid}, we obtain Proposition~\ref{thm:algE}. 

\end{proof}

\subsubsection*{Acknowledgments}
We would like to thank Marco Bertola, Robbie Buckingham, Seung-Yeop Lee and Virgil Pierce 
for keeping us informed of the progress of their work. 
We would also like to thank Mark Adler and John Harnad for helpful communications
and Percy Deift for his insight that was helpful in the proof of Theorem~\ref{thm:critical1}. 
In addition, we are grateful 
to an  anonymous referee  whose comments helped us improve the exposition of this paper greatly.

\subsubsection*{Funding}

This work of J.B.\ was supported in part by NSF grants DMS075709
and DMS1068646.

\def\cydot{\leavevmode\raise.4ex\hbox{.}}


\end{document}